\newcommand{\ns}{{\mathbb N}} 
\newcommand{\zs}{{\mathbb Z}} 
\newcommand{\cs}{{\mathbb C}} 
\newcommand{\eps}{\varepsilon}
\newcommand{\cA}{\mathcal A}
\newcommand{\cB}{\mathcal B}
\newcommand{\cE}{\mathcal E}
\newcommand{\bcE}{\bar{\mathcal E}}
\def\x{x_{\rm c}}
\def\y{y_{\rm c} }
\def\ys{y^*}
\def\e{{\rm e}}
\def\d{{\rm d}}
\def\ii{i}
\newcommand{\stickbreak}{\mathsf{StickBreak}}
\newcommand{\height}{\mathsf{H}}
\newcommand{\width}{\mathsf{W}}
\newcommand{\bbH}{\mathbb H}
\newcommand{\prneinf}{\PR_{{\rm iSAB}}^{\otimes\mathbb N}}
\newcommand{\prneinfbi}{\PR_{{\rm iSAB}}^{\otimes \mathbb Z}}
\newcommand{\ner}{\mathsf{R}}
\newcommand{\sabset}{{\rm SAB}}
\newcommand{\isab}{{\rm iSAB}}
\newcommand{\prne}{\PR_{{\rm iSAB}}}
\newcommand{\PR}{\mathbb{P}}
\newcommand{\xco}{\mathbf x}
\newcommand{\yco}{\mathbf y}
\newtheorem{Theorem}{Theorem}
\newtheorem{Proposition}[Theorem]{Proposition}
\newtheorem{Corollary}[Theorem]{Corollary}
\newtheorem{Conjecture}[Theorem]{Conjecture}
\newtheorem{Lemma}[Theorem]{Lemma}
\newcommand{\beq}{\begin{equation}}
\newcommand{\eeq}{\end{equation}}
\newcommand{\gf}{generating function}
\newcommand{\gfs}{generating functions}
\newcommand{\saw}{self-avoiding walk}
\newcommand{\saws}{self-avoiding walks}
\def\emm#1,{{\em #1}}
\newcommand{\red}[1]{\textcolor{red}{#1}}
\def\section{\@startsection{section}{1}%
 \z@{.7\linespacing\@plus\linespacing}{.5\linespacing}%
 {\normalfont\bfseries\scshape\centering}}
\def\subsection{\@startsection{subsection}{2}%
  \z@{.5\linespacing\@plus\linespacing}{.5\linespacing}%
  {\normalfont\bfseries\scshape}}
\def\subsubsection{\@startsection{subsubsection}{3}%
 \z@{.5\linespacing\@plus\linespacing}{-.5em}
  {\normalfont\bfseries\itshape}}
\def\qed{$\hfill{\vrule height 3pt width 5pt depth 2pt}$}
\newcommand{\spacebreak}
{\begin{displaymath} \triangleleft \; \lhd \;
\diamond \; \rhd \; \triangleright
  \end{displaymath}}
\begin{document}
\title
[The critical fugacity for surface adsorption of  self-avoiding walks]
{The critical fugacity for surface adsorption of self-avoiding walks on the honeycomb lattice is $\mathbf{1+\sqrt{2}}$}
\author[N. Beaton]{Nicholas R. Beaton}
\author[M. Bousquet-M\'elou]{Mireille Bousquet-M\'elou}
\author[J. de Gier]{Jan de Gier}
\author[H. Duminil-Copin]{Hugo Duminil-Copin}
\author[A. J. Guttmann]{Anthony J. Guttmann}

\address{NRB: Laboratoire d'Informatique de Paris Nord, Universit\'e Paris 13, 93430 Villetaneuse, France}
\email{nicholas.beaton@lipn.univ-paris13.fr}
\address{MBM: CNRS, LaBRI, UMR 5800, Universit\'e de Bordeaux, 
351 cours de la Lib\'eration, 33405 Talence Cedex, France}
\email{mireille.bousquet@labri.fr}
\address{HDC: Section de Math\'ematiques, Universit\'e de Gen\`eve,
Gen\`eve, Switzerland}
\email{hugo.duminil@unige.ch}
\address{JdG, AJG: Department of Mathematics and Statistics, The
  University of Melbourne, VIC 3010, Melbourne, Australia}
\email{jdegier,guttmann@ms.unimelb.edu.au}
%

\thanks{}

\keywords{}
\subjclass[2000]{}

\begin{abstract}
In 2010, Duminil-Copin and Smirnov proved a long-standing conjecture
of Nienhuis, made in 1982, that the growth constant 
of self-avoiding walks on the hexagonal 
({a.k.a.} honeycomb) lattice is $\mu=\sqrt{2+\sqrt{2}}.$ A
key identity used in that proof was later generalised by Smirnov so as
to apply to a general $O(n)$ loop model with $n\in [-2,2]$
(the case $n=0$ corresponding to \saws).

We modify this model by restricting to a half-plane and introducing a
surface fugacity $y$ associated
with boundary sites (also called surface sites), and obtain a generalisation of Smirnov's identity.
The critical value of the surface fugacity was conjectured by
Batchelor and Yung in 1995 to be $y_{\rm c}=1+2/\sqrt{2-n}.$ This
value plays a crucial role in our generalized identity, just as the
value of growth constant did in {Smirnov's identity.}

For the case $n=0$, corresponding to \saws\ interacting with a surface,
we prove the conjectured value of the critical surface
fugacity. A {crucial}
 part of the proof involves demonstrating that
the generating function of self-avoiding bridges of height $T$, taken
at its critical point $1/\mu$, tends to $0$ as $T$ increases, as
predicted from SLE theory. 
\end{abstract}

\date{\today}
\maketitle

\section{Introduction}
\label{sec:intro}

The $n$-vector model, also called $O(n)$ model, introduced by Stanley
in 1968 \cite{S68} is described by the Hamiltonian 
$$
{\mathcal H}(d,n) = -J\sum_{\langle i, j \rangle} {\bf s}_i \cdot{\bf  s}_j ,
$$
where $d$ denotes the dimensionality of the lattice, $i$ and $j$ are
adjacent sites, and ${\bf s}_i$ is an $n$-dimensional vector of
magnitude $\sqrt n$. Writing $x=J/k_B T$, the corresponding
  partition function of this model on a 
{two-dimensional  square domain} with $N^2$ sites is  given by 
\begin{equation}
Z_{N^2} {(x)}
=\int \prod_k \d \mu({\bf s}_k) \prod_{\langle i, j \rangle} w_{ij},\qquad w_{ij}= \e^{x {\bf s}_i \cdot{\bf s}_j}, 
\label{eq:partsum}
\end{equation}
where   $\mu$ is the spherical measure on the $(n-1)$-dimensional
sphere of radius $\sqrt n$, normalised by $\int \d\mu({\bf s})=1$.

When $n=1$ {the} Hamiltonian {above}
describes the Ising model, and when $n=2$ it describes the classical
XY model. Two other interesting limits, which leave a lot to be
desired from a pure mathematical perspective, are the limit $n \to 0,$
in which case one recovers the self-avoiding walk (SAW) model, as
first pointed out by de Gennes~\cite{dG72}; and the limit $n \to -2$,
corresponding to random walks, or more generally to
{a free-field Gaussian}
model, as shown by Balian and Toulouse~\cite{BT73}.

Self-avoiding walks will be central in 
this paper. They 
have been considered as models of long-chain polymers in solution
since the middle of the last century --- see for 
example articles by Orr~\cite{O47} and Flory~\cite{F53}. Since that
time they have been studied and {extended} 
by polymer chemists as models
of polymers; by mathematicians as combinatorial models of pristine
simplicity in their description, yet malevolent difficulty in their
solution;
by computer scientists interested in computational
complexity; and by biologists using them to model properties of DNA
and other biological polymers of interest.  

It is argued in~\cite{DMNS81} that the critical
  behaviour of the  $n$-vector model is unchanged if the   Boltzmann weight $w_{ij}$ in
  \eqref{eq:partsum} is replaced by  $w_{ij}=1+x\, {\bf s}_i \cdot{\bf
    s}_j$. Moreover, this new model  is
  equivalent to a loop model with a weight $n$ attached to 
closed loops~\cite{DMNS81}:
$$
Z_{N^2}(x)=\sum_{\gamma}x^{|\gamma|}n^{\ell(\gamma)},
$$
where {(on the honeycomb lattice)} 
$\gamma$ is a configuration of non-intersecting loops,
$|\gamma|$ is the number 
of edges and $\ell(\gamma)$ is the number of  loops.
{We call this model the $O(n)$ \emm loop model,.}
In the following we consider 
a loop model with a defect,
  {\it i.e.}, a model of closed loops with one \saw\
  component\footnote{Defects correspond to correlation functions of
    the underlying spin model. It follows that the critical point
    remains the same.}. A typical configuration  is
shown in Fig.~\ref{fig:configuration}, left. 

\begin{figure}
\centering
\includegraphics[scale=0.4]{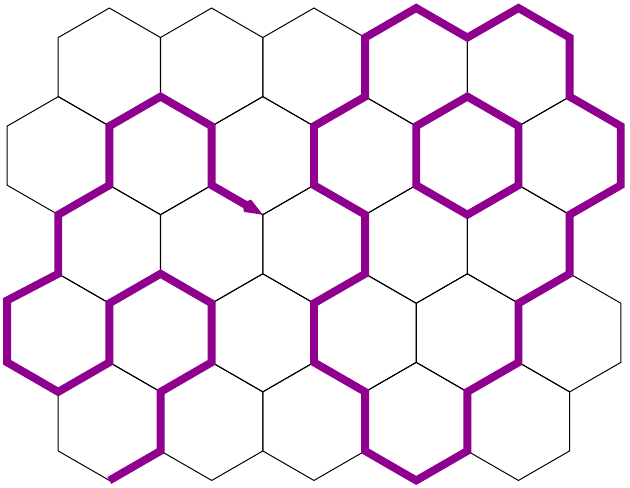}
\hspace{1cm}
\includegraphics[scale=0.87]{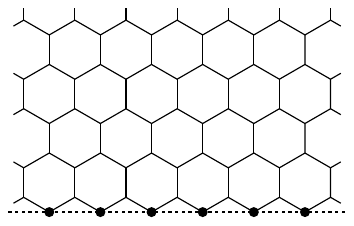}
\caption{\emm Left,: A configuration of the loop model on the honeycomb
  lattice. 
{\emm Right,: The {half-plane and its boundary.}}}
\label{fig:configuration}
\end{figure}

In 1982 Nienhuis \cite{N82} showed that, for $n \in [-2,2],$ the
loop model on the honeycomb lattice could be mapped onto a solid-on-solid model,
from which he was able to derive the critical points and
critical exponents, subject to some plausible assumptions. These
results agreed with the known exponents and critical point for the
Ising model, and they predicted exact values for those models corresponding
to other values of the spin dimensionality $n$. 
In particular, for $n=0$ the 
{critical point} for the honeycomb lattice
SAW model was predicted to be $\x =1/\sqrt{2+\sqrt{2}}$.

This result was finally proved 28 years later by  Duminil-Copin and
Smirnov~\cite{DC-S10}. 
The starting point of their proof is  a
\emm local,\, identity for a ``parafermionic'' observable, 
valid at every vertex of the lattice.  
Then they obtain a \emm global,\,
identity linking several walk \gfs\  by summing over all vertices
of a domain\footnote{A more formal presentation of their proof has 
 been provided by Klazar~\cite{Klaz}.}. 
Smirnov~\cite{Smirnov10} then extended
the local identity to the general honeycomb $O(n)$ loop
 model with $n \in [-2,2]$. This extension 
 provides an alternative way of predicting the
value of the  critical point $\x (n)=1/\sqrt{2+\sqrt{2-n}}$ as
conjectured by Nienhuis.

\medskip
Nienhuis's results were concerned with bulk systems. Interesting surface
phenomena can also be studied if one considers the {$O(n)$ loop} model in
a half-space, with vertices in the surface 
(the boundary of the half-space)
having an associated fugacity. 
{See Fig.~\ref{fig:configuration}, right.}
The partition function becomes
\begin{equation}\label{eqn:loop_model_partition}
Z_{N^2}(x,y) = \sum_\gamma x^{|\gamma|} y^{c(\gamma)} n^{\ell(\gamma)},
\end{equation}
where $c(\gamma)$ is the number of vertices on the boundary occupied
by $\gamma$. The value of the fugacity $y$ can be changed to result in
a repulsive or attractive interaction with the surface, with a phase
transition occurring at the point 
distinguishing {between}
 these two regimes. In
the limit of a large lattice, the free energy per site may be
decomposed as 
$$
f_{N^2}(x,y) := -\frac{k_B T}{N^2} \log Z_{N^2} (x,y)= f_{\rm bulk}(x,y) + \frac{1}{N} f_{\rm surface}(x,y) +\cdots,
$$
and surface phase transitions correspond to singularities in $f_{\rm
  surface}$.
{At $x=\x$, the adsorption transition is an example of a \emm special,
surface transition \cite{Binder83}.   }

In 1995 Batchelor and Yung~\cite{BY95} extended Nienhuis's work to the
adsorption problem described above. {Using the integrability of an
  underlying lattice model and comparison to numerical results, they}
conjectured the value of
the critical surface fugacity for the  honeycomb lattice {$O(n)$
  loop} model.\footnote{{Batchelor and Yung use different
    notation, and in particular weight their configurations slightly
    differently. They use weights $t_b$ and $t_s$ (corresponding to
    bulk and surface edges respectively), and the correspondence with
    our notation is $x=1/t_b$ and $y=t_b/t_s$.}} 

\begin{Conjecture}[\bf{Batchelor and Yung}] 
\label{conj1}
For the $O(n)$ loop model on the semi-infinite  hexagonal lattice 
{of Fig.~\rm{\ref{fig:configuration}}}
 with $n\in[-2,2]$, associate a fugacity $\x(n) =
 1/\sqrt{2+\sqrt{2-n}}$ with occupied vertices and an additional
 fugacity $y$ with occupied vertices on the boundary. Then the model
 undergoes a surface transition at 
\[y = \y(n) = 1+\frac{2}{\sqrt{2-n}}.\]
\end{Conjecture}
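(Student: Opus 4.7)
The plan is to generalise the strategy of Duminil-Copin--Smirnov to the whole range $n\in[-2,2]$ by working with a \emph{parafermionic observable} defined on edge-midpoints of the half-plane honeycomb lattice, and to identify both $\x(n)$ and $\y(n)$ as the unique pair for which this observable satisfies a pair of local identities --- one in the bulk and one on the surface.

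First I would fix a boundary edge-midpoint $a$ and, for any edge-midpoint $z$ in the upper half-plane, define
\[
F(z) \;=\; \sum_{\gamma:\,a\to z} x^{|\gamma|}\, n^{\ell(\gamma)}\, y^{b(\gamma)}\, e^{-i\sigma W_\gamma(a,z)},
\]
where the sum runs over $O(n)$ configurations consisting of non-intersecting closed loops plus one self-avoiding arc from $a$ to $z$, $b(\gamma)$ counts the surface vertices on the arc, $W_\gamma$ is its winding, and the spin $\sigma$ is determined by Smirnov's relation $n=-2\cos(4\sigma)$. Next I would redo Smirnov's bulk identity: at each interior vertex $v$ with neighbouring edge-midpoints $p,q,r$, the sum $(p-v)F(p)+(q-v)F(q)+(r-v)F(r)$ should vanish exactly when $x=\x(n)$, the proof being a pairing of the configurations that visit $v$ so that their weights cancel provided $x$ satisfies Smirnov's equation; the surface fugacity plays no role here because $v$ is in the bulk.

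The third and most novel step is to derive an analogous \emph{boundary identity} at each surface vertex $v$. Since $v$ has only two lattice edges, a case analysis of how the arc can turn at or pass through $v$ (weighted by $y$ whenever $v$ belongs to the arc) should yield an identity of the form $\alpha F(p)+\beta F(q)=0$, with coefficients $\alpha,\beta$ depending on $x,y,\sigma$; a direct computation should show that, once $x=\x(n)$, this identity holds at every surface vertex \emph{if and only if} $y=\y(n)=1+2/\sqrt{2-n}$. Summing the bulk identity over all interior vertices of a half-strip of width $L$ and height $T$ then yields a global discrete-contour identity; the boundary identity on the surface side causes the contribution along the surface to telescope, leaving only contributions from the top of the strip, from the two vertical sides, and from a surface-return \gf{} in a neighbourhood of $a$.

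In the fourth step one uses this global identity to locate the special surface transition: one shows that the top-of-strip contribution at $x=\x(n)$ tends to zero as $T\to\infty$, and the identity then forces the surface-return \gf{} to remain finite for $y\le\y(n)$ and to diverge for $y>\y(n)$, characterising $\y(n)$ as the transition point. The hard part --- and the reason this programme can at present be completed rigorously only for $n=0$ --- is twofold. The bulk critical point $\x(n)$ is known to be the genuine critical point of the honeycomb $O(n)$ loop model only for $n=0$ (Duminil-Copin--Smirnov); for other $n$ it is still Nienhuis's conjecture. Moreover, for $n\neq 1$ the loop weights are signed, so the Hammersley--Welsh-type bridge estimates that for $n=0$ tame the top-of-strip contribution (by reducing matters to showing that the bridge \gf{} vanishes at its critical point as $T\to\infty$) have no obvious analogue; one would need a genuinely new analytic input --- plausibly a positivity property of the parafermionic observable at $y=\y(n)$, or an integrable-model correlation estimate à la Batchelor--Yung --- to upgrade the algebraic identity into a bona fide proof of the transition for general $n$.
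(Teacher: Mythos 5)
Your programme is essentially the one the paper follows: an observable on mid-edges carrying the weight $x^{|\gamma|}n^{\ell(\gamma)}y^{c(\gamma)}e^{-i\sigma W}$, Smirnov's bulk relation at interior vertices fixing $\sigma$ and $\x(n)$, a modified relation at surface vertices, a global identity obtained by summing over a finite domain, and --- for $n=0$ only --- an upgrade to a genuine proof of the transition resting on the fact that the critical bridge generating function in a strip of height $T$ tends to $0$ as $T\to\infty$. You also correctly identify why the argument stops at $n=0$ (signed loop weights, no positivity to exploit for general $n$), which matches the paper's own position: for $n\neq 0$ the statement remains a conjecture, merely \emph{supported} by the identity.

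One step of your sketch is not right as stated, and you would discover this when attempting the ``direct computation''. There is no local boundary identity of the form $\alpha F(p)+\beta F(q)=0$ that becomes exact precisely at $y=\y(n)$. What the paper's Lemma~\ref{lem:local} shows is that at a surface vertex the three-term sum $(p-v)F(p)+(q-v)F(q)+(r-v)F(r)$ equals a \emph{defect} proportional to $(1-y)$ --- so it vanishes only at $y=1$ --- and this defect cannot be folded into $F(p)$ alone, because the configurations reaching the boundary mid-edge $p$ via $q$ and via $r$ carry different phases and different prefactors $(q-v)\lambda^{-1}$ versus $(r-v)\bar\lambda^{-1}$. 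Only after summing over all surface vertices and invoking the left--right reflection symmetry of the domain do the two families combine; the distinguished value $\ys=1+2/\sqrt{2-n}$ then appears \emph{globally}, as the zero of the coefficient $\frac{\ys-y}{y(\ys-1)}$ multiplying the surface generating function $B_{T,L}$ in Proposition~\ref{prop:global}, not as the value making a local relation hold. A smaller point: the key input $B_T(\x,1)\to 0$ is not obtained by Hammersley--Welsh-type estimates but by a renewal/ergodic argument (irreducible-bridge decomposition, Kesten's relation, and a ``stickbreak'' surgery showing an infinite critical bridge cannot be too narrow), in the spirit of Duminil-Copin and Hammond.
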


 In this paper we first show that the local
 identity proved by Smirnov~\cite{Smirnov10} for the
$O(n)$ loop model can be generalised to a half-plane
 system with a surface fugacity (Lemma~\ref{lem:local}). 
We use this to prove a  generalisation  of the global
identity of   Duminil-Copin and Smirnov 
 including a surface fugacity 
(Proposition~\ref{prop:global}). 
The contribution of one of these generating functions vanishes at 
$y=\y (n)$, which lends support to the above conjecture.

We then focus on the case $n=0$, corresponding to \saws\ interacting
with an impenetrable surface. 
{This case is somewhat degenerate since no loops are allowed and
  we therefore adopt the definition of $y_c(0)$ given by Hammersley,
  Torrie and Whittington~\cite{HTW82}, which we recall in Section 3.1.}
  With this definition, we prove Conjecture~\ref{conj1} for $n=0$:
 a \saw\ is adsorbed if $y>1+\sqrt 2$ and desorbed if $y<1+\sqrt 2$. 

\begin{Theorem}\label{thm:THEOREM} 
{The critical surface fugacity for self-avoiding walks on the honeycomb lattice is}
\[{\y(0)} = 1+\sqrt{2}.\]
\end{Theorem}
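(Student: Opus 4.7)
The plan is to derive Theorem~\ref{thm:THEOREM} from the $n=0$ case of the global identity in Proposition~\ref{prop:global}, applied in a trapezoidal half-plane domain whose bottom edge lies on the surface, of width $L$ and height $T$. The identity should take the form
\[\alpha(y)\,A_{L,T}(x,y)+\beta\,B_{L,T}(x,y)+\gamma(y)\,E_L(x,y)=1,\]
where $A_{L,T}$, $B_{L,T}$ and $E_L$ are the generating functions of half-plane \saws\ starting from a distinguished surface vertex and ending on, respectively, the two slanted sides, the top, and the surface itself away from the origin. The key algebraic feature inherited from the lemma is that at $(x,y)=(\x,\y)=(1/\sqrt{2+\sqrt 2},\,1+\sqrt 2)$ the coefficient $\gamma(\y)$ vanishes, which is how the value $1+\sqrt 2$ enters the argument.

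Letting $L\to\infty$ with $T$ fixed, positivity and monotonicity force each generating function to converge, and at $(\x,\y)$ the surface term drops out to leave $\alpha\,A_T(\x,\y)+\beta\,B_T(\x,\y)=1$, where $B_T$ now counts bridges in the half-strip that reach height $T$. If $B_T(\x,\y)\to 0$ as $T\to\infty$, then $A_T(\x,\y)\to 1/\alpha$, and this bound unwraps to yield a finite upper bound on the full surface-weighted half-plane \saw\ partition function at $(\x,\y)$; hence the model is not adsorbed at $y=\y$, so $\y\geq 1+\sqrt 2$. For the reverse inequality I would exhibit explicit families of surface-anchored walks (e.g.\ concatenations of minimal surface excursions) whose contribution to the partition function at $(\x,y)$ grows exponentially in length as soon as $y>1+\sqrt 2$, yielding divergence and therefore $\y\leq 1+\sqrt 2$; equivalently, one can read this off the identity, since for $y>1+\sqrt 2$ the coefficient $\gamma(y)$ is strictly negative while $A_{L,T}(\x,y)$ and $B_{L,T}(\x,y)$ remain bounded by a quantity that can be controlled via the $y=\y$ analysis, forcing $E_L(\x,y)$ to grow unboundedly in $L$.

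The central obstacle is the vanishing $B_T(\x,\y)\to 0$ as $T\to\infty$, flagged in the abstract as the crucial input. Since a walk contributing to $B_T$ must attain height $T$, the contribution of the surface fugacity can be controlled uniformly, and it suffices to prove the pure-\saw\ analogue $B_T(\x)\to 0$ at the bulk critical point. Finiteness of $B_T(\x)$ for each $T$ already follows from~\cite{DC-S10}, but vanishing in $T$ is strictly stronger. My plan is to apply the $n=0$ global identity of~\cite{DC-S10} in the bulk in a semi-infinite strip of height $T$: taking $L\to\infty$ first should produce a relation bounding $\sum_{t=1}^{T} B_t(\x)$ uniformly in $T$, so that $\sum_{t\geq 1} B_t(\x)<\infty$ and therefore $B_T(\x)\to 0$. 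This qualitative decay is exactly what SLE$_{8/3}$ predicts quantitatively, but only the qualitative form is needed to close the argument.
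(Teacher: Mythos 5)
Your outline reproduces the skeleton of the paper's argument (a global identity whose $B$-coefficient vanishes at $y=\ys=1+\sqrt2$, an $L\to\infty$ limit identity, and the decay of the critical bridge \gf\ as the key input), but the routes you propose for all three hard steps contain genuine gaps. The most serious one concerns the central lemma $B_T(\x,1)\to0$ (Theorem~\ref{thm:BT0}). Your plan is to extract from the $y=1$ identity a uniform bound on $\sum_{t\le T}B_t(\x,1)$ and conclude summability. No such bound can exist: by Kesten's relation the \gf\ of \emph{irreducible} bridges equals exactly $1$ at $x=\x$, so the \gf\ of all bridges, $\sum_{T\ge1}B_T(\x,1)=\sum_{k\ge1}1^k$, diverges. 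What the identity actually yields after $L\to\infty$ is only $1=\alpha A_T(\x,1)+B_T(\x,1)$, i.e.\ that $B_T(\x,1)$ decreases to $1-\alpha A(\x)\ge0$; showing this limit is \emph{zero} is equivalent to the exact evaluation $A(\x)=1/\alpha$, and cannot be read off the identity. The paper's entire appendix is devoted to this point: renewal theory reduces it to $\mathbb E_{\isab}(\height(\gamma))=\infty$, proved by contradiction via finiteness of the expected width, ergodicity of the shift on bi-infinite bridges, a positive density of diamond points, and the stickbreak surgery. This probabilistic input is not optional and is entirely missing from your proposal.

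The two bounding arguments also do not close. For the lower bound, the identity only controls walks ending on $\partial D_{T,L}$, so it cannot give ``a finite upper bound on the full surface-weighted half-plane partition function at $(\x,\y)$''; that series is in any case expected to diverge already at $y=1$, so it is the wrong target. The paper instead shows that for $y<\ys$ the polynomials $B_{T,L}(\x,y)$ stay bounded in $L$, deduces $\ys\le y_T$ (the radius of $B_T(\x,\cdot)$), and then needs the whole strip apparatus of Section~\ref{sec:strip} --- the symmetry $\mu_T(1,y)=\mu_T(y,1)$, the strict monotonicity $\mu_T<\mu_{T+1}$, and the convergence $\mu_T(1,y)\to\mu(y)$, packaged as Corollary~\ref{coro:yT-yc} with $y_T\downarrow\y$ --- none of which appears in your outline (note also that the paper's weighted surface is the \emph{top} of the domain, precisely so that this symmetry can be invoked; your bottom-weighted trapezoid with the vanishing coefficient on the surface-return term, and with the side term surviving the $L\to\infty$ limit, is not consistent as written, since walks reaching the slanted sides have length at least $L$ and their \gf\ tends to $0$). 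For the upper bound, explicit families of surface-hugging walks cannot detect the exact threshold: zigzags give only $\mu(y)\ge\sqrt y$, which exceeds $\mu$ only for $y>\mu^2=2+\sqrt2$, well above $1+\sqrt2$; and a negative coefficient in the identity forces nothing by itself. The paper's mechanism is the factorisation of an arch of height $T+1$ at its last top contact into two bridges, giving $A_{T+1}(\x,y)-A_T(\x,1)\le\x\,B_T(\x,1)B_{T+1}(\x,y)$, which combined with the limit identity and with $B_T(\x,1)\to0$ rules out $\ys<y<\y$. Without that factorisation, and without the decay of $B_T(\x,1)$, neither inequality is reachable.
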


The proof of Theorem~\ref{thm:THEOREM} relies of course on our global
identity, but also requires earlier results dealing with SAWs 
confined to a half-plane or a strip: {notably,} existence of {the
  critical value} of the surface fugacity $y$, 
enumeration of SAWs in a strip and  the
behaviour  as the size of the
strip increases, among others. Most of these results have been proved for the
square (and hypercubic) lattice, but we need to adapt these proofs to
the honeycomb case, which we do in Section~\ref{sec:strip}. Section~\ref{sec:proof}
combines these results and the global identity to prove Theorem~\ref{thm:THEOREM}. A third key
ingredient, of independent interest, is that the \gf\ of \emm bridges,
of height $T$, taken at $\x $, tends to 0 as $T$ increases. The proof
is probabilistic in nature, and is given in the appendix.

\medskip
To conclude this introduction, 
let us mention that one can also consider a honeycomb
half-plane with a \emm vertical, (rather than horizontal)
boundary. The techniques of this paper have been adapted by the first
author to determine the critical surface fugacity in this case~\cite{beaton-rotated},
which was conjectured by  Batchelor, Bennett-Wood and Owczarek~\cite{BBO}.
For other lattices, we do not 
 have conjectures for the values of the critical
fugacities; instead, numerical estimates using
series analysis and Monte Carlo methods are the best current
results. New methods of estimating the growth constants and
critical surface fugacities of the square and triangular lattices,
inspired by results presented in~\cite{DC-S10} and this paper, are
explored in~\cite{BGJ2011a,BGJ2011b}.

\section{Smirnov's identity in the presence of a boundary}
\label{sec:identity}

\begin{figure}
\centering
\includegraphics[scale=0.4]{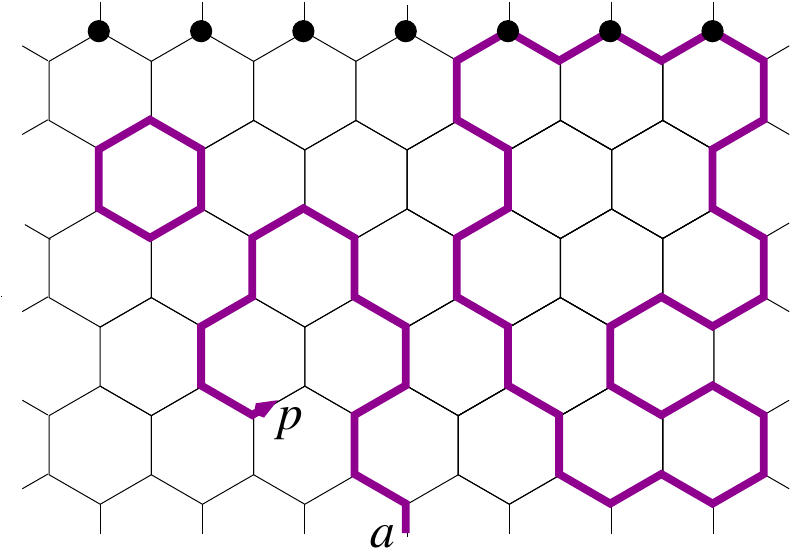}
\caption{A configuration $\gamma$ on a finite domain, with the
  weighted vertices on the top  boundary indicated. The
  contribution of $\gamma$ to {$F(p)$}
is  $\e^{-5\ii\sigma\pi/3}x^{51}y^3n^2$.  }
\label{fig:exampleF}
\end{figure}

We consider the honeycomb lattice, embedded in the complex plane $\cs$
in such a way that the edges have unit length. This  allows us  to
consider vertices of the lattice as complex numbers. 
It is also convenient to start and end \saws\ at a
mid-edge of the lattice. {(That is, the point on an edge precisely halfway between its two incident vertices.)} We restrict the lattice to a half-plane,
bounded 
{from above}
by a horizontal surface consisting of \emm weighted sites,
(Fig.~\ref{fig:exampleF}).
We further consider a  domain $D$ of this half-lattice, consisting
of a finite connected collection of half-edges 
such that for every vertex $v$ incident to at least one half-edge of
$D$, all three half-edges incident to $v$ actually belong to
$D$.
We denote by $V(D)$ the set of vertices incident to half-edges of
$D$. Those mid-edges of $D$ which are adjacent to only one
vertex in $V(D)$ form the \emm boundary, $\partial D$. 
A \emm configuration, $\gamma$ consists of a (single) \saw\ $w$ and a (finite)
collection of closed loops, which are self-avoiding and do not meet one
another nor $w$.
We denote by
$|\gamma|$ the number of vertices occupied by $\gamma$ (also called
the \emm length,), by $c(\gamma)$ the
number of \emm contacts,  with the surface (\emph{i.e.} vertices of the
surface occupied by $\gamma$), and by $\ell(\gamma)$ the number of 
 loops. See Fig.~\ref{fig:exampleF} for an example. 

{Let $a$ be a fixed mid-edge on the boundary $\partial D$.}
For any mid-edge $p$ of $D$, 
define the following \emm \gf,, or \emm observable,\/: 
\beq\label{F-def}
F(
p;x,y,n,\sigma)\equiv F(p) := 
\sum_{\gamma : a\leadsto p} 
 x^{|\gamma|} y^{c(\gamma)}n^{\ell(\gamma)}\e^{-\ii \sigma W(w)},
\eeq
where the sum is over all configurations $\gamma$ in $D$ for
which the SAW component $w$ runs from $a$ to $p$ and 
$W(w)$ is the \emm winding angle, of $w$,
that is, $\pi/3$ times the difference between the number of left turns
and the number of right turns. 

The case $y=1$ of the following lemma is due to Smirnov~\cite{Smirnov10}.
\begin{Lemma}[{\bf {The local identity}}]
\label{lem:local}
For $n\in[-2,2]$, set $n=2\cos\theta$ with $\theta\in[0,\pi]$. Let
\begin{align}
\sigma &= \frac{\pi-3\theta}{4\pi},\qquad 
{x^{-1}} = 
\x^{-1}:=2\cos\left(\frac{\pi+\theta}{4}\right) = \sqrt{2-\sqrt{2-n}},\qquad\text{or}
\label{eq:Slemma_dense}\\
\sigma &= \frac{\pi+3\theta}{4\pi},\qquad 
x^{-1}=\x^{-1}:=2\cos\left(\frac{\pi-\theta}{4}\right) = \sqrt{2+\sqrt{2-n}}.
\label{eq:Slemma_dilute}
\end{align}
Then for a vertex $v\in V(D)$
not belonging to the weighted surface, the observable $F$ defined
by~\eqref{F-def} satisfies
\begin{equation} \label{eqn:localidentity}
(p-v)F(p) + (q-v)F(q) + (r-v)F(r)=0,
\end{equation}
where $p,q,r$ are the mid-edges adjacent to $v$.

If $v\in V(D)$ lies on the weighted surface,
\begin{multline}
  (p-v)F(p) + (q-v)F(q) + (r-v)F(r)= \\
(q-v)(1-y)(\x y\lambda)^{-1}\sum_{\gamma : a \leadsto q, p} \x ^{|\gamma|} y^{c(
  \gamma)} n^{\ell(\gamma)} \e^{-\ii \sigma W(w)}\label{surface-case}\\
+(r-v)(1-y) (\x y\bar \lambda)^{-1}\sum_{\gamma :  a\leadsto r, p} \x ^{|\gamma|} y^{c(
  \gamma)} n^{\ell(\gamma)} \e^{-\ii \sigma W(w)},
\end{multline}
where $\lambda=\e^{-\ii\sigma\pi/3}$ is the weight accrued by a walk for
each left turn, $p,q,r$ are the three mid-edges adjacent to $v$, taken
in counterclockwise order, with $p$ just above $v$, and the first
(resp. second) sum runs over configurations $\gamma$ {whose SAW
  component $w$ goes} from $a$ to $p$ via $q$ (resp. via $r$).
\end{Lemma}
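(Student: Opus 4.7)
The bulk identity \eqref{eqn:localidentity} is due to Smirnov~\cite{Smirnov10}, so my plan is to recall the structure of his argument and then explain how to adapt it on a surface vertex to produce the right-hand side of~\eqref{surface-case}.

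\emph{Smirnov's cancellation at a bulk vertex.} Fix an interior vertex $v$ with adjacent mid-edges $p,q,r$ (counterclockwise). I partition the configurations contributing to $(p-v)F(p)+(q-v)F(q)+(r-v)F(r)$ into equivalence classes by fixing the configuration outside an arbitrarily small neighbourhood of $v$. Since $v$ has degree three, the local picture at $v$ is one of a small number of shapes: a loop through $v$ using two of the three edges (three options), the walk $w$ passing through $v$ as an interior vertex (three options), $w$ terminating at one of $p,q,r$ with $v$ as last vertex (six options, according to which adjacent edge $w$ arrives along), or $v$ unvisited. For the specific $\sigma$ and $\x$ given in \eqref{eq:Slemma_dense}--\eqref{eq:Slemma_dilute}, the contribution from each equivalence class to the left-hand side vanishes. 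The cancellation relies on three ingredients: the $120^\circ$ rotational symmetry of $p-v,q-v,r-v$, the left/right turn phases $\e^{\pm\ii\sigma\pi/3}$ carried by $\e^{-\ii\sigma W(w)}$, and the relation $n=2\cos\theta$ relating the loop weight to the parafermionic spin.

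\emph{Modification on the surface.} Now suppose $v$ lies on the weighted surface. The only change is that any configuration visiting $v$ acquires an extra factor $y$ rather than $1$. I inspect which of Smirnov's local shapes in each equivalence class visits $v$. Classes in which every local shape visits $v$ (the loop shapes and the ``$w$ passes through $v$'' shapes) still cancel: the common factor $y$ is pulled out and Smirnov's identity applies to what remains. Classes in which no shape visits $v$ also cancel unchanged. The only classes with imperfect cancellation are those pairing a shape in which $w$ ends at an adjacent mid-edge with $v$ as last vertex (which visits $v$) against shapes in which $w$ ends at the same mid-edge without reaching $v$ (no visit). Here the bulk cancellation fails by a factor $(y-1)$ times the weight of the $v$-visiting shape.

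\emph{Identifying the leftover terms.} A walk $w$ terminating at $p$ with $v$ as last vertex must arrive at $v$ either along the edge between $q$ and $v$ (making a left turn, contributing $\lambda=\e^{-\ii\sigma\pi/3}$) or along the edge between $r$ and $v$ (right turn, contributing $\bar\lambda$). The first family is precisely $\sum_{\gamma:a\leadsto q,p}$ and the second is $\sum_{\gamma:a\leadsto r,p}$. The last half-edge from $v$ to $p$ contributes $\x$, the surface visit at $v$ contributes $y$, and the turn at $v$ contributes $\lambda$ or $\bar\lambda$; together with the prefactor $(q-v)$ or $(r-v)$ inherited from the identity and the mismatch $(1-y)$ from replacing $y$ by $1$ on the $v$-visiting shape, dividing out these weights produces exactly the prefactors $(q-v)(1-y)(\x y\lambda)^{-1}$ and $(r-v)(1-y)(\x y\bar\lambda)^{-1}$ of~\eqref{surface-case}. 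The main difficulty is the case analysis: one must enumerate Smirnov's equivalence classes, distinguish those that still cancel (because all shapes visit $v$, or none do) from those that do not, and carefully track the turn angle at $v$ to decide between $\lambda$ and $\bar\lambda$.
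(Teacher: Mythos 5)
Your overall strategy is the paper's: partition the configurations ending at $p,q,r$ into classes sharing a common exterior, note that Smirnov's $y=1$ cancellation survives in any class whose members all make the same number of visits to $v$, and compute the defect in the remaining classes; your final weight bookkeeping (extra half-step $\x y$, turn factor $\lambda$ or $\bar\lambda$, hence the prefactors $(1-y)(\x y\lambda)^{-1}$ and $(1-y)(\x y\bar\lambda)^{-1}$) is exactly the paper's. One genuinely different (and valid) choice: for the classes in which two full edges at $v$ are occupied --- the loop-through-$v$ and pass-through-$v$ completions --- you argue that all members visit $v$, so the common factor $y$ pulls out and the $y=1$ identity still applies. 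The paper instead observes that such classes simply cannot occur at a surface vertex, because they would require using the edge above $v$ or reaching $p$ from outside the domain. Both work; yours avoids that geometric observation.

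There is, however, a real inaccuracy in your description of the defective classes: ``pairing a shape in which $w$ ends at an adjacent mid-edge with $v$ as last vertex against shapes in which $w$ ends at the \emph{same} mid-edge without reaching $v$'' is not a valid equivalence class. A walk ending at $p$ through $v$ and a walk ending at $p$ without visiting $v$ have different exteriors (the latter must reach $p$ from the far half of its edge), so they are never grouped together --- and for the mid-edge $p$ above a surface vertex the second kind does not even exist. The correct defective class consists of \emph{one} configuration $\gamma_1$ ending at, say, $q$ from outside (not visiting $v$) together with its \emph{two} one-step extensions through $v$, ending at $p$ and at $r$ respectively (both visiting $v$); its total contribution is $(q-v)\,\mathrm{wt}(\gamma_1)\bigl(1+\x y\bar\lambda j+\x y\lambda\bar\jmath\bigr)=(q-v)(1-y)\,\mathrm{wt}(\gamma_1)$, i.e.\ $(1-y)$ times the weight of the \emph{non}-visiting member, which is then rewritten as $(1-y)(\x y\lambda)^{-1}$ times the weight of the extension ending at $p$ to produce the first sum in~\eqref{surface-case}. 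Your closing paragraph is numerically consistent with this grouping, but the class structure as you state it would not cancel if executed literally, and getting it right is the crux of the proof.
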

Equation~\eqref{eq:Slemma_dense} corresponds to the larger of the two special
values of the step weight $x$ and corresponds to a point in the dense
regime where the model turns out to be
integrable. Equation~\eqref{eq:Slemma_dilute} gives the value of the
critical point, separating the dense and dilute phases. Both special
values of $x$ were predicted by Nienhuis~\cite{N82} from a
renormalisation group analysis, the first value corresponding to a
stable fixed point and the latter to an unstable one. 
In what follows, 
when we refer to the dense and dilute regimes, we mean the regimes with values of the step weight $x$ given by~\eqref{eq:Slemma_dense} and~\eqref{eq:Slemma_dilute} respectively.

\begin{figure}
\centering
\scalebox{0.7}{\input{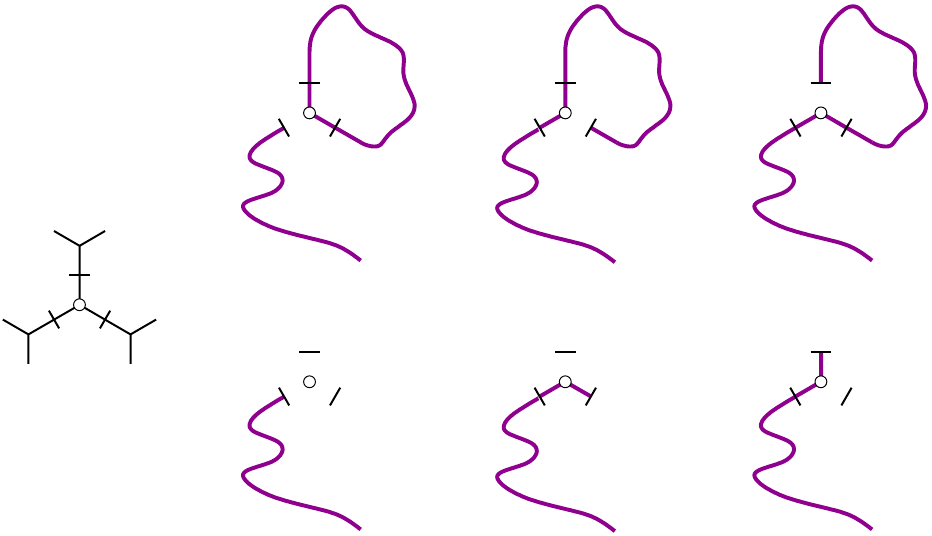_t}}
\caption{Two groups of  configurations  ending at a mid-edge
adjacent to the vertex $v$. 
{The contribution of each group to~\eqref{eqn:localidentity} is 0.}}
\label{fig:identity_groups} 
\end{figure}

\begin{proof}
If $v$ does not belong to the surface, the proof is completely
analogous to the proof of Lemma~4 in~\cite{Smirnov10}: One observes that the left-hand side
of~\eqref{eqn:localidentity} counts (weighted) configurations ending
at a  mid-edge adjacent to $v$, and organizes these configurations by
groups of three, as shown in  Fig.~\ref{fig:identity_groups} (which,
up to rotations, includes all possible cases). It is
then easy to check that, for the given values of $\sigma$ and $\x $,
the contribution of each group vanishes. The fact that
$y\not = 1$ in our paper makes no difference, because  the
number of weighted vertices is the same for all walks  in a
group.

\begin{figure}
\centering
\scalebox{0.7}{\input{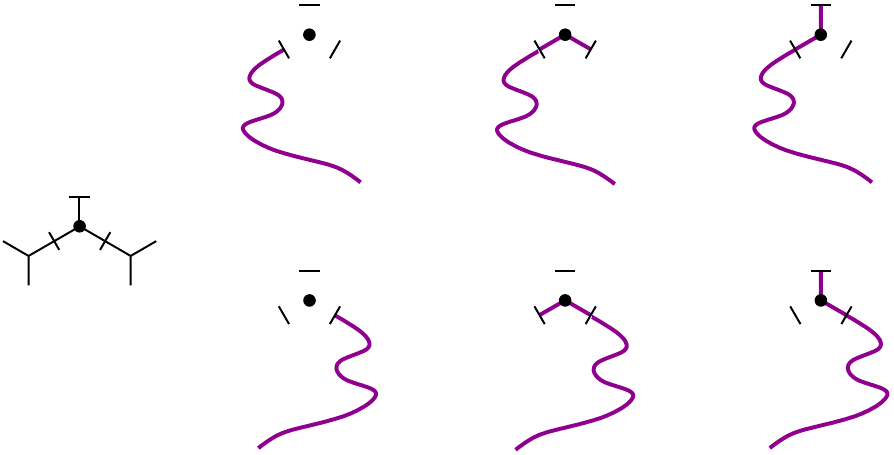_t}}
\caption{
{Two groups of walks ending at a mid-edge adjacent to a 
surface  vertex. The top 
group leads to~\eqref{eqn:boundaryeqn_first}, and  the bottom 
group   to~\eqref{eqn:boundaryeqn_second}.}} 
\label{fig:boundary_grouping} 
\end{figure}
This is not true if $v$ belongs to the surface. Still, let us
determine the contribution of each group. We first note 
that groups of the first type (for which the three mid-edges $p$, $q$, and $r$
are visited) cannot exist when $v$ is on the surface. For groups of the
second type, we distinguish two cases, depending on whether the
walk approaches $v$ via $q$ or via $r$ 
(Fig.~\ref{fig:boundary_grouping}). If the leftmost configuration in
each group of
Fig. \ref{fig:boundary_grouping} is denoted $\gamma_1$, and the
rightmost one  $\gamma$,  with associated SAW components $w_1$ and
$w$, then the contribution in the  first case is
\beq\label{eqn:boundaryeqn_first}
(q-v) \x^{|\gamma_1|}y^{c(\gamma_1)}n^{\ell(\gamma_1)}e^{-\ii
  \sigma W(w_1)}(1+\x y \bar\lambda j + \x y \lambda \bar \jmath )
\eeq
with $j=e^{2i \pi/3}$.
But we know that this vanishes when $y=1$
{(this is Smirnov's result),} 
so the last term in parentheses must be $(1-y)$. Moreover, 
$$
|\gamma_1|=|\gamma|-1, \quad 
c(\gamma_1)= c(\gamma)-1, \quad \ell(\gamma_1)=\ell(\gamma),  
                 \quad {W(w_1)=W(w)-\pi/3,}
$$
 and one concludes that groups of walks 
{approaching $v$ via $q$}
 give the first sum in~\eqref{surface-case}. Similarly, for a group of
 walks 
{approaching $v$ via $r$,} 
the contribution is 
\begin{multline}
(r-v)\x^{|\gamma_1|}y^{c(\gamma_1)}n^{\ell(\gamma_1)} e^{-\ii
  \sigma W(w_1)}
(1+\x  y \bar \jmath \lambda+\x  y  j\bar \lambda) =
\label{eqn:boundaryeqn_second}\\
(r-v)(1-y)
\x^{|\gamma|-1}y^{c(\gamma)-1}n^{\ell(\gamma)} e^{-\ii \sigma  (W(w)+\pi/3)},
\end{multline}
which gives the second sum in~\eqref{surface-case}.
\end{proof}

In \cite{DC-S10}, Duminil-Copin and Smirnov 
{prove and } 
use Lemma~\ref{lem:local} to prove that the growth constant of the
self-avoiding walk is given by the case $n=0$ of  the dilute
regime~\eqref{eq:Slemma_dilute}: 
 $\x^{-1}:=2\cos(\pi/8)=\sqrt{2+\sqrt{2}}$. 
They do so by considering a special trapezoidal domain 
$D_{L,T}$ as shown in Fig.~\ref{fig:S_boundary},
and deriving from the local identity a global identity that relates
several \gfs\ counting walks in this domain. Here we
generalise this identity to a general $O(n)$ model 
including a boundary weight. 

\begin{figure}[h]
\begin{center}
\begin{picture}(200,120)
\put(0,0){\includegraphics[height=120pt, angle=0]{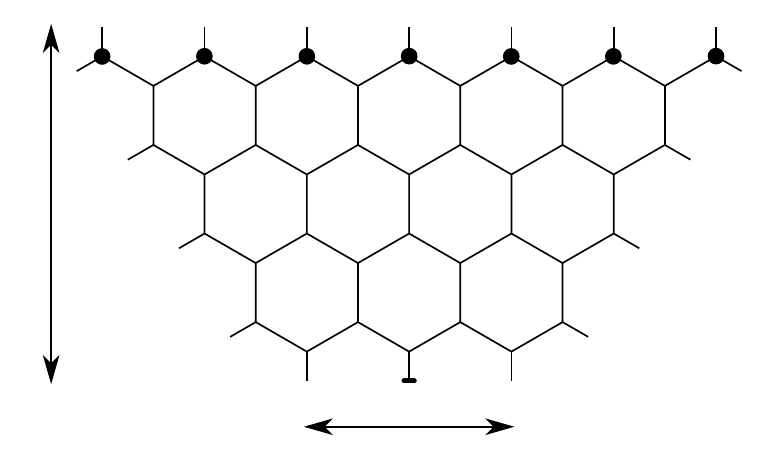}}
\put(115,10){$\cA$}
\put(115,115){$\cB$}
\put(165,43){$\cE$}\put(35,43){$\bcE$}
\put(97,20){$a$}
\put(95,-5){$2L$ cells}
\put(0,60){$T$}
\end{picture}
\end{center}
\caption{Finite patch $D_{T,L}$  of the half hexagonal lattice, with
  $T=4$ and $L=1$ {(the
  convention on $T$ is chosen in such a way a walk  of minimal
  length going from the bottom to the top of the domain contains
  $T-1$ vertical edges and two vertical half-edges, one at each end of the walk)}.
The SAW components of 
configurations start  on the central mid-edge $a$ of the bottom boundary.
The weighted vertices, belonging to the surface, are
marked with a black disc.}
\label{fig:S_boundary}
\end{figure}

We partition the boundary $\partial D_{T,L}$ into four subsets
$\cA$, $\cB$, $ \bcE$ and $ \cE$ as illustrated in Fig.~\ref{fig:S_boundary}. We
also  define four  generating functions, counting configurations
in $D_{T,L}$ starting from $a$ and ending in $\partial
D_{T,L}$. First,
\begin{align}
A_{T,L}(x,y) &:= \sum_{\gamma: a \leadsto \cA\setminus\{a\}}
 x^{|\gamma|}y^{c(\gamma)}n^{\ell(\gamma)}, 
\end{align}
where the sum is over all configurations in $D_{T,L}$ whose SAW
component goes from the mid-edge $a$ to a mid-edge of
$\cA\setminus\{a\}$. We  similarly define the
\gfs\ $A^\circ_{T,L}(x,y)$, $B_{T,L}(x,y)$ and $E_{T,L}(x,y)$ for
configurations ending in
$\{a\}$, $\cB$, and $ \bcE\cup \cE$ respectively.
Note that  configurations counted by $A^\circ$  comprise \emph{only}
closed loops inside $D_{T,L}$; that is, their SAW
component is the empty walk $a\leadsto a$. 

\begin{Proposition}
\label{prop:global}
For $n=2\cos \theta$ and $\x^{-1}:= 2\cos ((\pi\pm\theta)/4)$, the
above defined \gfs\ satisfy
\beq \label{eqn:loop_invariant}
A^\circ_{T,L}(\x,y)=\cos \left(\frac{3(\pi\pm \theta)}{4}\right) A_{T,L}(\x ,y) + \cos \left(\frac{\pi\pm\theta}{2}\right) E_{T,L}(\x ,y) + \frac{\ys-y}{y(\ys-1)} B_{T,L}(\x ,y),
\eeq
where
\[
\ys = \frac{1}{1-2\x^2} = 1 \mp \frac2{\sqrt{2-n}}.
\]
\end{Proposition}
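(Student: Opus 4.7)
The plan is to follow the strategy of Duminil--Copin and Smirnov~\cite{DC-S10}: sum the local identity of Lemma~\ref{lem:local} over every vertex of the domain $D_{T,L}$, and use discrete integration by parts to convert the result into a boundary identity, which we then reorganise in terms of $A_{T,L}, A^\circ_{T,L}, B_{T,L}$ and $E_{T,L}$.

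\textbf{Step 1 (Global telescoping).} Summing~\eqref{eqn:localidentity} at non-surface vertices and~\eqref{surface-case} at surface vertices of $V(D_{T,L})$, each interior mid-edge $p$ is adjacent to exactly two vertices $v_1,v_2$ and contributes $\bigl((p-v_1)+(p-v_2)\bigr)F(p)=0$ since $p$ is the midpoint of the edge $v_1v_2$. Only boundary mid-edges survive, yielding
\[
\sum_{p\in\partial D_{T,L}} (p-v(p))\,F(p) \;=\; \sum_{v\in \cA \cap V(D)} R_v,
\]
where $v(p)$ is the unique vertex of $V(D)$ adjacent to $p$ and $R_v$ denotes the right-hand side of~\eqref{surface-case}.

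\textbf{Step 2 (Phase analysis on the boundary).} For a walk $w$ from $a$ to a boundary mid-edge $p$, the winding angle $W(w)$ is determined modulo $2\pi$ by the direction of the last half-edge of $w$, which depends only on which of the four sides $\cA, \cB, \cE, \bcE$ contains $p$ (and, on the slanted sides, on one of two sub-types). With $\sigma$ as prescribed by Lemma~\ref{lem:local}, the product $(p-v(p))\,e^{-i\sigma W(w)}$ is therefore a scalar depending only on that data. Grouping mid-edges by boundary side, the left-hand side becomes a linear combination of $A^\circ_{T,L}, A_{T,L}, B_{T,L}, E_{T,L}$ with explicit trigonometric coefficients; after normalising so that the coefficient of $A^\circ_{T,L}$ (which comes from the single term $p=a$ and the empty walk) equals $1$, the coefficients of the other three generating functions simplify, using \eqref{eq:Slemma_dense}--\eqref{eq:Slemma_dilute}, to $-\cos(3(\pi\pm\theta)/4)$ for $A_{T,L}$, $-1$ for $B_{T,L}$, and $-\cos((\pi\pm\theta)/2)$ for $E_{T,L}$.

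\textbf{Step 3 (Surface correction).} The sum $\sum_v R_v$ is the only place where the surface fugacity $y$ enters non-trivially. I would reindex this sum by walks rather than by surface vertices, classifying them by their pattern of visit at each surface vertex $v$ they meet. Using that the prefactors in~\eqref{surface-case} contain the uniform factor $(1-y)/y$, and that the two contributions of $R_v$ (entering via $q$, resp.~via $r$) combine cleanly via the phases $\lambda, \bar\lambda$ and the geometric factors $(q-v), (r-v)$, I expect $\sum_v R_v$ to collapse to a scalar multiple $\mu(y)\,B_{T,L}(x,y)$ for some explicit rational $\mu(y)$ vanishing at $y=1$. Combined with Step~2, the coefficient of $B_{T,L}$ in the final identity becomes $\mu(y)+1$, which must therefore equal $(y^*-y)/(y(y^*-1))$: the value $y^*$ is forced by the requirement that at $y=1$ the whole identity reduce to the bulk identity of~\cite{DC-S10}. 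A direct computation, using $x^{-2}=\bigl(2\cos((\pi\pm\theta)/4)\bigr)^2 = 2\pm\sqrt{2-n}$, then yields $y^* = 1/(1-2x^2) = 1\mp 2/\sqrt{2-n}$.

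\textbf{Main obstacle.} The principal difficulty is Step~3: the combinatorial reorganisation of $\sum_v R_v$ into a scalar multiple of $B_{T,L}(x,y)$. One must verify that after collecting the contributions of walks passing through each surface vertex in the allowed patterns, only those whose SAW component ends on the top side $\cB$ survive, i.e.~that contributions from walks ending on $\cA$, $\cE$ or $\bcE$ cancel across the two sub-sums of $R_v$ via the phases $\lambda, \bar\lambda$ and the geometric factors $(q-v), (r-v)$. A secondary technical point is the phase analysis on the two slanted sides in Step~2, where mid-edges of the two possible tangent orientations must be checked to combine into the single coefficient $\cos((\pi\pm\theta)/2)$ rather than contributing separate constants.
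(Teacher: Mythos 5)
Your overall architecture --- summing the local identity over $V(D_{T,L})$, telescoping so that only boundary mid-edges survive, and reading off the boundary sum via winding-angle rigidity --- is exactly the paper's computation of the sum $S$ in two ways, and Steps~1 and~2 are essentially sound (modulo two slips: the weighted surface of $D_{T,L}$ is the top side $\cB$, not $\cA$, so the correction terms $R_v$ are attached to the vertices below $\cB$; and one needs the winding angle determined exactly, not merely modulo $2\pi$, since $\sigma$ is generically irrational).

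Step~3, however, which carries the entire new content of the proposition, has a genuine gap. The obstacle you single out is a phantom: by~\eqref{surface-case}, the correction $R_v$ at a surface vertex $v$ is already, by construction, a sum over configurations whose SAW component ends at the mid-edge $p\in\cB$ lying just above $v$ (approached via $q$ or via $r$); no walks ending on $\cA$, $\cE$ or $\bcE$ occur in it, so there is no cancellation to verify. The real work is elsewhere: one observes that $W(w)=0$ for all these walks, uses the left--right reflection of the domain to identify the via-$q$ and via-$r$ sub-sums (each contributing $\tfrac12 B_{T,L}$), and evaluates the phase factor $e^{-5\ii\pi/6}\bar\lambda+e^{-\ii\pi/6}\lambda=-2\ii\cos\left((\pi\pm\theta)/4\right)=-\ii\,\x^{-1}$, which gives the explicit surface coefficient $\mu(y)=\frac{1-y}{2\x^2y}$ and hence $1+\mu(y)=\frac{\ys-y}{y(\ys-1)}$ with $\ys=1/(1-2\x^2)$. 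Your proposed shortcut --- forcing $\ys$ by requiring the identity to reduce to the bulk identity at $y=1$ --- cannot work: $\frac{\ys-y}{y(\ys-1)}$ equals $1$ at $y=1$ for \emph{every} value of $\ys$, so that specialization carries no information about $\ys$. Without the explicit trigonometric evaluation of the surface term, the value $\ys$ (the whole point of the proposition, since it is the conjectured critical fugacity) is not derived.
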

Observe that in the dilute case  $\x^{-1}= 2\cos ((\pi-\theta)/4)$,
the value of $\ys$ coincides with the predicted value of $\y (n)$
given in Conjecture~\ref{conj1}. In Section~\ref{sec:proof}, we use the above
identity to prove Conjecture~\ref{conj1} in the case $n=0$
{(that is, Theorem~\ref{thm:THEOREM}).} 
In this case the left-hand
side of~\eqref{eqn:loop_invariant} reduces to $1$, all
coefficients are positive as long as
$y<\ys$,  so that the polynomials $A_{T,L}$, $B_{T,L}$ and  $E_{T,L}$ are
uniformly bounded, independently of $T$ and $L$. Just as in the proof of
Duminil-Copin and Smirnov for the growth constant of  SAWs,  the bound on
$B_{T,L}$ is an important ingredient of our proof. The
identity~\eqref{eqn:loop_invariant} allows $B_{T,L}(\x ,y)$ to diverge
for $y\geq \ys$  
{(as $T$ and $L$ grow)}
which signals the surface transition at the $\cB$ boundary. 
\begin{proof}
Let $p_v, q_v, r_v$ be the mid-edges adjacent to a vertex $v$. 
{Let $F$ be the observable defined by~\eqref{F-def}, and take $\sigma= (\pi
\mp 3\theta)/(4\pi)$ as in Lemma~\ref{lem:local}.}
We compute the sum
\beq\label{eqn:sumFoverS}
S:=\sum_{v\in V(D_{T,L})}\big((p_v-v)F(p_v)+(q_v-v)F(q_v)+(r_v-v)F(r_v)\big)
\eeq
in two ways. 

Firstly, all summands of~\eqref{eqn:sumFoverS} associated with a
non-weighted vertex $v$ are 0 by the first part of
Lemma~\ref{lem:local}. We are left with the contribution of vertices
lying on the surface, given in the second part of the lemma. Since
$W(w)=0$ for all walks occurring in~\eqref{surface-case},
\begin{multline*}
2S=e^{-5i\pi/6}(1-y)(\x y\lambda)^{-1}\sum_{p \in \cB, \gamma : a
  \leadsto q, p} \x ^{|\gamma|} y^{c( 
  \gamma)} n^{\ell(\gamma)} \\
+e^{-i\pi/6}(1-y) (\x y\bar \lambda)^{-1}\sum_{p \in \cB, \gamma :
  a\leadsto r, p} \x ^{|\gamma|} y^{c( \gamma)} n^{\ell(\gamma)}, 
\end{multline*}
where $q$ (resp. $r$) stands for the SW (resp. SE) mid-edge adjacent to $v$.
The factor 2 accounts for the fact that edges have  length $1$, so
that terms like $(p-v)$ have modulus $1/2$. 
Now reflecting a
configuration $\gamma$ that reaches  a mid-edge $p\in \cB$ from the SW
gives a configuration $\gamma'$ that reaches a mid-edge $p'\in \cB$ from
the SE. Moreover, $|\gamma|=|\gamma'|$, $c(\gamma)=c(\gamma ')$ and
$\ell(\gamma)= \ell(\gamma')$. Hence
\begin{eqnarray}
  2S&=&(1-y) (\x y)^{-1}\sum_{p \in \cB, \gamma : a
  \leadsto q, p}\x ^{|\gamma|} y^{c( 
  \gamma)} n^{\ell(\gamma)}
\left( e^{-5i\pi/6} \bar \lambda +e^{-i\pi/6}\lambda
\right)\nonumber\\
&=& -2\ii (1-y) (\x y)^{-1}\cos \left( \frac{\pi \pm \theta}4\right) \sum_{p \in \cB, \gamma : a
  \leadsto q, p}\x ^{|\gamma|} y^{c( 
  \gamma)} n^{\ell(\gamma)} \nonumber
\\
&=& -\ii (1-y) (\x y)^{-1}\cos \left( \frac{\pi \pm
    \theta}4\right)B_{T,L}(\x,y)   \hskip 10 mm \hbox{by symmetry}\nonumber
\\
&=& -\frac \ii 2 (1-y) (\x ^2 y)^{-1}B_{T,L}(\x,y).
\label{eqn:boundary_contributions} 
\end{eqnarray}

To obtain another expression for $S$, starting
from~\eqref{eqn:sumFoverS},  note that any 
mid-edge $p$ not belonging to $\partial D_{T,L}$
 contributes to two terms in the sum,
for vertices $v_1$ and $v_2$, and these two terms  cancel 
because $(p-v_1)=-(p-v_2)$. 
Thus we are left with precisely the contributions
of those mid-edges in $\partial D_{T,L}$: 
\beq\label{eqn:boundary_contributions-bis}
2S=-\ii\sum_{p\in\cA} F(p) + e^{-5\ii\pi/6}  \sum_{p\in\bcE}F(p) + 
e^{-\ii\pi/6}  \sum_{p\in\cE}F(p) + \ii\sum_{p\in\cB}F(p).
\eeq
We again use  symmetry arguments to rewrite this sum. 
First, denoting
$\cA=\{a\} \cup \cA^- \cup \cA^+$ (with $\cA^-$ to the left of $a$), we have
\begin{eqnarray*}
 \sum_{p\in\cA} F(p)&=&A^\circ_{T,L}(\x ,y)+ \sum_{\gamma: a\leadsto \cA^-}
\x^{|\gamma|} y^{c(\gamma)} n^{\ell(\gamma)} \left( \lambda^3 +\bar
  \lambda ^3\right)\\
&=&
A^\circ_{T,L}(\x ,y)- \cos \left( \frac{3(\pi \pm \theta)}4\right)A_{T,L}(\x,y).
\end{eqnarray*}
Similarly,
\begin{eqnarray*}
e^{-\ii\pi/3}\sum_{p\in\bcE}F(p) + e^{\ii\pi/3}  \sum_{p\in\cE}F(p)
&=&\sum_{\gamma: a\leadsto \bcE}\x^{|\gamma|} y^{c(\gamma)}
n^{\ell(\gamma)}
\left(e^{-\ii\pi/3} \lambda^2 +e^{\ii\pi/3} \bar   \lambda ^2\right)\\
&=&
-\cos \left( \frac{\pi \pm \theta}2\right)E_{T,L}(\x,y).
\end{eqnarray*}
Finally,
$$
\sum_{p\in\cB}F(p)= B_{T,L}(\x,y).
$$

Equating \eqref{eqn:boundary_contributions} and~\eqref{eqn:boundary_contributions-bis} gives the proposition.
\end{proof}

\section{Confined self-avoiding walks}
\label{sec:strip}

In the remainder of this paper we specialise to $n=0$,
corresponding to self-avoiding walks.
In this case, 
{we will prove that the critical surface fugacity is $\y=1+\sqrt 2$.}
In this section we first  review some basic but important background,
and then adapt to the honeycomb lattice some known results about
\emm square lattice, SAWs confined to a half-plane or a strip.
{These results will be used in Section~\ref{sec:proof}, where we prove
our main result. }

\medskip
Again, we consider SAWs on the honeycomb lattice, starting and
ending at a mid-edge.
The simplest model associates a fugacity $x$ with each visited vertex 
(or \emm step,, or \emm monomer,). One then studies the generating function 
$$
C(x) = \sum_{k\ge 0} c_k x^k,
$$ 
where $c_k$ is the number of SAWs of $k$ monomers, considered
equivalent up to a translation. A simple concatenation argument and
a classical lemma on sub-multiplicative sequences
suffice to prove that the \emm growth constant,
$$
 \mu:=\lim_{k \to \infty} \left( c_k\right)^{1/k}
$$
 exists and is finite~\cite[Chap.~1]{MadrasSlade93}.
Of course, $1/\mu$ is the radius of convergence $\x$ of the series
$ C(x)$. 
Duminil-Copin and Smirnov~\cite{DC-S10} proved Nienhuis's
conjecture \cite{N82} that, for the honeycomb
lattice, $\mu = \sqrt{2 + \sqrt{2}}.$ 

\subsection{Self-avoiding walks in a half-plane}
\label{sec:hp}
We now consider SAWs in the upper half-plane,
originating at a mid-edge $a$ just below  the surface
(Fig.~\ref{fig:half-space}).  It is known that the growth constant for
such walks is the same as for the bulk case~(see \cite{W75} 
or~\cite[Chap.~3]{MadrasSlade93}). 
We also add a fugacity 
$y$ to vertices in the surface. In physics terms, $y =
\e^{-\epsilon/k_BT}$  where $\epsilon$ is
the energy associated with a surface vertex, $T$ is the absolute
temperature and $k_B$ is Boltzmann's constant. 

 \begin{figure}
\centering
\includegraphics[scale=0.4]{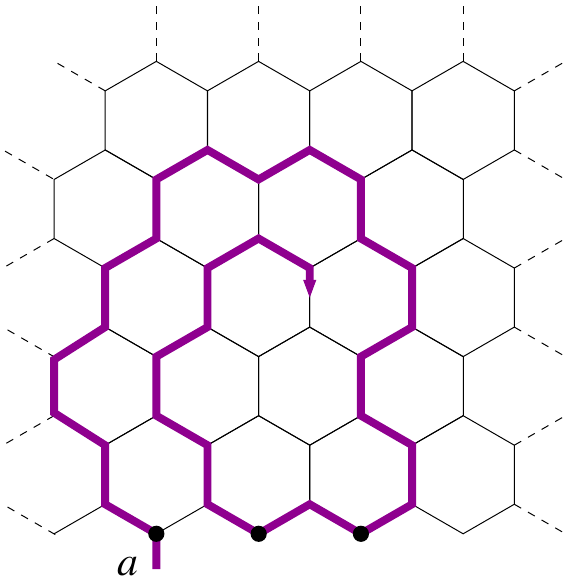}
\caption{A \saw\ in a half-plane, with weights attached to the
  vertices of the surface (indicated by black discs).}
\label{fig:half-space}
\end{figure}

We now consider the following partition function  (which is a
polynomial in $y$): 
\[
{c_k^+(y) := \sum_{|w|=k} y^{c(w)}, }
\]
where the sum runs over half-plane SAWs $w$ of length $k$ and $c(w)$ denotes the
number of \emm contacts, of $w$ with the surface (\emm i.e.,, the number of
vertices of the surface visited by $w$).
\begin{Proposition}\label{prop:half-space}
For $y>0$, 
$$ 
\mu(y):= \lim_{k  \to \infty}c^+_k(y)^{1/k} 
$$ 
exists and is finite. It is a log-convex, 
non-decreasing function of $\log y$, and therefore continuous
and almost everywhere differentiable.

For $0<y\le 1$, 
$$
\mu(y)=\mu(1)= \mu.
$$
Moreover, for any $y>0$,
$$
\mu(y) \ge \max (\mu,\sqrt y ).
$$
This behaviour implies the existence of a critical value $\y$, with 
$1\le \y \le \mu^2$, such that
$$
\mu(y) \left\{
  \begin{array}{ll}
    = \mu & \hbox{ if } y\le \y ,\\
>\mu & \hbox{ if } y >\y .
  \end{array}
\right.
$$
\end{Proposition}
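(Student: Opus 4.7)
The plan is to follow the classical Hammersley--Welsh/Hammersley--Torrie--Whittington strategy adapted to the honeycomb lattice and mid-edge convention. The key auxiliary object is the set of \emph{bridges}: half-plane \saws\ starting at $a$ whose vertical coordinate is strictly less than that of the endpoint and strictly less than or equal to that of every intermediate vertex (or an analogous definition keyed to the mid-edge $a$). Writing $B_n^+(y)=\sum_i b_n^+(i)y^i$ for the bridge partition function, a vertical concatenation of two bridges is again a bridge, and if we insist on glueing so that the surface contact of the top bridge sits above the top of the bottom bridge, one obtains the super-multiplicative inequality $B_{m+n}^+(y)\ge y^{-1}B_m^+(y)B_n^+(y)$ for every fixed $y>0$. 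Fekete's lemma then yields the existence of $\mu(y):=\lim_n B_n^+(y)^{1/n}\in(0,\infty)$. A standard Hammersley--Welsh style decomposition of a half-plane walk into an alternating sequence of bridges (using the fact that the number of such decompositions of length $n$ is sub-exponential, at most $e^{O(\sqrt n)}$) gives $C_n^+(y)\le e^{O(\sqrt n)}B_n^+(y)$ for each fixed $y$, so that $\lim_n C_n^+(y)^{1/n}$ exists and coincides with $\mu(y)$.

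For every fixed $n$, the polynomial $B_n^+(e^u)$ in the variable $u=\log y$ has non-negative coefficients, hence $\log B_n^+(e^u)$ is a convex, non-decreasing function of $u$ (Hölder's inequality on the log of a sum of exponentials). Taking $n\to\infty$ of $(1/n)\log B_n^+(e^u)$ preserves both convexity and monotonicity, so $\log\mu(y)$ is convex and non-decreasing in $\log y$; this in turn gives continuity on $(0,\infty)$ and a.e.\ differentiability. For the bound $\mu(y)=\mu$ on $(0,1]$, the upper bound follows from $C_n^+(y)\le C_n^+(1)\le c_n$, hence $\mu(y)\le\mu$. For the matching lower bound one restricts to the sub-family of half-plane walks that never touch a surface vertex; the growth constant of this sub-family equals $\mu$, a result of Whittington~\cite{W75} for the square lattice whose proof (unfolding plus a bridge concatenation argument) carries over to the honeycomb lattice essentially verbatim, giving $\mu(y)\ge\mu$ for every $y>0$ and in particular on $(0,1]$.

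For $\mu(y)\ge\sqrt y$, exhibit an explicit family of half-plane walks with at least $\lfloor n/2\rfloor$ contacts: starting from $a$, move to a surface vertex and then zigzag along the surface using the pattern ``surface step / bulk step'' repeatedly; this family contributes at least $y^{\lfloor n/2\rfloor}$ to $C_n^+(y)$ (up to a constant factor absorbing boundary effects), so that $\mu(y)\ge\sqrt y$. Combined with $\mu(y)\ge\mu$ and the non-decreasing log-convex character of $\mu$, we set $\y:=\sup\{y>0:\mu(y)=\mu\}$. Since $\mu(y)=\mu$ on $(0,1]$ one has $\y\ge 1$, and since $\mu(y)\ge\sqrt y>\mu$ as soon as $y>\mu^2$ one has $\y\le\mu^2$; convexity guarantees that $\mu(y)>\mu$ strictly for every $y>\y$, yielding the desorbed/adsorbed dichotomy.

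The main technical obstacle is verifying the bridge step carefully on the honeycomb lattice: because walks start at a mid-edge slightly below the surface, one must pin down an unambiguous notion of bridge that (i) allows clean vertical concatenation with the correct accounting of surface contacts (whence the $y^{-1}$ correction), and (ii) supports a Hammersley--Welsh type decomposition with the $e^{O(\sqrt n)}$ overcounting factor, so that $\mu(y)$ can be read off equivalently from $C_n^+$ or $B_n^+$. Once this lattice-adapted bookkeeping is in place, all remaining statements follow from the convex-analytic and combinatorial arguments sketched above.
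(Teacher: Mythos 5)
Your overall architecture (concatenation plus unfolding for existence, log-convexity in $\log y$, surface-avoiding walks for $\mu(y)\ge\mu$, zig-zag walks for $\mu(y)\ge\sqrt y$, and $\y:=\sup\{y:\mu(y)=\mu\}$) is the same as the paper's, which simply adapts Hammersley--Torrie--Whittington \cite{HTW82} and Whittington \cite{W75} to the honeycomb lattice. However, your existence step has a genuine flaw: you build super-multiplicativity from \emph{vertical} bridges and \emph{vertical} concatenation, and this is incompatible with the surface weight. When you stack the second bridge on top of (or below) the first, every vertex of the second bridge is translated away from the fixed surface line, so its $i_2$ ``contacts'' are contacts with its own base line, not with the surface of the half-plane; the concatenated walk has $i_1$ contacts rather than $i_1+i_2-1$, and the claimed inequality $B^+_{m+n}(y)\ge y^{-1}B^+_m(y)B^+_n(y)$ fails for $y\neq 1$. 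The same problem infects your Hammersley--Welsh step: decomposing a half-plane walk into vertical bridges requires reflections in \emph{horizontal} lines, which change each vertex's distance to the surface and hence the contact count, so $C^+_n(y)\le e^{O(\sqrt n)}B^+_n(y)$ is not justified for $y\ne1$ either.

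The fix --- and what the paper, following \cite{HTW82}, actually does --- is to unfold in the direction \emph{parallel} to the surface: reflect parts of the walk in lines parallel to the $y$-axis passing through the vertices of extremal $x$-coordinate (Fig.~\ref{fig:unfolding}). Such reflections preserve the height of every vertex relative to the surface and therefore preserve the number of contacts exactly, and the resulting unfolded walks can be concatenated along the surface direction with the contact counts adding correctly. With that substitution, your remaining steps --- log-convexity and monotonicity of $\tfrac1n\log C^+_n(e^u)$ passing to the limit, the bound $\mu(y)\ge\mu$ via walks avoiding the surface, the zig-zag bound $\mu(y)\ge\sqrt y$, and the derivation of $1\le\y\le\mu^2$ together with the desorbed/adsorbed dichotomy --- are all correct and coincide with the paper's (much terser) argument.
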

\begin{proof}
 The existence of $\mu(y)$ has been
proved by  Hammersley, Torrie and Whittington~\cite{HTW82} in the case
of the $d$-dimensional hypercubic lattice.
Their discussion and proof,
which use concatenation and \emm unfolding, of walks,
apply {\em mutatis mutandis} to the honeycomb lattice. 
Unfolding consists of reflecting parts of the walk in 
{vertical lines}
passing through those
vertices of the walk with maximal and minimal $x$-coordinates (Fig.~\ref{fig:unfolding}).  This
unfolding is repeated until the origin and end-point have 
 minimal and maximal $x$-coordinates respectively. 
The main advantage of such unfolded walks is that they can be concatenated
without creating self-intersections 
(this may require the   addition of a few steps between the walks).

\begin{figure}
\centering
\includegraphics[scale=0.4]{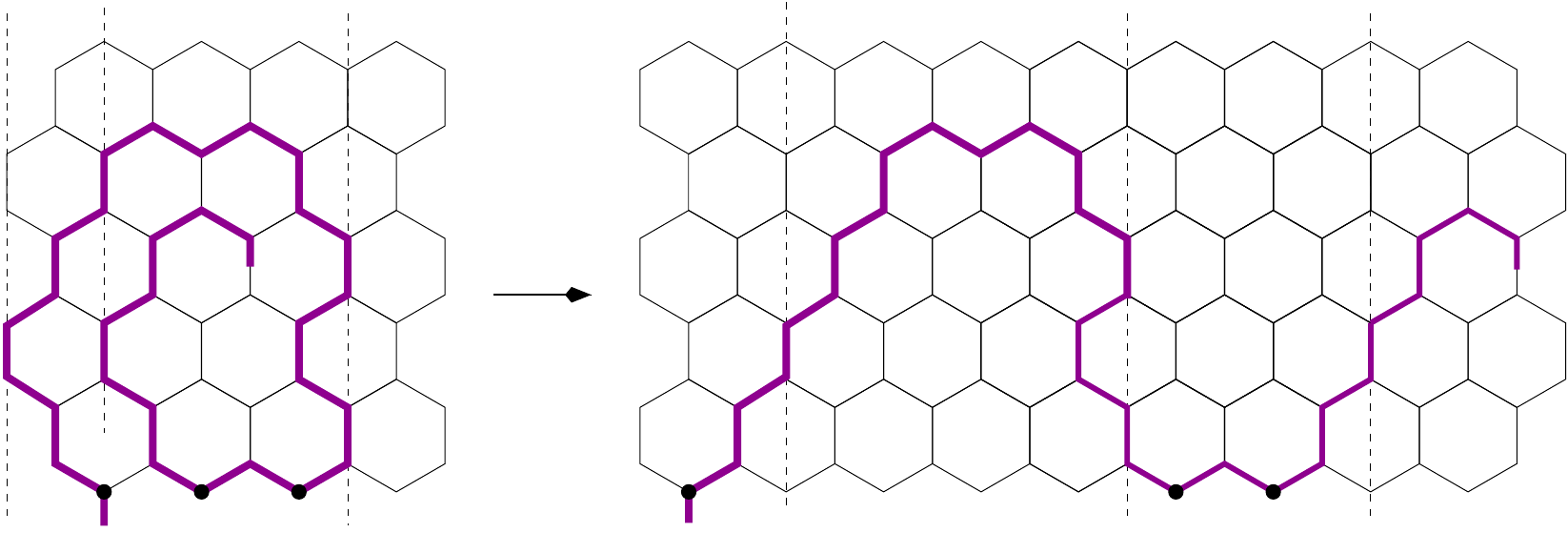}
\caption{Unfolding a half-plane SAW on the honeycomb lattice.}
\label{fig:unfolding}
\end{figure}

The other results are elementary, and adapted from an earlier paper
of Whittington~\cite{W75}. In particular, the lower bound
$\mu(y)\ge \sqrt y$
is obtained by counting zig-zag walks sticking to the surface.
\end{proof}

{The function $\mu(y)$ is not known explicitly, but the main result of
this paper is that $\y=1+\sqrt 2 = \mu ^2-1$.
The behaviour of $\mu(y)$ as $y \to \infty$ has  recently been 
established by Rychlewski and Whittington~\cite{RW11},} who proved
that, on the square lattice, $ \mu(y)$ is asymptotic to $y$. This
translates into $\mu(y)\sim \sqrt y$ in our honeycomb setting.

\medskip

The critical value $\y$, which we have  defined in analytic terms, can
also be given a probabilistic description. Fix $k$, and
assign to each half-plane SAW $w$ of length $k$ the probability
$$
\frac{y^{c(w)}}{c_k^+(y)}.
$$
If $y$ is large, this probability distribution favours walks with many
contacts, 
 while if $y$ is small, the walk is repelled by the surface.
 The mean density of vertices of the walk lying in the surface is 
$$
\frac 1 {k\,  c_k^+(y)} \sum_{|w|=k} c(w) y^{c(w)}=
\frac{y}{k}\frac{\partial \log c^+_k(y)}
{\partial  y}.
$$ 
Recall that $\frac 1 k \log c^+_k(y)$ tends to $\log \mu(y)$
as  $k \rightarrow \infty$.
In the limit of infinitely long walks, 
it can be shown\footnote{The exchange of the limit and the
 derivative is possible thanks to the convexity of $\log \mu(y)$, see
 for instance~\cite[Thm.~B7, p.~345]{MR1858028}.} that the above density tends
to
$$
y\frac{\partial \log \mu(y)}{\partial y}.
$$
 From the behaviour of $\mu(y)$ given in Proposition~\ref{prop:half-space},
we see that the density of vertices on
the surface is $0$ for $y < \y$ and is positive for $y > \y.$  
In other words, the critical value $\y$ 
{distinguishes between}
 {the {\em    desorbed} and {\em     adsorbed} phases.

\subsection{Self-avoiding walks in a strip}
\label{sec:interacting_bridges}

As discussed in the previous subsection,
the usual model of surface-interacting {polymers}
  considers walks
originating in a surface and interacting with monomers (or edges) in
that surface. One way to study such systems is to consider interacting
walks in a strip, and then to take the limit as the strip width
becomes infinite. Clearly, if one studies walks in a strip, it is
possible to consider interactions with both the top and bottom
surface. 
{The results of this section will reconcile the
(apparently inconsistent) settings of Section~\ref{sec:identity} (where
weighted vertices are at the top of the domain, opposite the
starting point) and Section~\ref{sec:hp} (where weighted vertices lie at the
bottom of the domain, on the same side as the starting point).}

Consider a strip of height  $T$ on the honeycomb lattice, as shown in
Fig.~\ref{fig:bridge}.  
We consider SAWs that originate at a mid-edge $a$ just below the bottom of the
strip. Such  walks are said to be  \emm arches, if they end at the
bottom of the strip, and  \emm bridges, if they end at the top (Fig.~\ref{fig:bridge}).
{We now consider the bivariate polynomials
\[
c_{T,k}(y,z)
= \sum_{|w|=k} y^{bc(w)}z^{tc(w)},
\]
where the sum runs over all SAWs $w$ of length $k$ in the $T$-strip and
$bc(w)$ and $tc(w)$ are the numbers of contacts of $w$
with the bottom and top of the strip respectively.} 
We define similar polynomials $a_{T,k}(y,z)$}and
$b_{T,k}(y,z)$ for arches and bridges.

\begin{figure} 
\centering
\begin{picture}(400,110)
\put(0,0){\includegraphics[height=100pt]{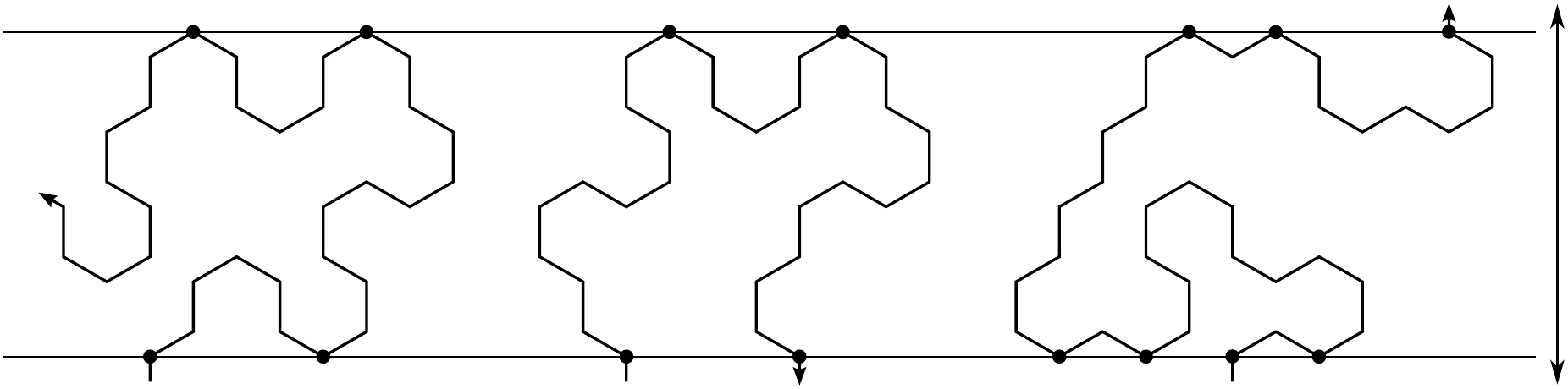}}
\put(405,44){$T$}
\put(35,-4){$a$}
\end{picture}
\caption{Walks confined to  a strip of height $T=5$ with weights
  attached to vertices along the top and 
bottom of the strip: a general walk, an arch, and a bridge.}
\label{fig:bridge}
\end{figure}

\begin{Proposition}\label{prop:strip}
  For $y, z>0$, one has
$$
\lim_{k \to \infty}  a_{T,k}(y,z)^{1/k}= \lim_{k \to \infty} b_{T,k}(y,z)^{1/k} =\lim_{k \to \infty}  c_{T,k}(y,z)^{1/k}:=\mu_T(y,z),
$$
where $\mu_T(y,z)$ is finite, and non-decreasing in $y$ and $z$. By the symmetry of bridges, 
 $$
\mu_T(y,z)= \mu_T(z,y),
$$
 and so, in particular, $\mu_T(y,1)= \mu_T(1,y).$
Finally, $\mu_T(1,y)$ is a log-convex and thus continuous
function of $\log(y)$.
\end{Proposition}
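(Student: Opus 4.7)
My plan is to prove the various assertions in sequence, using the classical combination of walk concatenation and Fekete's lemma, following the template of Hammersley and Whittington for half-plane \saws\ on the hypercubic lattice. The adaptation to the honeycomb strip is essentially geometric, and the two surface weights $y,z$ contribute only bounded multiplicative factors that I will absorb into Fekete's constant.

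Finiteness is immediate from $C_{T,n}(y,z) \leq c_n \max(1,y,z)^n$. For arches, two arches $\alpha_1,\alpha_2$ of lengths $m,n$ can be concatenated by translating $\alpha_2$ sufficiently far to the right of $\alpha_1$'s endpoint and joining them by a canonical zig-zag path along the bottom boundary, producing an arch of length $m+n+k$ with $k=k(T)$ bounded and
\begin{equation*}
A_{T,m+n+k}(y,z) \geq c\, A_{T,m}(y,z)\, A_{T,n}(y,z)
\end{equation*}
for some $c=c(T,y,z)>0$. Fekete's lemma then yields the existence of $\mu_T^A(y,z):=\lim_n A_{T,n}(y,z)^{1/n}$. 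Replacing $\alpha_2$ by a bridge gives $B_{T,m+n+k}\geq c\,A_{T,m}\,B_{T,n}$, hence $\liminf_n B_{T,n}^{1/n}\geq\mu_T^A$; while concatenating two bridges by appending a horizontally-reflected and translated copy of the second to the top endpoint of the first produces an arch, so $A_{T,2n+k}\geq c\,B_{T,n}^2$ and therefore $\limsup_n B_{T,n}^{1/n}\leq\mu_T^A$. Finally, $C_{T,n}\geq A_{T,n}$ gives the lower bound for $C_{T,n}$, while cutting any strip walk at its successive first visits to the top and bottom lines decomposes it into at most $n+1$ alternating arches and bridges, giving $C_{T,n}(y,z)\leq P(n)\max(A_{T,n},B_{T,n})(y,z)$ for some polynomial $P$. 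All three limits therefore exist and equal the common value $\mu_T(y,z):=\mu_T^A(y,z)$, which is finite and non-decreasing in $y$ and $z$ by positivity of coefficients.

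The symmetry $\mu_T(y,z)=\mu_T(z,y)$ follows from a weight-preserving bijection on bridges: compose horizontal reflection across the strip's mid-line with path reversal and a horizontal translation; this sends a bridge with $i$ bottom contacts and $j$ top contacts to one with $j$ bottom and $i$ top contacts, so $B_{T,n}(y,z)=B_{T,n}(z,y)$. Log-convexity of $\log\mu_T(1,y)$ in $\log y$ follows from H\"older's inequality applied to the polynomial $B_{T,n}(1,y)=\sum_j\beta_{T,n,j}\,y^j$ with non-negative coefficients:
\begin{equation*}
B_{T,n}(1,\, y_0^{t}\, y_1^{1-t}) \leq B_{T,n}(1,y_0)^{t}\, B_{T,n}(1,y_1)^{1-t} \qquad (t\in[0,1]),
\end{equation*}
from which taking $n$-th roots and letting $n\to\infty$ transfers the inequality to $\mu_T(1,\cdot)$. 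The main obstacle I expect is the Hammersley--Welsh-style decomposition alluded to above: the cuts at successive first visits to the two boundaries must be set up so that each piece is either an arch or a bridge with the right start/end boundary conditions, and so that the product of piece weights recovers the surface contacts of the original walk, losing at most a polynomial prefactor. The remaining steps---the concatenations, the Fekete argument, and the H\"older inequality---are routine once the honeycomb-specific connector lengths $k(T)$ and the parity of mid-edges have been pinned down.
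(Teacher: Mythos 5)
Your overall architecture (concatenation, Fekete, H\"older for log-convexity, reflection for the $y\leftrightarrow z$ symmetry) is the right one and matches the source the paper cites, but the central concatenation step has a genuine gap: you claim two arbitrary arches of lengths $m$ and $n$ can be joined into an arch of length $m+n+k$ with $k=k(T)$ \emph{bounded}. That is false. An arch of length $n$ in a strip can extend a horizontal distance of order $n$ to the left of its starting mid-edge (and to the right of its terminal mid-edge), and the connecting path cannot pass underneath it since it must stay inside the strip; so ``sufficiently far to the right'' is of order $m+n$, the connector has length $\Theta(m+n)$, and it also accrues $\Theta(m+n)$ bottom-contact weights. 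Supermultiplicativity in the form needed for Fekete's lemma is then lost. This is exactly the gap that the \emph{unfolding} (Hammersley--Welsh) argument is there to fill, and it is the ingredient the paper explicitly invokes via \cite[Section~4]{JOW06}: one first shows that restricting to unfolded walks (endpoints achieving the minimal and maximal abscissae) changes the counts by at most a factor $e^{O(\sqrt n)}$, which does not affect the growth rate; unfolded arches then concatenate with an $O(1)$ connector, and since horizontal reflections preserve the heights of vertices the surface weights survive unfolding unchanged. Your proof cannot be repaired without inserting this step.

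A second, smaller problem is your upper bound on $C_{T,n}$: cutting a general strip walk at successive first visits to the top and bottom lines produces a number of pieces that can grow linearly in $n$, so summing over the $2^{n-1}$ compositions of $n$ into piece lengths does \emph{not} give a polynomial prefactor $P(n)$ times $\max(A_{T,n},B_{T,n})$; bounding $\prod_i B_{T,n_i}$ by $\mu_T^n$ times something subexponential is circular at this stage. You correctly flag this as the main obstacle, but it is not merely a bookkeeping issue --- the route fails as stated. The standard (and the cited source's) fix is again unfolding: an unfolded strip walk is completed to an arch by attaching $O(T)$ steps at each end, giving $C_{T,n}\le e^{O(\sqrt n)}\,K(T,y,z)\,\sum_{j\le O(T)}A_{T,n+j}$, which suffices since $T$ is fixed. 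The remaining parts of your proposal --- finiteness, monotonicity from positivity of coefficients, the reflection-plus-reversal bijection for $\mu_T(y,z)=\mu_T(z,y)$, and log-convexity of $\log\mu_T(1,\cdot)$ in $\log y$ via H\"older applied to polynomials with non-negative coefficients --- are correct and agree with the paper's (cited) argument.
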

\begin{proof}
Again,  the existence of the limits
follows from concatenation and unfolding arguments as given in
Section~4 of~\cite{JOW06}. 
The log-convexity result is easily adapted from~\cite[Thm.~6.3]{JOW06}.
\end{proof}

Therefore the {growth constant}
 for interacting SAWs
 in a strip is independent of which wall the interacting monomers are
situated on. As per our discussion in  Section~\ref{sec:identity}, it turns
out to be convenient to put the interacting monomers on the 
top, rather than at the bottom.

The next proposition describes how the growth constant $\mu_T(1,y)$
changes as $T$ grows.
\begin{Proposition}\label{prop:strip-convergence}
  For $y>0$, we have 
$$
\mu_T(1,y) <\mu_{T+1}(1,y).
$$
Moreover, as $T\rightarrow \infty$,
$$
\mu_T(1,y)\rightarrow \mu(y),
$$
the growth constant of SAWs interacting with a surface (Proposition~{\rm\ref{prop:half-space}}).
\end{Proposition}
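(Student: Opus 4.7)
The proof splits naturally into two parts.

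For the strict monotonicity, the inequality $\mu_T(1,y) \le \mu_{T+1}(1,y)$ is immediate since every bridge in a strip of height $T$ is also a bridge in a strip of height $T+1$. To obtain the strict inequality, I would rely on a strip version of Kesten's pattern theorem. Fix a short local pattern $P$ that occupies the full height $T+1$ of the strip, for example a small zigzag that reaches the new top row. Kesten's argument, run inside the strip of height $T+1$, says that the number of bridges in that strip of length $n$ which avoid $P$ grows at exponential rate strictly less than $\mu_{T+1}(1,y)$. Every bridge confined to the strip of height $T$ trivially avoids $P$ (since $P$ does not fit), so the growth rate of bridges in strip $T$ is bounded above by this smaller quantity, giving $\mu_T(1,y) < \mu_{T+1}(1,y)$.

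For the convergence, observe that $\mu_T(1,y) \le \mu(y)$ for every $T$, since bridges in the strip are in particular half-plane walks. Combined with monotonicity, the sequence $\mu_T(1,y)$ converges to some limit $\hat\mu(y) \le \mu(y)$. To obtain the reverse inequality I would use a Hammersley--Welsh decomposition adapted to the surface weight. Denote by $B_n^+(y)$ the partition function of half-plane self-avoiding bridges anchored at $a$, with weight $y$ per surface vertex; the standard Hammersley--Welsh argument (unchanged by the insertion of $y$, since unfolding preserves the number of surface contacts) gives $\lim_n B_n^+(y)^{1/n}=\mu(y)$. Every half-plane bridge of span exactly $T$ is a bridge in the strip of height $T$, and vertical concatenation of bridges in a fixed strip is super-multiplicative, so $B_{T,n}(1,y)\le \mu_T(1,y)^n\le\hat\mu(y)^n$. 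Since the span of an $n$-step bridge is at most $n$, summing yields $B_n^+(y)\le n\,\hat\mu(y)^n$, and taking $n$-th roots produces $\mu(y)\le\hat\mu(y)$.

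The main obstacle is the strict inequality. Kesten's pattern theorem is classically formulated for walks in the whole lattice, and it requires some care to carry out the concatenation/resampling argument inside a fixed strip while tracking the boundary fugacity $y$; one must in particular ensure that the pattern $P$ can be freely inserted into a typical bridge of strip $T+1$ without disturbing surface contacts. An alternative route would be a transfer-matrix / Perron--Frobenius approach: $\mu_T(1,y)$ is the leading eigenvalue of a finite transfer operator acting on horizontal slices of height $T$, and one would argue that enlarging the slice strictly enlarges the spectral radius via irreducibility and positivity of the operator, valid for any $y>0$.
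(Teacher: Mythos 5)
There is a genuine gap in your convergence argument, at the step ``vertical concatenation of bridges in a fixed strip is super-multiplicative, so $B_{T,n}(1,y)\le\mu_T(1,y)^n$''. A bridge of the strip of height $T$ ends \emph{at the top} of that strip, so concatenating two of them vertically produces a walk of span $2T$ that has left the strip: the concatenation is super-multiplicative in the pair (span, length), not in the length at fixed span. What it actually yields is $B_{T,n}(y,1)^k\le B_{kT,kn}(y,1)\le\mu(y)^{kn}$, i.e.\ $B_{T,n}\le\mu(y)^n$, which turns your final bound $B_n^+(y)\le n\,\hat\mu(y)^n$ into the vacuous $B_n^+(y)\le n\,\mu(y)^n$. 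Since $T$ ranges up to $n$ in your sum over spans, you would need a bound by $\hat\mu(y)^n$ that is \emph{uniform in $T$}, and no such bound is provided (the natural super-multiplicative quantity in a fixed strip is the horizontally concatenable family of \emph{unfolded arches}, not bridges). Two smaller soft spots in the same part: a bridge of the $T$-strip is not a bridge of the $(T+1)$-strip (it fails to reach the new top line), and with the weight on the top line the inclusion of walks does not preserve the weight, so even the weak inequality should be run with the weight on the bottom via $\mu_T(1,y)=\mu_T(y,1)$, as the paper does; and the unfolding that turns a half-plane walk into a \emph{bridge} reflects in horizontal lines and so does not preserve surface contacts, so $\lim_nB_n^+(y)^{1/n}=\mu(y)$ is not as automatic as you suggest. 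The standard argument, which the paper invokes (Theorem~6.5 of the slab paper of Janse van Rensburg, Orlandini and Whittington), runs in the opposite direction: express $\mu_T$ as a supremum over $n$ of super-multiplicative unfolded counts in the strip, note that for $n=O(T)$ these coincide with the unrestricted half-plane counts, and let $T\to\infty$.

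For the strict inequality, your pattern-theorem plan is essentially the original route of Janse van Rensburg, Orlandini and Whittington, and you correctly identify its cost: a pattern theorem inside a weighted strip is exactly the hard ingredient, and it is left unproved in your sketch. The paper deliberately sidesteps it. It decomposes unfolded arches attached to the bottom line into \emph{prime} unfolded arches, so that $\vec A_T=P_T/(1-P_T/(xy))$; rationality of $\vec A_T$ (from the transfer matrix) forces divergence at the radius $\rho_T(y)$, hence $P_T(\rho_T(y),y)=y\rho_T(y)$; deleting the single explicit prime arch given by the full-height column (contributing $x^{4T-1}y^2$) then produces a series with strictly larger radius which still dominates $\vec A_{T-1}$, giving $\rho_{T-1}(y)>\rho_T(y)$ with no pattern theorem at all. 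Your Perron--Frobenius alternative is closer in spirit to this use of transfer matrices, but the irreducibility of the SAW transfer operator would itself have to be established before ``enlarging the slice strictly enlarges the spectral radius'' can be asserted.
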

\begin{proof}
Again, the proof is an adaptation to the honeycomb lattice of results
proved by van Rensburg, Orlandini and Whittington for the hypercubic
lattices~\cite{JOW06} (similar 
arguments are also covered in Chapter~8
of~\cite{MadrasSlade93}, but without interactions). Our arguments are
similar to Sections~5 and~6 of~\cite{JOW06}, but, we believe, somewhat shorter\footnote{In particular, working in
  two dimensions gives a simple argument proving the divergence  at
  their radius of convergence of \gfs\ that count SAWs in a strip.
  Moreover, we do not need the full strength of a pattern theorem.}. 

First, since $\mu_T(1,y)=\mu_T(y,1)$, we may choose to work with arches in
a strip of height $T$, interacting with the bottom line of the
strip. Let us say that an arch going from mid-edge $a$ to mid-edge $b$
is \emm unfolded, if the abscissa {$\xco(v)$} of every non-final vertex $v$ of the walk
satisfies {$\xco(a)\le \xco(v)<\xco(b)$}. {That is, an arch is
  unfolded if its starting point is (not necessarily strictly) to the
  left of all other points, and its final point is strictly to the
  right of all other points.} Two unfolded arches $w_1$ and $w_2$ can
be concatenated 
{(after deleting the last half-edge of the first arch and the first
half-edge of the second arch}, see
Fig.~\ref{fig:concatenation}) to form a new unfolded arch $w$. 
{Observe that
$$
|w|=|w_1|+|w_2|-1 \quad \hbox{and} \quad c(w)=c(w_1)+c(w_2)-1.
$$}
We say an unfolded arch is \emm prime, if it is not the concatenation
of two (or more) unfolded arches. The first two arches of
Fig.~\ref{fig:concatenation} are prime, the third one, by
construction, is not. 

\begin{figure} 
\centering
{\includegraphics[height=80pt]{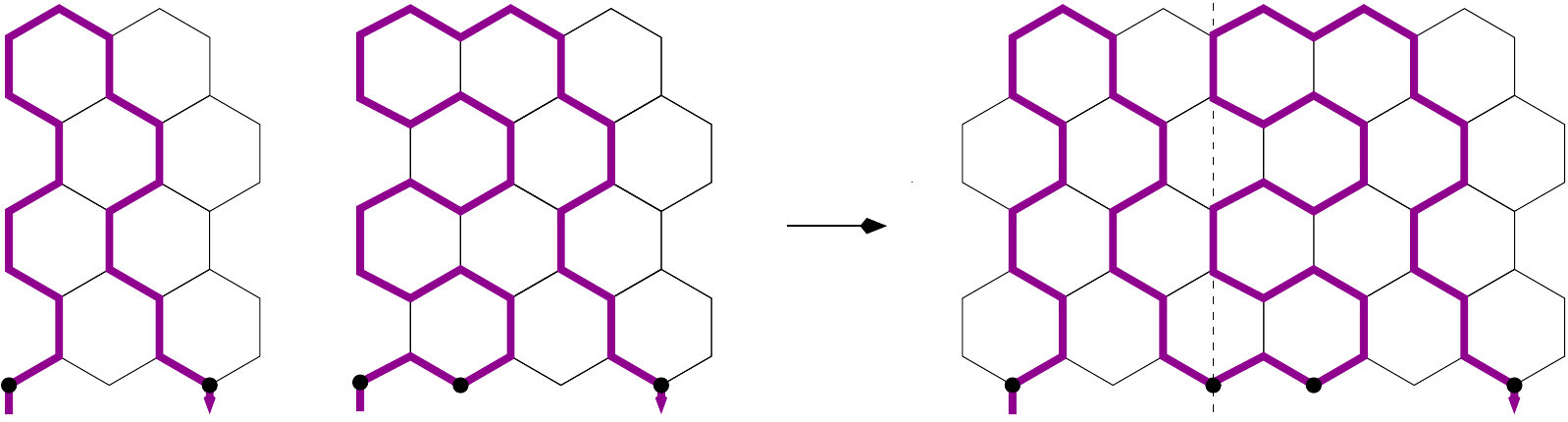}}
\caption{Concatenation of two unfolded arches in a strip of height $T=5$.}
\label{fig:concatenation}
\end{figure}

Let us fix $y>0$. The arguments of~\cite[Section~4]{JOW06} show that the \gf \
$\vec A_T(x,y)$ that counts unfolded arches (by the size and the number of
contacts with the bottom line of the strip) has the same radius of
convergence as the \gf \
$A_T(x,y)$ that counts all arches. By Proposition~\ref{prop:strip},
this radius is $\rho_T(y):=1/\mu_T(y)$. Moreover, the above
definition of prime arches shows that 
$$
\vec A_T(x,y)= \frac{P_T(x,y)}{1-P_T(x,y)/(xy)},
$$
where $P_T(x,y)$ counts prime unfolded arches. 

It follows from the transfer matrix method that the series $\vec
A_T(x,y)$ (and, in fact, all series counting walks in a strip that
occur in this section) is  a rational function of $x$ and $y$
(see~\cite[p.~364]{FlajSedg}, or~\cite{alm-janson}). 
Hence $\vec A_T(x,y)$ diverges at its radius $\rho_T(y)$,
and it follows that $P_T(\rho_T(y),y)/(y\rho_T(y))=1$. 

Now consider the prime unfolded arch $w$ that consists of a (wavy) column
with $2(T-1)$ vertical edges (like the first arch of Fig.~\ref{fig:concatenation}). This walk contributes a term
$x^{4T-1}y^2$ in the series $P_T(x,y)$. Let $\tilde P_T(x,y):=
P_T(x,y)- x^{4T-1}y^2$. The \gf\ of unfolded arches that do not
contain $w$ as a factor is
$$
\frac{\tilde P_T(x,y)}{1-\tilde P_T(x,y)/(xy)}.
$$
{Its radius is reached at the point $x$ satisfying $\tilde
 P_T(x,y)/(xy)=1$.} 
It is hence larger than the radius $\rho_T(y)$ of $\vec
A_T(x,y)$. The above series counts (among others)  walks that do not
touch the top line of the strip.  Their  \gf\  is  $\vec
A_{T-1}(x,y)$, which has radius $\rho_{T-1}(y)$. Hence
$\rho_{T-1}(y)>\rho_T(y)$, or equivalently $\mu_{T-1}(y)<\mu_T(y)$. 

\medskip
The proof that $\mu_T(y)$ tends to $\mu(y)$ is analogous to the proof
of Theorem~6.5 in~\cite{JOW06}.

\end{proof}

We now derive a corollary that will be essential in the
next section. It deals with the properties of
$\rho_T(y):=1/\mu_T(1,y)$, which is the 
radius of convergence of the series 
$$
 C_T(x,y):= \sum_{k\ge 0} c_{T,k}(1,y)x^k
$$
counting walks in a strip that interact with the top boundary, and of
the analogous series $ A_T(x,y)$ and $ B_T(x,y)$ that count arches and
bridges. See Fig.~\ref{fig:rho} for an illustration.
\begin{figure}
\centering
\scalebox{0.4}{\input{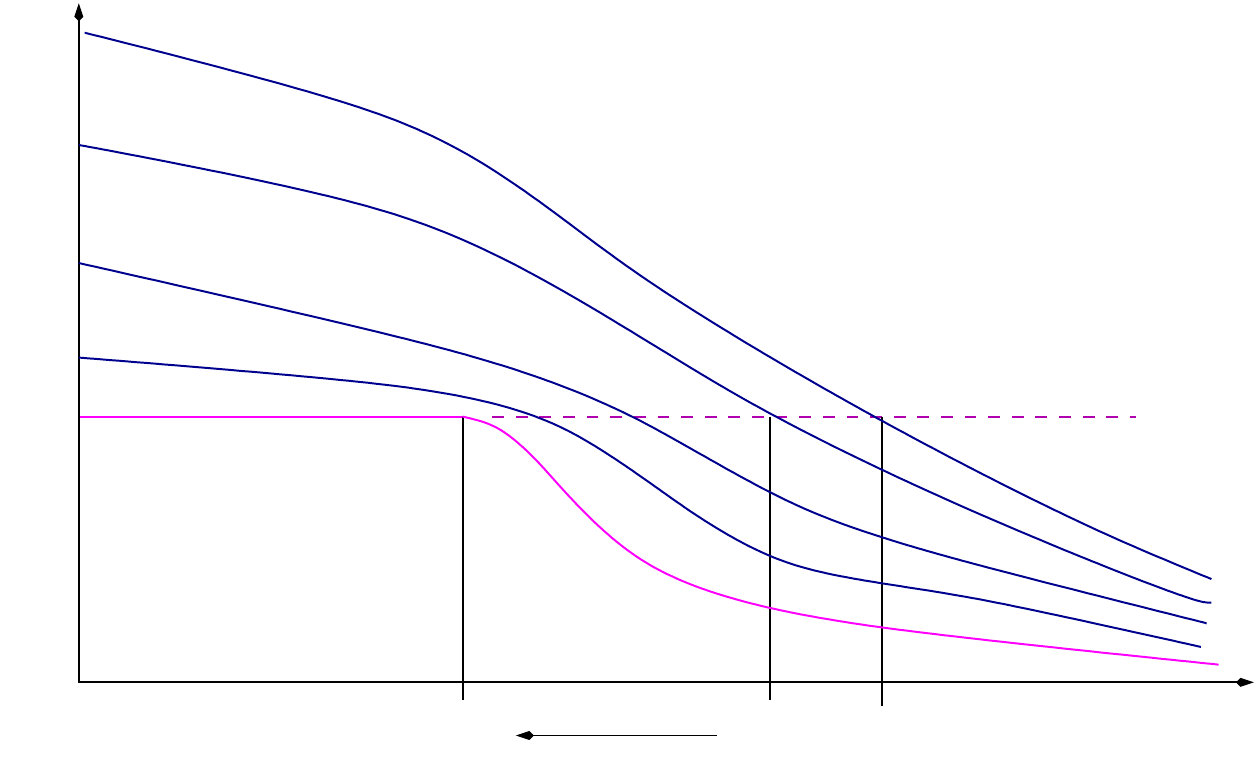_t}}
\caption{An illustration of Corollary~\ref{coro:yT-yc}.} 
\label{fig:rho}
\end{figure}

\begin{Corollary}\label{coro:yT-yc} 
Let $y>0$. The \gfs\ $A_T(x,y)$, $B_T(x,y)$ and
  $C_T(x,y)$ all have the same radius of convergence, 
$$
\rho_T(y)=1/\mu_T(1,y).
$$
Moreover, $\rho_T(y)$ decreases to $\rho(y):=1/\mu(y)$ as $T$ goes to
infinity. In particular, $\rho_T(y)$ decreases to 
$1/\mu$ for $y\le\y $. 

There exists a unique $y_T>0$ such that $\rho_T(y_T)=\x =1/\mu$. The
series (in $y$) $A_T(\x,y)$, $B_T(\x,y)$ and 
  $C_T(\x,y)$ have  radius of convergence
$y_T$, and  $y_T$ decreases to the critical fugacity $\y $ as  $T$
goes to infinity. 
\end{Corollary}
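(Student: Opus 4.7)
The plan is to derive the corollary as a sequence of consequences of Propositions \ref{prop:half-space}, \ref{prop:strip} and \ref{prop:strip-convergence}, together with elementary manipulations of positive-coefficient power series. First, that $A_T$, $B_T$ and $C_T$ share the radius of convergence $\rho_T(y)=1/\mu_T(1,y)$ is immediate from Proposition \ref{prop:strip} applied with $z=y$. The monotone convergence $\rho_T(y)\searrow\rho(y)=1/\mu(y)$ is the reciprocal of Proposition \ref{prop:strip-convergence}, and the specialization $\rho(y)=\rho$ for $y\le\y$ is part of Proposition \ref{prop:half-space}.

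Next I would produce $y_T$ with $\rho_T(y_T)=\x$ by studying $y\mapsto\mu_T(1,y)$. Proposition \ref{prop:strip} provides continuity (via log-convexity in $\log y$) and monotonicity; Proposition \ref{prop:strip-convergence} gives $\mu_T(1,1)<\mu(1)=\mu$; and a zig-zag construction along the top of the strip, analogous to the one used in Proposition \ref{prop:half-space}, yields $\mu_T(1,y)\ge\sqrt{y}$, so the function diverges as $y\to\infty$. The intermediate value theorem then produces at least one $y_T>1$ with $\mu_T(1,y_T)=\mu$. For uniqueness, if two values $1<y_1<y_2$ both satisfied $\mu_T(1,y_i)=\mu$, monotonicity would force equality on $[y_1,y_2]$; applying convexity of $t\mapsto\log\mu_T(1,e^t)$ at the three points $0<\log y_1<\log y_2$ together with monotonicity then forces $\log\mu_T(1,1)=\log\mu$, contradicting $\mu_T(1,1)<\mu$. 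I expect this step to be the main subtlety, since it is where log-convexity, monotonicity, and the strict bound at $y=1$ must all be combined.

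To transfer from the radius of convergence in $x$ (for fixed $y$) to the radius in $y$ (at $x=\x$), I would use nonnegativity of all coefficients. For $y_0>0$ the double sum defining $C_T(\x,y_0)$ may be reordered without affecting convergence; summing first over $n$ gives $\sum_n C_{T,n}(1,y_0)\x^n$, which converges exactly when $\x<\rho_T(y_0)$, i.e.\ $\mu_T(1,y_0)<\mu$. By the strict monotonicity extracted in the uniqueness argument, this holds precisely for $y_0<y_T$ and fails for $y_0>y_T$, so $C_T(\x,y)$ has radius of convergence $y_T$; the same reasoning applies to $A_T(\x,y)$ and $B_T(\x,y)$.

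Finally, for the behaviour of $(y_T)$: the strict inequality $y_{T+1}<y_T$ follows from $\mu_T(1,y_{T+1})<\mu_{T+1}(1,y_{T+1})=\mu=\mu_T(1,y_T)$ combined with monotonicity. The lower bound $y_T\ge\y$ comes from Proposition \ref{prop:half-space}: for $y\le\y$ one has $\mu(y)=\mu$, hence $\mu_T(1,y)<\mu(y)=\mu$, forcing $y<y_T$. If $y_\infty:=\lim_T y_T$ were strictly greater than $\y$, any $y\in(\y,y_\infty)$ would satisfy $\mu_T(1,y)<\mu$ for all $T$, hence $\mu(y)\le\mu$ in the limit by Proposition \ref{prop:strip-convergence}, contradicting $\mu(y)>\mu$ from Proposition \ref{prop:half-space}. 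Therefore $y_T\searrow\y$.
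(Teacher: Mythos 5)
Your proposal is correct and follows essentially the same route as the paper: the intermediate value theorem (using $\mu_T(1,1)<\mu$ and the zig-zag bound $\mu_T(1,y)\ge\sqrt{y}$) for existence of $y_T$, log-convexity in $\log y$ combined with monotonicity and the strict inequality at $y=1$ for uniqueness, positivity of coefficients to transfer between the radius in $x$ and the radius in $y$, and the monotone limit argument for $y_T\searrow\y$. The only cosmetic differences are that you spell out the Tonelli reordering and the three-point convexity inequality, which the paper leaves implicit.
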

\begin{proof}
  The first part of the {corollary} is an obvious translation of
  Propositions~\ref{prop:strip} and~\ref{prop:strip-convergence}. 

The existence of $y_T$ follows from the intermediate value theorem:
$\rho_T$ is continuous,  $\rho_T(1)>\rho(1)=\x$ and $\rho_T(y)\rightarrow 0$ as $y \to
\infty$ (because $\rho_T(y)\le 1/\sqrt y$ as can be seen by counting
zig-zag paths, as in the proof of Proposition~\ref{prop:half-space}).

The uniqueness of $y_T$ follows from the log-convexity of $\mu_T(y)$
in $\log y$, 
{together with $\rho_T(1)>\x$: }
this precludes having $\rho_T(y)=\rho_T(y') = \x$
{with $y\not = y'$.}
This also means that
\beq\label{equiv}
\rho_T(y)<\rho_T(y_T) \Longleftrightarrow y>y_T 
\qquad \hbox{ and } \qquad 
\rho_T(y)>\rho_T(y_T) \Longleftrightarrow y<y_T .
\eeq

Let us now prove that $y_T$ is the radius {of convergence} of $A_T(\x ,y)$,
$B_T(\x ,y)$ and $C_T(\x ,y)$.   The argument is the same for the
three series, so let us work for instance with $C_T$. 
{We must first explain  why $C_T(\x,y)$ is indeed a series in $y$, that
is, why the length \gf\ of SAWs in the $T$-strip having a fixed number
of top contacts is finite at $\x$. The reason for that is that the
number of $k$-step walks of this type grows (at most) like
$\mu_T(1,1)^k$, by definition of $\mu_T$, and that $\mu_T(1,1)<1/\x$
as stated at the beginning of the corollary. Hence $C_T(\x,y)$ is
indeed a series in $y$.}
Now by definition of $\rho_T$, the series $C_T(\x ,y)$ converges if
$\x <\rho_T(y)$, and diverges if $\x >\rho_T(y)$. But
$\x=\rho_T(y_T)$, so by~\eqref{equiv}, this means that $C_T(\x ,y)$ converges
if $y<y_T$ and diverges if $y>y_T$, which means that $y_T$ is the
radius {of convergence} of $C_T(\x ,y)$.

Let us finally prove that $y_T$ decreases towards $\y $. First, since
$\rho_T(y_T)=\x $ and $\rho_{T+1}(y)<\rho_T(y)$
(Proposition~\ref{prop:strip-convergence}), we have
$\rho_{T+1}(y_T)<\x $ and thus $y_{T+1}<y_T$. Hence the sequence
$(y_T)_{T\ge1}$ 
decreases. Let $\bar y$ be its limit. For $y\le \y $, we
have $\rho_T(y)>\rho(y)=\x $, and thus $y_T>\y $ for all $T$. Hence
$\bar y\ge \y $. Since $\bar y < y_T$, we have $\rho_T(\bar
y)>\rho_T(y_T)=\x $, and thus $\rho(\bar y)\ge \x $
(Proposition~\ref{prop:strip-convergence}). Since $\rho(y)<\x  $ for
$y>\y $ (Proposition~\ref{prop:strip}), it follows that $\bar y \le
\y $. We have thus proved that $y_T$ decreases to $\y $.
\end{proof}

\section{The critical surface fugacity  of SAWs is $1+\sqrt{2}$}
\label{sec:proof}
\subsection{The global identity}
Let us {consider} the identity~\eqref{eqn:loop_invariant} at $n=0$, 
that is, at $\theta=\pi/2$.
Then no loops are allowed. In particular, the polynomial $A^\circ_{T,L} $
reduces to 1. 
{We focus now on the dilute regime~\eqref{eq:Slemma_dilute}, where
$$
\x ^{-1} = 2 \cos \left( \frac{\pi  }8\right)=
\sqrt{2+\sqrt{2}}\qquad \hbox{and } \qquad \ys = 1+\sqrt{2}. 
$$
The global identity reads}
\beq\label{eqn:identity_simplified}
1 = {\alpha} A_{T,L}(\x ,y) + {\eps}E_{T,L}(\x ,y) + {\beta}(y)B_{T,L}(\x ,y),
\eeq
where 
$$
{\alpha} = \cos\left(\frac{3\pi}{8}\right) =
\frac{\sqrt{2-\sqrt{2}}}{2},
\qquad {\eps} = \cos\left(\frac{\pi}{4}\right) = \frac{1}{\sqrt{2}},\qquad
{\beta}(y) = \frac{\ys-y}{y(\ys-1)} = \frac{1+\sqrt{2}-y}{\sqrt{2}\,y}.
$$

\subsection{A lower bound on $\y $}
\label{sec:lower}
Let us fix $T$, and set $y=y^*$
in~\eqref{eqn:identity_simplified}. Since $\beta(y^*)=0$, we  obtain: 
$$
1 = {\alpha} A_{T,L}(\x ,y^*) + {\eps}E_{T,L}(\x ,y^*).
$$
As $L$ increases, the numbers $A_{T,L}(\x ,y^*)$ 
count more and more walks.  Hence  they increase with~$L$.  
Since the  coefficients ${\alpha}$ and ${\eps}$ 
are  positive, the above identity shows that 
$A_{T,L}(\x ,y^*)$ remains bounded as $L$ increases. Hence
the limit
$$
\lim_{L\rightarrow \infty} A_{T,L}(\x ,y^*) 
$$
exists and is finite. Clearly, this limit is $A_T(\x ,y^*)$ 
where $A_T(x,y)$ is
the \gf\ of arches in a $T$-strip, 
defined just above Corollary~\ref{coro:yT-yc}.  
According to this corollary,   $A_T(\x,y)$ has radius $y_T$. Since it
converges at $y^*$, this means that $\ys\le y_T$. Since $y_T
\rightarrow \y$, we thus have 
\beq\label{lb}
y^* \le \y .
\eeq

\subsection{A limit identity}\label{sec:E0}
\begin{Proposition}\label{prop:simplified}
For $0\le y<y_T$ (the radius of convergence of $A_T(\x , \cdot)$ and
$B_T(\x , \cdot)$), the series counting arches and bridges in a
$T$-strip satisfy 
  \beq\label{id2}
\alpha A_{T}(\x ,y)+ \beta(y) B_{T}(\x ,y)=1.
\eeq
\end{Proposition}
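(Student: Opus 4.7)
The plan is to pass to the limit $L\to\infty$ in the global identity~\eqref{eqn:identity_simplified}, showing that for every $y\in[0,y_T)$ the three \gfs\ $A_{T,L}(\x,y)$, $B_{T,L}(\x,y)$, $E_{T,L}(\x,y)$ converge respectively to $A_T(\x,y)$, $B_T(\x,y)$ and $0$. Fix such a $y$; by Corollary~\ref{coro:yT-yc}, the series $A_T(\x,y)$, $B_T(\x,y)$, $C_T(\x,y)$ are all finite.

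First I would handle the arch and bridge terms. The quantities $A_{T,L}(\x,y)$ and $B_{T,L}(\x,y)$ are non-decreasing in $L$: enlarging $D_{T,L}$ to $D_{T,L+1}$ only introduces new admissible walks, and the boundary pieces $\cA\setminus\{a\}$ and $\cB$ only grow. Every walk contributing to $A_{T,L}$ (resp.\ $B_{T,L}$) is an arch (resp.\ bridge) of the $T$-strip, and hence
\[
A_{T,L}(\x,y)\le A_T(\x,y)\le C_T(\x,y)<\infty,\qquad B_{T,L}(\x,y)\le B_T(\x,y)\le C_T(\x,y)<\infty.
\]
Conversely, any fixed arch or bridge of the strip lies inside $D_{T,L}$ once $L$ is large enough, so monotone convergence gives $A_{T,L}(\x,y)\to A_T(\x,y)$ and $B_{T,L}(\x,y)\to B_T(\x,y)$.

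The crux is showing that $E_{T,L}(\x,y)\to 0$. A walk counted by $E_{T,L}$ ends on one of the slanted sides $\cE\cup\bcE$, which sit at horizontal distance of order $L$ from the starting mid-edge $a$. Such a walk must therefore have length at least $cL$ for some constant $c>0$ depending only on the lattice embedding. As such walks form a subset of those contributing to $C_T(\x,y)$, with matching weightings (trivial on the bottom line, $y$ on the top line), we obtain the tail bound
\[
E_{T,L}(\x,y)\;\le\;\sum_{n\ge cL}C_{T,n}(1,y)\,\x^n,
\]
whose right-hand side tends to $0$ as $L\to\infty$ because the series $C_T(\x,y)=\sum_n C_{T,n}(1,y)\x^n$ converges for $y<y_T$. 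Passing to the limit $L\to\infty$ in~\eqref{eqn:identity_simplified} now yields $1=\alpha\,A_T(\x,y)+\beta(y)\,B_T(\x,y)$.

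The main obstacle is controlling everything in the full range $y\in[0,y_T)$ rather than just in $[0,\ys)$: for $y<\ys$ all coefficients in~\eqref{eqn:identity_simplified} are positive and the argument of Section~\ref{sec:lower} already gives enough boundedness, but for $\ys\le y<y_T$ the weight $\beta(y)$ becomes non-positive and the positivity argument breaks down. The uniform bound by $C_T(\x,y)$, available throughout $[0,y_T)$, replaces positivity and simultaneously tames the three terms.
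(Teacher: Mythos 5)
Your proposal is correct and follows essentially the same route as the paper: bound $E_{T,L}(\x,y)$ by the tail of the convergent series $C_T(\x,y)$ (since walks reaching the slanted sides must be long), use monotone convergence for the $A$ and $B$ terms, and pass to the limit in~\eqref{eqn:identity_simplified}. Your explicit remark that the bound by $C_T(\x,y)$ replaces the positivity argument on the range $[\ys,y_T)$ is a point the paper leaves implicit, but the underlying argument is the same.
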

\begin{proof}
  Let us first prove  that 
$$
\lim_L E_{T,L}(\x ,y)=0 \quad \hbox{for } 0\le y <y_T.
$$
Indeed, $E_{T,L}(\x ,y)$ counts some \saws \ of length at least $L$,
starting from $a$, and confined to a $T$-strip. But the \gf\ of walks
in the $T$-strip converges at $(\x ,y)$ for $y<y_T$ (see
Corollary~\ref{coro:yT-yc}), and thus its remainder of order $L$ tends to 0 as
$L$ grows. This remainder is an upper bound on $E_{T,L}(\x,y)$, which
thus tends to 0 as well.

Taking the limit of~\eqref{eqn:identity_simplified} as $L\rightarrow
\infty$  gives the proposition. 
\end{proof}

\subsection{Convergence of $B_T(\x ,1)$ to $0$}
\label{sec:bridges-limit}
This is a key point in our argument, and also a result of independent
interest.
\begin{Theorem}\label{thm:BT0}
  The length \gf\ $B_T(x,1)$ counting bridges in a strip of height
  $T$, taken at the critical value $\x = 1/\sqrt{2+\sqrt 2}$, tends
  to $0$ as $T$ tends to infinity.
\end{Theorem}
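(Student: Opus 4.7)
The plan is to exploit the identity of Proposition~\ref{prop:simplified}, specialised to $y=1$ (where $\beta(1)=1$), namely
\[
\alpha\, A_T(\x,1) + B_T(\x,1) = 1,
\]
which is legitimate since $y_T>1$ by Corollary~\ref{coro:yT-yc}. Because every arch in a strip of height $T-1$ is in particular an arch in a strip of height $T$, the series $A_T(\x,1)$ is non-decreasing in $T$; being bounded above by $1/\alpha$, it converges to a limit $A^\infty\in[0,1/\alpha]$, and correspondingly $B_T(\x,1)$ is non-increasing and converges to $B^\infty:=1-\alpha A^\infty\ge 0$. The task is thus to show that $B^\infty=0$.

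The strategy I would pursue is to prove a concatenation inequality of the form
\[
B_T(\x,1)^2 \le C\bigl(A_T(\x,1) - A_{T-1}(\x,1)\bigr)
\]
for some constant $C$ independent of $T$. Granted this, summing over $T$ telescopes the right-hand side to $C(A^\infty - A_0)\le C/\alpha<\infty$, so $\sum_T B_T(\x,1)^2<\infty$ and hence $B_T(\x,1)\to 0$, as required. (In fact only the weaker statement $A_T(\x,1)-A_{T-1}(\x,1)\to 0$ is needed, together with the uniform inequality.)

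To establish the inequality, given two bridges $w_1,w_2$ of height $T$, I would reverse and horizontally reflect $w_2$ and translate it so that its top mid-edge is identified with that of $w_1$; the result, when self-avoiding, is an arch in the strip of height $T$ visiting the top line (at the identification point), and hence is counted by $A_T(\x,1)-A_{T-1}(\x,1)$. To handle self-avoidance in general I would first restrict to \emph{unfolded} bridges in the sense of Hammersley--Welsh, for which the vertices of $w_1$ and those of the reflected $w_2$ lie in disjoint vertical half-planes so that their concatenation is automatically self-avoiding; a standard sub-multiplicativity estimate, analogous to the one used in the proof of Proposition~\ref{prop:strip-convergence}, shows that the unfolded bridge generating function is of the same order of magnitude as $B_T(\x,1)$ itself.

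The main obstacle, which is why the argument is ``probabilistic in nature'' and deferred to the appendix, is the self-avoidance constraint during concatenation: a naive spatial separation would require a horizontal shift depending on the widths of $w_1$ and $w_2$, which are a priori unbounded, and would destroy uniformity in $T$. An averaging argument over admissible horizontal shifts (or a clever use of Hammersley--Welsh unfolding) is what is needed to produce a constant $C$ that is genuinely independent of $T$, and this is where the probabilistic input does its work.
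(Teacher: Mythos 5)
Your reduction is fine as far as it goes: at $y=1$ the identity of Proposition~\ref{prop:simplified} gives $\alpha A_T(\x,1)+B_T(\x,1)=1$, $A_T(\x,1)$ increases and is bounded by $1/\alpha$, so $B_T(\x,1)$ decreases to some $B^\infty\ge 0$; this is exactly how the paper arrives at \eqref{B-lim}. The problem is the step that is supposed to show $B^\infty=0$. Your key inequality $B_T(\x,1)^2\le C\bigl(A_T(\x,1)-A_{T-1}(\x,1)\bigr)$ is not established, and the two fixes you sketch for the self-avoidance constraint do not produce a $T$-uniform constant at the critical fugacity. The Hammersley--Welsh unfolding correction is of order $e^{c\sqrt n}$ in the \emph{length} $n$, so while it preserves radii of convergence (this is how the paper uses it in Proposition~\ref{prop:strip-convergence}), it does not bound $B_T(\x,1)$ by a constant times the unfolded generating function evaluated \emph{at} $\x$; the ratio can grow with $T$. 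Averaging over horizontal shifts fares no better: connecting the two glued bridges after a shift of $k$ costs a factor of order $\x^{2k}$, which cannot absorb widths that are typically unbounded. Worse, the target inequality is almost certainly false: summing it over $T$ telescopes to $\sum_T B_T(\x,1)^2\le C\,(A(\x)-A_0)<\infty$, whereas the SLE prediction recorded in Remark~2 after Theorem~\ref{thm:BT0} is $B_T(\x,1)\asymp T^{-1/4}$, for which $\sum_T B_T(\x,1)^2$ diverges. (Note also that the inequality the paper actually proves by cutting an arch at its last top contact goes in the \emph{opposite} direction: $A_{T+1}(\x,y)-A_T(\x,1)\le \x\,B_T(\x,1)B_{T+1}(\x,y)$.)

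The paper's proof takes an entirely different route. Kesten's relation $\sum_{\gamma\in\isab}\x^{|\gamma|}=1$ turns the critical weights of irreducible bridges into a probability measure, and renewal theory (Lemma~\ref{ren sar}) identifies the limit as $B_T(\x,1)\to 1/\E_{\isab}(\height(\gamma))$; the theorem is therefore equivalent to $\E_{\isab}(\height(\gamma))=\infty$. This is proved by contradiction: finite expected height forces finite expected width (Proposition~\ref{height width}, via a global identity on a rectangular domain), hence infinite bridges are ``tall and skinny,'' hence a positive density of renewal points are diamond points (Proposition~\ref{crucial a}, using ergodicity of the shift on bi-infinite bridges), and the stickbreak surgery then manufactures too many wide bridges, contradicting the narrowness. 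None of this machinery is visible in your outline, and the combinatorial inequality you propose in its place cannot substitute for it.
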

  The proof, of a probabilistic nature, is given in the appendix. Let us note that the
  fact that $B_T(\x ,1)$ converges (and actually decreases) follows easily
  from the case $y=1$ of~\eqref{id2}. Indeed,
  $A_T(\x ,1)$  increases with $T$, but remains bounded since
 {$\alpha$ and $\beta(1)$} are positive. 
Thus $A_T(\x ,1)$ has a finite limit when
  $T$ increases, and this limit is the \gf\ $A(\x )$ counting arches in a
  half-plane.  It then follows from~\eqref{id2} that $B_T(\x ,1)$
  decreases as $T$ grows, 
and
\beq\label{B-lim}
\lim_T B_T(\x ,1)= 1-\alpha A(\x ).
\eeq
Theorem \ref{thm:BT0} thus  implies  that $A(\x )=1/\alpha$. 

\medskip

\noindent{\bf Remarks}\\
{\bf 1.} We can actually prove that $A_T(\x ,y)
\rightarrow A(\x )$ for $y<\ys$, but this will not be needed here.
Returning to~\eqref{id2}, this implies that $B_T(\x ,y)\rightarrow 0$
for $0\le y <\ys$.

\medskip
\noindent{\bf 2.}
As discussed in~\cite[Remark~2]{DC-S10}, it follows from the SLE predictions
of~\cite[Sec.~3.3.3 and 3.4.3]{LSW5} that
$B_T(\x,1)$ is expected to decay as $T^{-1/4}$ as $T\to \infty$.

\subsection{An upper bound on $\y$}
The series $A_{T+1}(\x ,y)$ counts arches of height at most
$T+1$. This includes arches of height at most $T$, which have no
contacts with the top boundary. 
Such arches are counted by $A_T(\x,1)$.
Now consider an arch that has
contacts with the boundary. By looking at its last contact, one can factor
the arch into two bridges (see Fig.~\ref{fig:arch-bridges}), and thus obtain 
\beq\label{ineq4}
A_{T+1}(\x ,y)-A_T(\x ,1)\le \x  B_T(\x ,1)B_{T+1}(\x ,y).
\eeq
This inequality holds in the domain of convergence of the series it involves, that
is,  for $y<y_{T+1}$, and thus in particular at $\y$.
{Let us now write~\eqref{id2}, first  for $T+1$ and 
$y=\y$ and then for $T$ and $y=1$:
$$
\alpha A_{T+1}(\x ,\y)+ \beta(\y) B_{T+1}(\x ,\y)=1
=
\alpha A_{T}(\x ,1)+  B_{T}(\x ,1).
$$
Combine this with the inequality~\eqref{ineq4}, taken at $y=\y$.
 This gives 
$$
B_T(\x,1)-\beta(\y)B_{T+1}(\x,\y)\le \alpha \x B_T(\x,1)B_{T+1}(\x,\y),
$$
or equivalently,
$$ 
0\le \frac 1{B_{T+1}(\x ,\y)} \le \alpha \x  + \frac 1 {B_T(\x ,1)} \frac
{y^*-\y}{\y(y^*-1)}.
$$ 
Recall that $B_T(\x ,1)$ tends to $0$ (Theorem~\ref{thm:BT0}).
  This forces 
$y^*\ge \y ,$
otherwise the right-hand side would become arbitrarily large in
modulus and negative as $T\to \infty$.}

Together with~\eqref{lb}, this establishes $\y=\ys =1+\sqrt 2$ and
completes the proof of Theorem~\ref{thm:THEOREM}.
\qed

\begin{figure}[htb]
\centering
\scalebox{0.4}{\input{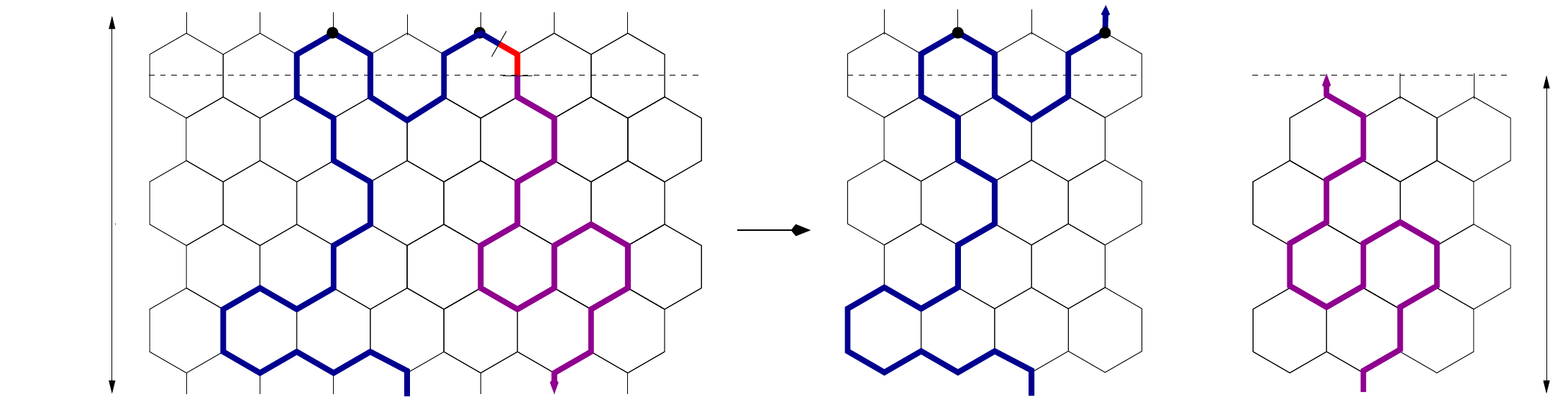_t}}
\caption{Factorisation of an arch of height $T+1$ into two bridges, of
  height $T+1$ and $T$ respectively.}  
\label{fig:arch-bridges}
\end{figure}

\section*{Acknowledgments}
We thank Neal Madras, Andrew Rechnitzer, Stu
Whittington and Alain Yger for helpful conversations.
AJG and JdG acknowledge financial support from the Australian
Research Council. NRB was supported by the ARC Centre of Excellence
for Mathematics and Statistics of Complex Systems (MASCOS).
HDC was supported
by the ANR grant BLAN06-3-134462, the ERC AG CONFRA, as well as by the
Swiss FNS.

Part of this work was carried out during visits of the authors to the
Mathematical Sciences Research Institute in Berkeley, during the
Spring 2012 Random Spatial Processes Program. The authors thank the
institute for its hospitality and the NSF (grant DMS-0932078) for its
financial support. 
%

\spacebreak
\bigskip
\newpage
\begin{center}{\large{\bf Appendix. Proof of Theorem~\ref{thm:BT0}.}}
\end{center}

\medskip

Before starting the proof, let us introduce some additional notation. 
The set of mid-edges of the honeycomb lattice is denoted by $\mathbb
H$. The lattice 
has an origin  $a\in \mathbb H$, at coordinates $(0,0)$. We denote by {$(\xco (v),\yco (v))$} the coordinates of a point
$v\in \cs$ (that is, its real and imaginary parts). 
We consider \saws\ that start and end at a mid-edge. 
A \saw\ $\gamma$ is denoted by the sequence $(\gamma_0,
\ldots, \gamma_n)$ of its mid-edges. The  length of $\gamma$, that
is, the number of vertices of the lattice it visits, is denoted as
before by $|\gamma|=n$. 
{(This $n$ has nothing to do with the $n$ of the $O(n)$ model
considered in Sections~\ref{sec:intro} and~\ref{sec:identity}. We are
dealing in this appendix with SAWs, that is, with the $O(0)$ model.)}
To lighten notation, we often omit floor symbols, especially in
indices: for instance, $\gamma_t$ should be understood as
$\gamma_{\lfloor t \rfloor}$.
The cardinality of a set $A$ is denoted by $|A|$. 

\medbreak

We have so far discussed bridges in a strip of height $T$
(Fig.~\ref{fig:bridge}, right), which we  call bridges \emm of height, $T$. 
In general, we call  \emm bridge, any \saw\ $\gamma=(\gamma_0,
\ldots, \gamma_n)$ that is a bridge
of height $T$ for some $T$. Equivalently, 
{$\yco(\gamma_0)<\yco(\gamma_i)<\yco(\gamma_n)$} for $0< i<n$. 
The set
of bridges of length $n$ is denoted by $\sabset_n$.

 The set $\ner_\gamma$ of {\em renewal
  points}  of $\gamma\in \sabset_n$ is the set of points of the form
$\gamma_i$ with  $0 \leq i \leq n$, for which 
$\gamma_{[0,i]}:=(\gamma_0, \ldots, \gamma_i)$ and
$\gamma_{[i,n]}:=(\gamma_i, \ldots, \gamma_n)$ 
are  bridges. 
We denote by $\mathbf{r}_0(\gamma), \mathbf{r}_1(\gamma), \ldots$  the
indices of the renewal points. That is,
$\mathbf{r}_0(\gamma)=0$ and
$\mathbf{r}_{k+1}(\gamma)=\inf\{j>\mathbf{r}_{k}(\gamma):\gamma_j\in
\ner_\gamma\}$ for each $k$.
When no confusion is possible, we  often denote
$\mathbf{r}_k(\gamma) $ by $\mathbf{r}_k$.

A  bridge $\gamma\in \sabset_n$ is {\em irreducible} if its
only renewal points are $\gamma_0$ and $\gamma_n$.
 Let $\isab$ be
the set of irreducible  bridges of arbitrary length 
starting from $a$. Every  bridge $\gamma$
is the concatenation of a finite number
of irreducible bridges,
the decomposition is unique and the set
$\ner_\gamma$ is the union of the initial and terminal points of the
bridges that comprise this decomposition. 

Kesten's relation for irreducible bridges 
(see \cite[Section~4.2]{MadrasSlade93} or \cite{kestenone}) on the
hypercubic lattice $\mathbb Z^d$ can be easily adapted to the honeycomb lattice.
It gives
$$
\sum_{\gamma\in \isab}\x ^{|\gamma|}=1.
$$
This enables us to define a probability measure $\prne$ on
$\mathrm{iSAB}$ by setting  
$\prne(\gamma) = \x ^{|\gamma|}$. Let $\prneinf$ denote the law on
semi-infinite walks $\gamma:\ns \to \mathbb H$ formed by the
concatenation of infinitely many independent samples 
$\gamma^{[1]},\gamma^{[2]},\ldots$
 of $\prne$. 
We refer to~\cite[Section 8.3]{MadrasSlade93} for details of related
measures in the case of $\mathbb Z^d$. The definition of  $\ner_\gamma$ and the
{indexing} of renewal points extend to this context (we obtain
an infinite sequence $(\mathbf r_k
)_{k\in \mathbb N}$). 

{Note that a bridge of length 2 has height 1 and ends at ordinate
$3/2$ (since edges have unit length). More generally, a 
bridge $\gamma$ of length $n$ has height 
{$\height (\gamma)=\frac 2 3 \yco(\gamma_n)$}.
We define  the height of a general SAW $\gamma$ similarly.}
The \emm width, of $\gamma$ is defined by
$$
\width(\gamma)=\frac 1 {\sqrt 3}\max\{\xco(\gamma_k)-\xco(\gamma_{k'}),0\le k,k'\le n\},
$$
so that  a bridge of length 2 has width $1/2$.

We have proved in Section~\ref{sec:bridges-limit} that $B_T(\x ,1)$
converges as $T\to 
\infty$. We provide here an alternative proof, and relate the limiting
value to the average height of irreducible bridges. 
\begin{Lemma}\label{ren sar}
As $T\rightarrow \infty$,
$$
 B_T(\x ,1) \rightarrow
\frac 1 {\mathbb E_{\isab}(\height(\gamma))}.
$$
\end{Lemma}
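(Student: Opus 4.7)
The plan is to recognize $B_T(\x,1)$ as the value at $T$ of a discrete renewal measure and then invoke the classical renewal (Blackwell) theorem for positive-integer-valued increments.

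First, I would use the unique factorization of any bridge $\gamma$ into irreducible bridges: $\gamma=\gamma^{[1]}\ast\cdots\ast\gamma^{[k]}$ with $\gamma^{[i]}\in\isab$, recalled in the appendix's preamble. This factorization is both length-additive, $|\gamma|=\sum_i|\gamma^{[i]}|$, and height-additive, $\height(\gamma)=\sum_i\height(\gamma^{[i]})$ (since renewal points split the height into the heights of the pieces). Combined with Kesten's identity $\sum_{\gamma\in\isab}\x^{|\gamma|}=1$, which makes $\prne$ a probability measure, I can rewrite
$$
B_T(\x,1)\;=\sum_{k\ge 1}\ \sum_{\substack{\gamma^{[1]},\ldots,\gamma^{[k]}\in\isab\\ \sum_i\height(\gamma^{[i]})=T}}\ \prod_{i=1}^k\prne\bigl(\gamma^{[i]}\bigr)\;=\;\sum_{k\ge 1}\prneinf(S_k=T),
$$
where $S_k:=H_1+\cdots+H_k$ and the $H_i$ are i.i.d.\ copies of $\height$ under $\prne$.

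Next I would apply the discrete renewal theorem to the right-hand side. Heights of bridges on the honeycomb lattice are strictly positive integers, and the irreducible bridge consisting of a single vertical edge has $\height=1$, so the support of the law of $H_1$ contains $1$ and is thus aperiodic (GCD equal to $1$). Blackwell's theorem then yields
$$
\sum_{k\ge 1}\prneinf(S_k=T)\;\xrightarrow[T\to\infty]{}\;\frac{1}{\mathbb{E}_{\isab}(\height(\gamma))},
$$
with the convention $1/\infty=0$ in the case $\mathbb{E}_{\isab}(\height)=\infty$. This is precisely the statement of the lemma.

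The main obstacle, modest in this setting, is establishing Kesten's identity $\sum_{\gamma\in\isab}\x^{|\gamma|}=1$ on the honeycomb lattice: this adapts the standard hypercubic argument of \cite[Sec.~4.2]{MadrasSlade93} almost verbatim, using as input the Duminil-Copin--Smirnov value $\x=1/\sqrt{2+\sqrt{2}}$ and the submultiplicativity of the bridge generating function. Once this is in place, the lemma reduces to routine renewal theory. The genuinely delicate step of the appendix then lies not in this lemma but in the subsequent task of proving $\mathbb{E}_{\isab}(\height(\gamma))=\infty$, which is what will actually deliver the conclusion $B_T(\x,1)\to 0$ of Theorem~\ref{thm:BT0}.
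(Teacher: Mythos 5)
Your proof is correct and follows essentially the same route as the paper's: both recognize $B_T(\x,1)$ as the renewal mass function of the i.i.d.\ heights of irreducible bridges (via Kesten's identity making $\prne$ a probability measure) and conclude by the discrete renewal theorem, the paper citing \cite[Theorem 4.2.2(b)]{MadrasSlade93} for the sequence $f_T=\sum_{\gamma\in\isab:\,\height(\gamma)=T}\x^{|\gamma|}$. Your explicit aperiodicity check and the $1/\infty=0$ convention are details the paper leaves implicit but are consistent with its argument.
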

\begin{proof} 
The result  follows from standard renewal theory. We can for instance
apply~\cite[Theorem 4.2.2(b)]{MadrasSlade93} to the sequence
$$
f_T:= \sum_{\gamma \in \isab:\ \height(\gamma)=T} \x ^{|\gamma|}.
$$
Indeed, with the notation of this theorem, $v_T=B_T(\x ,1)$ and
$\sum_k kf_k= \mathbb E_{\isab}(\height(\gamma))$.
\end{proof}

Thus Theorem~\ref{thm:BT0} is equivalent to
$$
\mathbb E_{\isab}(\height(\gamma))=\infty.
$$ 
We will prove this by contradiction. 
Assuming
$\mathbb E_{\isab}(\height(\gamma))$ 
is finite,  we 
first show that $\mathbb E_{\isab}(\width(\gamma))$ 
is also finite. Then, we
show that under these two conditions, an infinite bridge is very
narrow. The last step 
 consists  in proving that this cannot be the case. The argument uses
 a {\em stickbreak} operation which perturbs a bridge by selecting a 
subpath and rotating it clockwise by $\frac\pi3$. 
The new path is a self-avoiding bridge
for an adequately chosen subpath.
But   its width is relatively large, contradicting the fact that
bridges are narrow.
The strategy of proof is greatly inspired by a recent paper of
Duminil-Copin and Hammond, where \saws\ are  
proved to be sub-ballistic~\cite{DC-H}.
The additional difficulty here comes
from the fact that Section~4 of~\cite{DC-H} (which corresponds to the
proof presented 
here) relies on the assumption $\mathbb E_{\isab}(|\gamma|)<\infty$, which
is stronger than the assumption  $\mathbb
E_{\isab}(\height(\gamma))<\infty$ that we have here. 
In particular, we need the following result.

\begin{Proposition}\label{height width}
If $\mathbb E_{\mathrm{iSAB}}(\height(\gamma))<\infty$, then $\mathbb
E_{\mathrm{iSAB}}(\width(\gamma))<\infty$. 
\end{Proposition}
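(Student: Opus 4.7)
The plan is to deduce $\mathbb E_{\isab}(\width(\gamma)) < \infty$ from $\mathbb E_{\isab}(\height(\gamma)) < \infty$ by a surgery argument that converts horizontal extent into vertical progress. I would use the \emph{stickbreak} operation alluded to at the opening of the appendix: given an irreducible bridge $\gamma$ of large width $W$, rotate a suitable subpath by $\pi/3$ around a hexagon center (exploiting the rotational symmetry of the honeycomb lattice), producing a walk $\tilde\gamma$ of the same length but of substantially greater height. Because the rotation preserves $|\gamma|$, the $\prne$-weight $\x^{|\gamma|}$ is unchanged, giving a bounded-to-one weighted injection from wide irreducible bridges to tall bridges.

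Concretely, for $\gamma \in \isab$ with $\width(\gamma) = W$, I would identify a mid-edge $\gamma_k$ at which $x(\gamma_\cdot)$ attains its maximum, rotate the prefix $\gamma_{[0,k]}$ clockwise by $\pi/3$ about a hexagon center adjacent to $\gamma_k$, and reassemble with the unchanged suffix $\gamma_{[k,n]}$. Assuming $\tilde\gamma$ is a self-avoiding bridge of height at least $\height(\gamma) + cW$ for some universal $c>0$ and that $\gamma_k$ can essentially be recovered from $\tilde\gamma$, one would obtain, for each $W \geq 1$,
\[
\sum_{\gamma \in \isab:\,\width(\gamma) \geq W} \x^{|\gamma|} \;\leq\; C \sum_{\tilde\gamma \in \isab:\,\height(\tilde\gamma) \geq cW} \x^{|\tilde\gamma|} \;=\; C\,\prne(\height(\gamma) \geq cW).
\]
Summing over $W$ and comparing with $\mathbb E_{\isab}(\height) = \sum_{H\ge 0} \prne(\height \geq H) < \infty$ would then yield the desired bound $\mathbb E_{\isab}(\width) \leq C'\,\mathbb E_{\isab}(\height) < \infty$.

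The principal obstacle lies in making the surgery rigorous. First, self-avoidance of $\tilde\gamma$ must be verified: because $\gamma_k$ sits at the rightmost $x$-coordinate of $\gamma$, the rotated prefix should lie on one side of an appropriate axis through $\gamma_k$ and the suffix on the other, but this separation has to be checked carefully against the honeycomb embedding. More seriously, for the right-hand side of the displayed inequality to be useful the image $\tilde\gamma$ must itself be an \emph{irreducible} bridge; if the rotation happens to create a new renewal point then $\tilde\gamma \notin \isab$, and one must either adjust the choice of pivot to rule this out or decompose $\tilde\gamma$ into its irreducible components and extract its tallest one (paying a bounded combinatorial cost). An analogous surgery appears in Section~4 of \cite{DC-H} under the stronger hypothesis $\mathbb E_{\isab}(|\gamma|) < \infty$; the essential technical work is adapting that argument so that only the finite-height assumption is used.
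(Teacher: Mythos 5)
Your surgery does not work as described, and the obstacles you flag at the end are not technicalities but the heart of the matter. First, pivoting at the rightmost mid-edge and \emph{rotating} the prefix by $\pi/3$ does not separate the two pieces: unlike the reflection used in unfolding arguments, a rotation maps the half-plane $\{x\le x(\gamma_k)\}$ containing the prefix onto a half-plane that still overlaps the one containing the suffix in a full $60^{\circ}$ wedge, so neither self-avoidance nor the bridge property of $\tilde\gamma$ is guaranteed. The stickbreak operation of the appendix is only safe at \emph{diamond points}, where the walk is confined to a cone; diamond points are in particular renewal points, and an irreducible bridge has none in its interior, so that machinery is unavailable inside a single element of $\isab$. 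Second, and more fatally, the bounded-to-one claim cannot be made with a universal constant: to invert the map one must locate the junction in $\tilde\gamma$, and absent canonical markers the multiplicity is of order $|\gamma|$, which turns your inequality into one requiring $\mathbb E_{\isab}(|\gamma|)<\infty$ --- exactly the stronger hypothesis of Duminil-Copin and Hammond that this proposition is designed to avoid (as the paper notes explicitly). Third, even granting the surgery, your right-hand side summed over all bridges of height at least $cW$ equals $\sum_{T\ge cW}B_T(\x,1)$, which \emph{diverges} under the standing assumption, since $B_T(\x,1)\to 1/\mathbb E_{\isab}(\height(\gamma))>0$ by Lemma~\ref{ren sar}; so the comparison is vacuous unless the image is shown to be irreducible, and ``extracting the tallest irreducible component'' both destroys injectivity and only guarantees height of order $cW/m$ with $m$ unbounded.

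The paper's actual proof is entirely different: it applies the parafermionic identity to a rectangle $R_{T,L}$, which forces $\tilde B_{T,L}(\x)\to 1-\alpha A(\x)=\lim_T B_T(\x,1)>0$ along any coupling of $T$ and $L$ with $T\frak a_{2L}\to 0$, and then derives from the renewal structure the upper bound $\tilde B_{T,L}\le o(1)+\exp\left(-\delta T\,\prne(\width(\gamma)>2L)\right)$. If $\mathbb E_{\isab}(\width(\gamma))$ were infinite, one could choose $T_k,L_k$ making simultaneously $T_k\,\frak a_{2L_k}\to0$ and $T_k\,\prne(\width(\gamma)>2L_k)\to\infty$, forcing the same limit to be $0$, a contradiction. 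Some argument of this strength is needed; the rotation surgery alone does not supply it.
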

To prove this proposition we will first establish some simple lemmas regarding SAW generating functions in a slightly different geometry to that used in Section~\ref{sec:identity}.

Consider the rectangular domain $R_{T,L}$ 
depicted in Fig.~\ref{fig:rectangle}, 
with its boundary partitioned into four subsets $\cA$, $\cB$, $\cE^-$
and $\cE^+$ 
(the mid-edges of $\cE^+$ point up, those of $\cE^-$ point down,
{on both sides of the rectangle}). We
do not consider any  interactions here. 
As in Section~\ref{sec:identity}, we define four \gfs\ counting \saws\ in the
rectangle, going from $a$ to a mid-edge of the boundary. First, we set
$$
\tilde A_{T,L}(x) := \sum_
{\gamma: a \leadsto \cA\setminus\{a\}} x^{|\gamma|},
$$
and then the \gfs\ $\tilde B_{T,L}(x)$, $\tilde E_{T,L}^-(x)$ and
$\tilde E_{T,L}^+(x)$ are defined similarly.
We then have the following lemma, which can be viewed as a translation
of \eqref{eqn:identity_simplified} to the new rectangular domain
$R_{T,L}$. We omit the proof, as it is essentially the same as that of
Proposition~\ref{prop:global} with $n=0$ and $y=1$ in the dilute
regime. 
\begin{Lemma}\label{global-rectangle}
The generating functions $\tilde A_{T,L}$, $\tilde B_{T,L}$, $\tilde E^+_{T,L}$ and $\tilde E^-_{T,L}$, evaluated at $x=\x$, satisfy the identity
$$
1=\alpha \tilde A_{T,L}(\x)+\tilde B_{T,L}(\x)+\eps^+\, \tilde
E^+_{T,L}(\x)+\eps^-\, \tilde E^-_{T,L}(\x),
$$
where, as before, 
$\alpha=\cos (\frac{3\pi}8)$,
and now $\eps^-=\cos(\frac\pi 4)$ and $\eps^+=\cos(\frac\pi 8)$.
\end{Lemma}

\begin{figure}[htb]
\centering
\scalebox{0.4}{\input{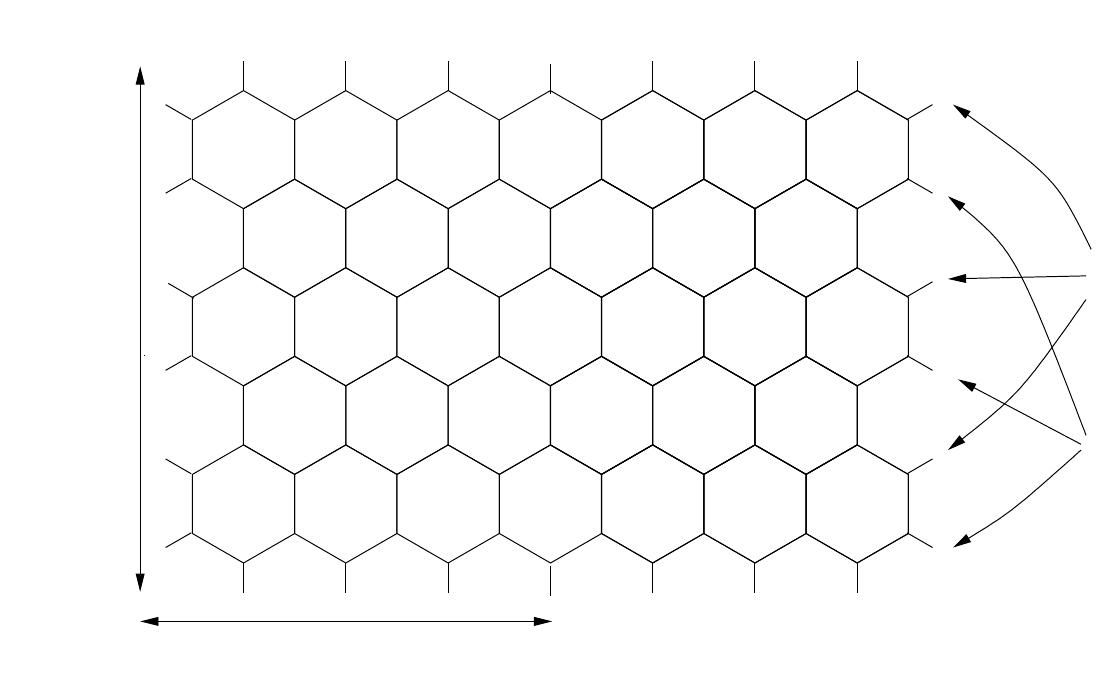_t}}
\caption{The rectangular domain $R_{T,L}$ with $T=6$ and $L=4$.}
\label{fig:rectangle}
\end{figure}

\medskip
 \begin{quote}
   {\bf Convention.} Since we always evaluate our \gfs\ at $x=\x $, we
   will almost systematically omit the variable $\x $, so that $ \tilde A_{T,L}$ now means $   \tilde A_{T,L}(\x )$, and so on.
 \end{quote}
\medskip

We now wish to take the size of the rectangle $R_{T,L}$ to
  infinity to obtain a half-plane, as we did in the previous
  geometry. This time, however, we want  $T$ and $L$ to increase
  together, so the situation here is a bit more delicate.

Recall that an \emm arch,
  is a \saw\ starting from $a$, confined to the upper half-plane, and
  ending on the line {$\yco=0$}. As proved in
  Section~\ref{sec:bridges-limit}, the   \gf\ $A(x)$ of arches
  converges at $\x$. Hence, if $\frak a_L(x)$   denotes the generating
  function of arches ending $L$ cells to the   right of the initial
  mid-edge $a$, the sequence $\frak a_L\equiv   \frak a_L(\x)$ tends
  to $0$ as $L\rightarrow \infty$. 

\begin{Lemma}\label{lem:rectangle_toinfinity_TL}
Assume that  $T\equiv T_k$ and $L\equiv L_k$ 
both tend to infinity as $k$ grows, and that $T\frak a_{2L}\longrightarrow
0$. If $\mathbb E_{\mathrm{iSAB}}(\height(\gamma))<\infty$, then 
\[\lim_{k\to\infty} \tilde B_{T_k,L_k} > 0.\]
\end{Lemma}

\begin{proof}
We begin by bounding $\tilde E^\pm_{T,L}$ in terms of $\frak
  a_{2L}$. For $m\in \mathbb N$, let $\frak e _m^+(x)$ be the
  generating function of walks in $R_{T,L}$ ending on the  right side
  of the rectangle, on the $m$th row of $\cE^+$, so that, by
  symmetry, $\tilde E_{T,L}^+= 2\sum_{m\le \lfloor \frac    T2
    \rfloor} \frak e _m^+$.  
{The  Cauchy-Schwarz inequality gives
$$
\big(\tilde E_{T,L}^+\big)^2 \le 4 \lfloor
  \textstyle
\frac T2\rfloor 
\displaystyle
\sum_{m\le \lfloor \frac
    T2 \rfloor} (\frak e_m^+)^2 .
$$
Now one can concatenate two
walks contributing to $\frak e_m^+$  (after reflecting the second one)
by adding a step between them in order to create an arch  contributing to $\frak
a_{2L}$. This gives}
$$
\big(\tilde E_{T,L}^+\big)^2\le 4 \lfloor \textstyle\frac T2\rfloor\
  \x ^{-1} \frak a_{2L}.
$$
We obtain 
a similar upper bound for $\tilde E_{T,L}^-$
with $\lfloor \textstyle\frac T2\rfloor$ replaced by $\lceil
\textstyle\frac T2\rceil$.

{The assumptions of the lemma now imply that}
 $\tilde E_{T,L}^+$ and $\tilde E_{T,L}^-$ tend to 0. 
Moreover, $\tilde A_{T,L}$ increases with $L$ and $T$, and converges
to $A\equiv A(\x )$.
 Returning to~Lemma~\ref{global-rectangle}
shows that $\tilde B_{T,L}$ must also
converge, and gives
\begin{eqnarray*}
  \lim _{k}\tilde B_{T_k,L_k}&=& 1- \alpha A(\x )\nonumber\\
&=& \lim_T B_T(\x ,1) \quad \quad \hbox{ by } \eqref{B-lim}\nonumber\\
&>&0 \quad \hbox{ by assumption.}
\end{eqnarray*}
\end{proof}

\begin{proof}[Proof of Proposition \ref{height width}]
Let us now return to random infinite bridges and use them to give an
upper bound on $\tilde B_{T,L}$.
Let $0<\delta<1/\mathbb E_{\mathrm{iSAB}}(\height(\gamma))$. We have
\begin{align*}
\tilde B_{T,L}
&= \sum_{\gamma: a \leadsto \cB} \x^{|\gamma|}
\\
&\le\prneinf \big( \exists n\in \mathbb N: 
\height(\gamma_{[0,\mathbf r_n]})=T
\text{ and }\width(\gamma_{[0,\mathbf r_n]})\le 2L\big)\\
&\le \prneinf\big(
\height(\gamma_{[0,\mathbf r_{\delta T}]})\ge T\big)
+\prneinf
\big(\exists n\ge \delta T: 
\height(\gamma_{[0,\mathbf r_n]})=T 
\text{ and
}\width(\gamma_{[0,\mathbf r_n]})\le 2L\big).
\end{align*}
Let $\gamma^{[i]}$ be the $i^{\rm th}$ irreducible bridge of
$\gamma$. Since the $\gamma^{[i]}$s are independent, we obtain 
\begin{multline}
  \tilde B_{T,L}\le \prneinf\big(
\height(\gamma_{[0,\mathbf r_{\delta T}]})\ge T\big)+\prneinf
\big(\forall i\le \delta T, \width(\gamma^{[i]})\le 2L\big)  \\
= \prneinf\big(
\height(\gamma_{[0,\mathbf r_{\delta T}]})\ge T\big)+\prne(\width(\gamma)\le 2L)^{\delta T}  \\
\le \prneinf\big(
\height(\gamma_{[0,\mathbf r_{\delta T}]})\ge T\big)+ \exp \left(-\delta T\,\prne (\width(\gamma)> 2L)\right).\label{ineq}
\end{multline}
Note that 
$$
\height(\gamma_{[0,\mathbf r_{\delta T}]})= \sum_{i=1}^{\delta
  T}\height(\gamma^{[i]}).
$$
Hence the law of large numbers, together with the fact that $\delta
\cdot\mathbb E_{\mathrm{iSAB}}(\height(\gamma))<1$, implies that
$ \prneinf\big(\height(\gamma_{[0,\mathbf r_{\delta T}]})
\ge T\big)$ tends to 0 as $T\rightarrow \infty$. Hence,
if $T\equiv T_k$ and $L\equiv L_k$ are such  that
$T$ and $T\,\prne (\width(\gamma)>2 L)$ both tend to infinity, then
$\tilde B_{T,   L}$ tends to zero. 

\medbreak
We now argue {\it ad absurdum}. Assume that $\mathbb
E_{\mathrm{iSAB}}(\width(\gamma))=\infty$. Then 
$$
\limsup_{L\rightarrow \infty} \frac{\prne(\width(\gamma)>
  2L)}{\frak a_{2L}}=\infty,
$$ 
since $\frak a_L$ is the term of a converging
series (namely, the  \gf\ $A(\x )$ of arches)
and $\prne(\width(\gamma)> L)$ is non-increasing in $L$ and is the term of a
diverging series (indeed, it sums to $\mathbb
E_{\mathrm{iSAB}}(\width(\gamma))=\infty$). 
Let $(L_k)_k$ be a sequence such that 
$$
\lim_{k\rightarrow \infty} \frac{\prne(\width(\gamma)>
  2L_k)}{\frak a_{2L_k}}=\infty,
$$ 
and take 
$$
T_k=\left\lfloor\frac 1 {\sqrt{\frak a_{2L_k}\prne(\width(\gamma)>
  2L_k)}}\right\rfloor.
$$
Then 
$$
T_k\,\prne (\width(\gamma)> 2L_k)\rightarrow \infty\quad\text{ and }
\quad T_k\,\frak a _{2L_k}\rightarrow 0.
$$
It follows {from~\eqref{ineq}}
that $\lim_k\tilde B_{T_k,L_k}=0$. 
But $T_k$ and $L_k$ also satisfy
Lemma~\ref{lem:rectangle_toinfinity_TL}, so we also have $\lim_k\tilde
B_{T_k,L_k}>0$, a contradiction. We 
{thus conclude that }
$\mathbb E_{\mathrm{iSAB}}(\width(\gamma))<\infty$.
\end{proof}

Let $\Omega$ be the set of bi-infinite walks $\gamma: \zs \to \bbH$
such that $\gamma_0=a$. 
Let $(\gamma^{[i]},i\in \mathbb Z)$ be a
bi-infinite sequence of irreducible bridges sampled independently
according to $\prne$. Let $\prneinfbi$ denote the law on $\Omega$
formed by concatenating the bridges $\gamma^{[i]},i\in \mathbb Z\,$ in
such a way that $\gamma^{[1]}$ starts at $a$. Let $\mathcal F$ be the
$\sigma$-algebra generated by events depending on a finite number of
vertices of the walk.  

We extend the {indexing} of renewal points to 
{the bi-infinite walks of $\Omega$}. If $\gamma\in \Omega$ is a
bi-infinite bridge (which is
the case with probability 1 under $\prneinfbi$), we obtain a
bi-infinite sequence $(\mathbf r_n(\gamma))_{n\in \mathbb Z}$ 
such that $r_0(\gamma)=0$. 
Let $\tau:\Omega\rightarrow \Omega$ be the {\em shift} defined by 
$\tau(\gamma)_i=\gamma_{i+\mathbf{r}_1(\gamma)}-\gamma_{\mathbf{r}_1(\gamma)}$
for every $i\in \mathbb Z$.
The shift translates the walk so that $\mathbf{r}_1(\gamma)$
is now at the origin $a$ of the lattice.
Note that $\mathbf{r}_i(\tau(\gamma))=\mathbf{r}_{i+1}(\gamma)
-\mathbf{r}_{1}(\gamma) $.
 Let $\sigma$ denote the reflection in the real axis.

 \begin{Proposition}\label{prop:prop} 
The measure $\prneinfbi$ satisfies the following properties.
\begin{itemize}
\item[{ $(\rm P_1)$}] It is invariant under the shift $\tau$. 
\item[$(\rm P_2)$] The shift $\tau$ is ergodic for $(\Omega,\mathcal
  F,\prneinfbi)$. 
\item[$(\rm P_3)$] Under $\prneinfbi$, the random variables
  $(\sigma\gamma_n)_{n\le 0}$ and $(\gamma_n)_{n\le 0}$ are
  independent and identically distributed. 
\end{itemize} 
\end{Proposition}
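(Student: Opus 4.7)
The three statements all follow by identifying $\prneinfbi$ with the pushforward of the product measure $\prne^{\otimes\mathbb Z}$ under the concatenation map $\Phi\colon(\isab)^{\mathbb Z}\to\Omega$ which glues a bi-infinite sequence $(\gamma^{[i]})_{i\in\mathbb Z}$ of irreducible bridges end-to-end, anchoring $\gamma^{[1]}$ at the origin $a$. This map is well-defined almost surely because every irreducible bridge has height at least $1$, so the partial sums of heights diverge in both directions. Conversely, the renewal-point decomposition recalled before the proposition shows that $\Phi$ is a bijection between $(\isab)^{\mathbb Z}$ and a full-$\prneinfbi$-measure subset of $\Omega$. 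Via this bijection every statement about $\prneinfbi$ reduces to a statement about the i.i.d.\ sequence of irreducible bridges under $\prne^{\otimes\mathbb Z}$.

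For $(\rm P_1)$, I note that under $\Phi$ the shift $\tau$ corresponds precisely to the index shift $(\gamma^{[i]})_{i\in\mathbb Z}\mapsto(\gamma^{[i+1]})_{i\in\mathbb Z}$: translating the walk so that $\mathbf r_1(\gamma)$ becomes the new origin amounts to dropping the first block $\gamma^{[1]}$ and relabelling the remaining blocks. The product measure $\prne^{\otimes\mathbb Z}$ is manifestly invariant under this index shift, so $\prneinfbi$ is $\tau$-invariant.

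For $(\rm P_2)$, any event in $\mathcal F$ depends on only finitely many vertices of $\gamma$, hence on only finitely many of the $\gamma^{[i]}$, so $\mathcal F$ pulls back into the cylinder $\sigma$-algebra on $(\isab)^{\mathbb Z}$. A $\tau$-invariant event in $\mathcal F$ therefore corresponds to an event invariant under the Bernoulli index shift on $\prne^{\otimes\mathbb Z}$. By the classical ergodicity of the Bernoulli shift on a countable i.i.d.\ product (equivalently, such a shift-invariant event is a tail event, and Kolmogorov's zero-one law applies), this event has probability $0$ or $1$.

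For $(\rm P_3)$, the past $(\gamma_n)_{n\le 0}$ is a measurable function of the blocks $(\gamma^{[i]})_{i\le 0}$, while the ``other half'' of the walk is a measurable function of $(\gamma^{[i]})_{i\ge 1}$; the two families are independent under the product measure, which yields the asserted independence after applying the deterministic map $\sigma$ on one side. For the distributional identity, the reflection $\sigma$ in the real axis, combined with the natural time-reversal of each irreducible bridge block, is a length-preserving bijection of $\isab$ onto itself; since $\prne$ weights a block only by its length, the reflected past is distributed exactly like a semi-infinite bridge built by the same i.i.d.\ construction. The only mild point to check is that this reflect-and-reverse map really does preserve the class $\isab$, but this is immediate because the bridge property, irreducibility, and length are all invariants of reflection and of reversal; no deeper difficulty arises in the proof.
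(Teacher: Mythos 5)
Your proof is correct, and it rests on the same underlying idea as the paper's --- namely that $\prneinfbi$ is the image of the i.i.d.\ block measure and that $\tau$ acts as the block shift --- but you package it differently: you make the conjugacy $\Phi$ between $(\isab)^{\mathbb Z}$ and the walk space explicit and then \emph{cite} ergodicity of the two-sided Bernoulli shift, whereas the paper reproves that fact by hand on the walk space, approximating a $\tau$-invariant event $A$ by a finite-vertex event $A_n$, using $A=\tau^{-2n}(A)$ and the independence of the walk before and after $\mathbf r_n$ to get $|\prneinfbi(A)-\prneinfbi(A_n)^2|\le 2\eps$ and hence $\prneinfbi(A)=\prneinfbi(A)^2$. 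Your route is cleaner as a reduction to a standard fact; the paper's is self-contained. For $(\rm P_3)$ you actually supply more detail than the paper (which only invokes reflection invariance of $\prne$ and calls the rest straightforward): you correctly note that the relevant bijection of $\isab$ is reflect-\emph{and}-reverse, and you correctly read the statement in its intended sense (reflected past versus future, each distributed as $\prneinf$ and independent by the product structure), which is how it is used later; as literally printed the two sequences in $(\rm P_3)$ are deterministic functions of one another. Two small imprecisions, neither fatal: a general event of $\mathcal F$ need not depend on finitely many vertices --- only the generators do, which is all your argument requires; and an invariant event of a \emph{two-sided} shift is not literally a tail event (it only becomes one modulo null sets, via exactly the approximation argument the paper runs), so the parenthetical appeal to Kolmogorov's zero--one law is loose, though the ergodicity of the two-sided Bernoulli shift that you actually need is classical and perfectly citable.
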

 
\begin{proof}
 Property $(\rm P_1)$ is fairly straightforward. Indeed, for every
 $n>0$, the law of 
 $\gamma_{[\mathbf r_{-n}(\gamma),\mathbf r_n(\gamma)]}$ determines,
 in the high-$n$ limit, the law of $\gamma$ (since we work with the
 $\sigma$-algebra $\mathcal F$). Now, the laws of
 $\tau(\gamma_{[\mathbf r_{-n+1}(\gamma),\mathbf r_{n+1}(\gamma)]})$
 and $\gamma_{[\mathbf r_{-n}(\gamma),\mathbf r_n(\gamma)]}$ are the
 same by construction (both are  the law of $2n$
 concatenated independent irreducible  bridges).  Thus $(\rm P_1)$
 follows by letting $n \to \infty$.

Let us turn to $(\rm P_2)$. Consider a shift-invariant event {$\frak A$}. We
want to show that {$\prneinfbi(\frak A)\in\{0,1\}$}. 
Let $\eps>0$. There exists $n>0$ and  an event {$\frak A_n$} depending only on
the vertices $\gamma_{-n},\ldots,\gamma_n$ such that {$\prneinfbi(\frak A_n\,
\Delta \,  \frak A)\le\eps$},
where $\Delta$ denotes the symmetric difference.
In particular, {$|\prneinfbi(\frak A)-\prneinfbi(\frak A_n)|\le \eps$}.
 By extension, {$\frak A_n$} depends only on vertices in
$\gamma_{\mathbf r_{-n}},\ldots,\gamma_{\mathbf r_n}$. Invariance of {$\frak A$}
under $\tau$ implies that 
{$\frak A= \tau^{-2n}(\frak A)$}, so that
\beq\label{A2}
\prneinfbi(\frak A)=
\prneinfbi\left(\frak A\cap \tau^{-2n}(\frak A)\right) \, .
\eeq
Moreover,
\begin{multline*}
 \Big|\prneinfbi\big(\frak A\cap \tau^{-2n}(\frak A)\big) -\prneinfbi \big(\frak A_n\cap \tau^{-2n}(\frak A_n) \big) \Big|\\
  \leq  \prneinfbi\big(\frak A \, \Delta \, \frak A_n \big)  +  \prneinfbi\big( \tau^{-2n}(\frak A) \, \Delta \,  \tau^{-2n}(\frak A_n) \big)  \leq 2 \eps \,.
\end{multline*}
Using~\eqref{A2} and the independence between the walk before and after $\mathbf
r_n$, this reads 
$$
|\prneinfbi(\frak A)-\prneinfbi(\frak A_n)^2|\le 2\eps,
$$
which, combined with {$|\prneinfbi(\frak A)-\prneinfbi(\frak A_n)|\le \eps$},
 implies
$$
|\prneinfbi(\frak A)-\prneinfbi(\frak A)^2|\le 4\eps \, .
$$
By letting $\eps$ tend to $0$, we obtain that
{$\prneinfbi(\frak A)=\prneinfbi(\frak A)^2$} and therefore
{$\prneinfbi(\frak A)\in\{0,1\}$}. Hence $(\rm P_2)$ is proved.

\medbreak
Since the law of irreducible bridges is invariant (up to a
translation) under reflection in  a horizontal line, $(\rm P_3)$ is
straightforward.  
\end{proof}

Renewal points separate a walk into two parts, located below and above
the point. We now introduce a more restrictive notion, illustrated
in Fig.~\ref{fig:diamond} (left).
   A mid-edge $\gamma_k$ of a walk $\gamma$ is said to be a {\em
     diamond point} if  
\begin{itemize}
\item it lies on a vertical edge of the lattice,
\item the walk is contained in the cone 
$$
\textstyle\big((\gamma_k-\frac i2)+\mathbb R_+e^{i\pi/3}+\mathbb R_+
e^{2i\pi/3}\big)\cup\big((\gamma_k+\frac i2)-\mathbb
R_+e^{i\pi/3}-\mathbb R_+ e^{2i\pi/3}\big)
$$
\end{itemize}
(recall that edges have length 1).
The set of diamond points of $\gamma$ is denoted by $\mathbf{D}_\gamma$. 
Of course, it is a subset of $\mathbf{R}_\gamma$. The following
proposition tells us that, under our assumption, a positive fraction
of renewal points {of an infinite bridge}
are diamond points.

\begin{figure}[htb]
\includegraphics[scale=0.4]{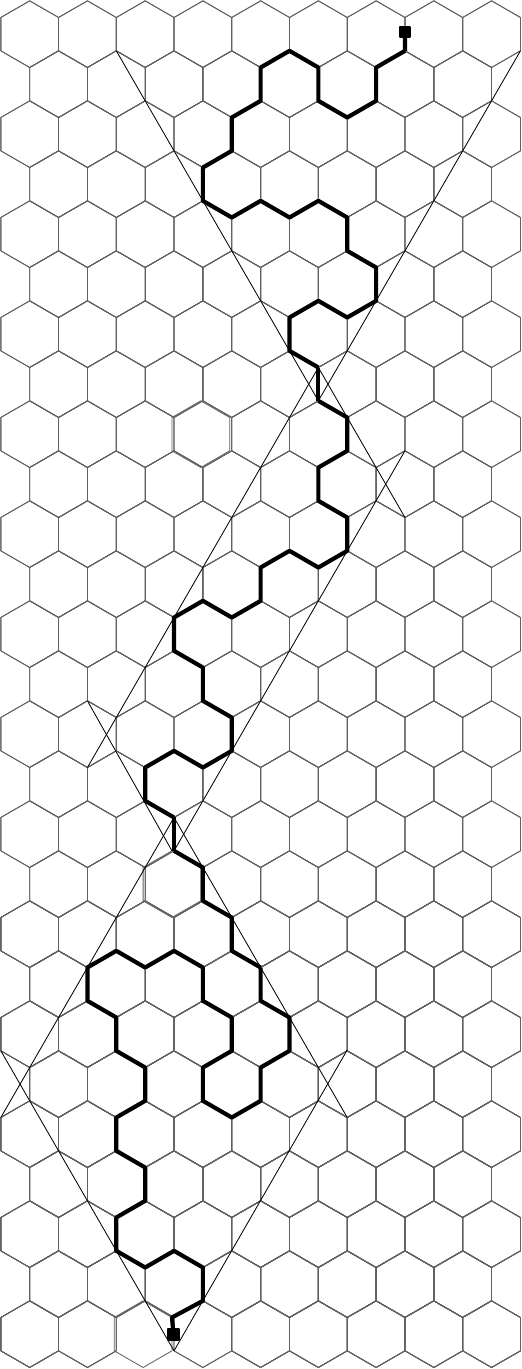}
\includegraphics[scale=0.4]{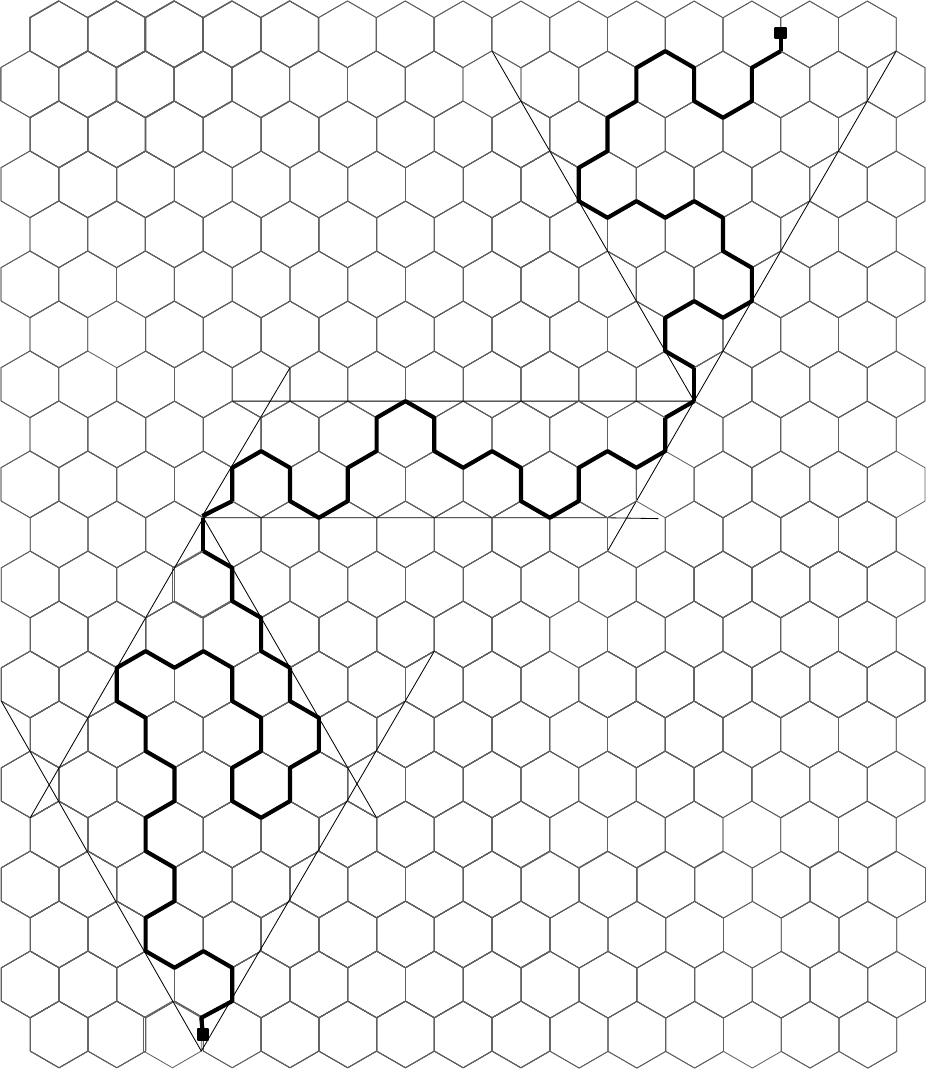}
\caption{{\em Left\/}: A bridge having 
3  diamond points. {\em Right\/}: A stickbreak operation applied to
  this bridge.} 
\label{fig:diamond}\end{figure}

 \begin{Proposition}\label{crucial a}
 If $\mathbb E_{\isab}(\height(\gamma))<\infty$, then
 there exists $\delta>0$ such that
 $$\prneinf\left(\liminf_{n\rightarrow \infty}\frac{|\mathbf{D}_\gamma\cap\{0,\ldots,\mathbf r_n\}|}{n}\ge\delta\right)=1.$$
 \end{Proposition}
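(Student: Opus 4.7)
The plan is to apply Birkhoff's ergodic theorem to the bi-infinite measure $\prneinfbi$ from Proposition~\ref{prop:prop}, and then to verify positivity of the resulting limiting density via the strong law of large numbers applied to the i.i.d.\ concatenation of irreducible bridges.

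Set $f(\gamma) := \mathbf{1}_{\{0 \in \mathbf{D}_\gamma\}}$. By properties $(\rm P_1)$ and $(\rm P_2)$, $\tau$ is measure-preserving and ergodic on $(\Omega,\mathcal F,\prneinfbi)$, so Birkhoff gives
$$
\frac{1}{n}\sum_{k=0}^{n-1} f(\tau^k\gamma) \xrightarrow[n\to\infty]{\text{a.s.}} \delta := \prneinfbi(0\in\mathbf{D}_\gamma).
$$
Since $\tau^k$ translates so that $\gamma_{\mathbf r_k}$ becomes the origin and the diamond property is translation-invariant, the left-hand side equals $\tfrac{1}{n}|\{k<n:\gamma_{\mathbf r_k}\in\mathbf{D}_\gamma\}|$. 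The bi-infinite diamond condition is strictly more restrictive than the one-sided one ($\prneinfbi$ additionally demands that $\gamma_j$ for $j<0$ lie in the downward cone), so the $\prneinf$-density of diamond points dominates the bi-infinite one, $\prneinfbi$-almost surely, and taking the marginal on $(\gamma_n)_{n\ge 0}$ reduces the proposition to showing $\delta>0$.

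By property $(\rm P_3)$ together with the invariance of $\prne$ under reflection in the real axis, the forward and reflected backward walks at $a$ are independent copies of $\prneinf$, whence $\delta = p^2$ with $p := \prneinf(\text{walk stays in the upward cone from }a-\tfrac{i}{2})$. To show $p>0$, write $H_i$, $X_i$, $W_i$ for the height, horizontal displacement and width of the $i$-th irreducible bridge. The hypothesis $\mathbb E_{\isab}(H_1)<\infty$, combined with Proposition~\ref{height width}, makes $H_1$, $|X_1|\le \sqrt 3\, W_1$, and $W_1$ integrable; reflection symmetry of $\prne$ yields $\mathbb E(X_1)=0$. The strong law, together with the Borel--Cantelli estimate $\sum_n \prne(W>\varepsilon n)\le \mathbb E(W_1)/\varepsilon<\infty$, then gives almost surely
$$
\frac{y(\gamma_{\mathbf r_n})}{n}\to \mu_h := \tfrac{3}{2}\mathbb E(H_1)>0,\qquad \frac{x(\gamma_{\mathbf r_n})}{n}\to 0,\qquad \frac{W_n}{n}\to 0.
$$

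These controls imply that for any $\eta>0$ and $N$ sufficiently large, the tail event
$$
G_N := \bigl\{\forall n\ge N:\ |x(\gamma_{\mathbf r_n})|+W_{n+1}\le(y(\gamma_{\mathbf r_n})+\tfrac{1}{2})/\sqrt{3}\bigr\}
$$
satisfies $\prneinf(G_N)>1-\eta$, and on $G_N$ every vertex of the walk past $\gamma_{\mathbf r_N}$ lies in the upward cone. The main obstacle is coupling $G_N$ with the initial-segment event $\{\gamma_{[0,\mathbf r_N]}\text{ is in the cone}\}$ so that their intersection has positive probability: this is where the assumption of only \emph{finite mean} height (as opposed to the stronger integrability used in \cite{DC-H}) makes the argument delicate. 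I would use the strong Markov property at $\mathbf r_N$: build a $\mathcal G_N := \sigma(\gamma^{[1]},\ldots,\gamma^{[N]})$-measurable event $D_N$ of positive $\prneinf$-mass (by intersecting the positive-probability event $\{W_i\le c\,i,\ \forall i\le N\}$ for some $c>\mathbb E(W_1)$, whose positivity follows from the Borel--Cantelli estimate above, with LLN-type centring conditions on $\gamma_{\mathbf r_N}$) on which the initial segment lies in the cone and $\gamma_{\mathbf r_N}$ sits well inside it; then the Markov property together with the tail estimates yields $\prneinf(G_N\mid\mathcal G_N)\ge 1/2$ on $D_N$, so $p\ge \prneinf(D_N)/2>0$ and hence $\delta>0$, completing the proof.
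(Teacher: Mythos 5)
Your proposal follows essentially the same route as the paper: Birkhoff's ergodic theorem for $\tau$ under $\prneinfbi$ applied to $\mathbbm 1_{\{\gamma_0\in\mathbf{D}_\gamma\}}$, the observation that diamond points of the bi-infinite walk are diamond points of its forward half (so the $\prneinf$-density dominates), the factorisation $\delta=p^2$ via $(\rm P_3)$, and the law-of-large-numbers/Borel--Cantelli estimates giving $y(\gamma_{\mathbf r_n})/n\to\frac32\mathbb E_{\isab}(\height(\gamma))>0$ while $x(\gamma_{\mathbf r_n})/n$ and $\width(\gamma_{[\mathbf r_n,\mathbf r_{n+1}]})/n$ vanish. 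The one place you diverge is the final positivity step, and there your concrete suggestion is not quite adequate: the event $\{W_i\le c\,i,\ \forall i\le N\}$ intersected with centring conditions on $\gamma_{\mathbf r_N}$ does not force the initial segment into the cone, since the cone has half-width only $(y+\frac12)/\sqrt3$ near its apex and a single early irreducible bridge of bounded width can still exit it; you need an explicit positive-probability configuration of the first $N$ bridges. The paper handles this more cleanly by introducing $I(\gamma):=\inf_{k\ge0}\left(y(\gamma_k)-\sqrt3|x(\gamma_k)|+1/2\right)$, noting that the LLN estimates make $I$ almost surely finite, choosing $K$ with $\prneinf(I\ge-K)>0$, and prepending $K$ straight vertical irreducible bridges of length $4$ (total probability $(2\x^4)^K$) to lift the remainder of the walk into the cone --- which is precisely your conditioning-at-$\mathbf r_N$ idea with $D_N$ taken to be the all-straight-columns event. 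With that repair your argument closes and coincides with the paper's.
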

 
Let us first provide a heuristic 
argument. Since $\mathbb E_{\isab}(\height(\gamma))$ is finite,
so is $\mathbb
E_{\isab}(\width(\gamma))$ (Proposition~\ref{height width}). 
Then {$\mathbb E_{\isab}(\xco(\gamma_{|\gamma|}))=0$}, and the law of large 
numbers implies that the prefixes of an infinite bridge
are tall and skinny -- that is, height
grows linearly, width grows sub-linearly. So the probability of a 
bridge staying within a cone as thin as one likes is
 positive, and a similar result holds going backwards. 
Thus, diamond points occur with positive density among renewal points.  
 
\begin{proof}
  Let us first prove that $\prneinfbi(\gamma_0\in \mathbf{D}_\gamma)>0$.
Proposition~\ref{height width} shows that $\mathbb
E_{\isab}(\width(\gamma))<\infty$. 
Hence  {$\mathbb E_{\isab}(\xco(\gamma_{|\gamma|}))$} is well-defined, and
is $0$ since
the law of an irreducible bridge is invariant under
 reflection in the imaginary axis.
The law of large numbers thus implies that, $\prneinf$-almost surely, 
{$\xco(\gamma_{\mathbf r_n})/n \longrightarrow 0$}. 
Since the expected width of irreducible bridges is finite, {the
  law of large numbers shows that 
$$ 
\frac{1}{n}\sum_{i=1}^{n}
\width(\gamma_{[\mathbf r_{i-1},\mathbf r_i]}) \longrightarrow
c
\quad \hbox{and} \quad \frac{1}{n+1}\sum_{i=1}^{n+1}
\width(\gamma_{[\mathbf r_{i-1},\mathbf r_i]}) \longrightarrow
c \qquad \text{a.s.,} 
$$ 
where $c:=\mathbb E_{\isab}(\width(\gamma))$ is a positive constant. 
The second identity reads
\[\left(1-\frac{1}{n+1}\right)\frac{1}{n}\sum_{i=1}^{n}
\width(\gamma_{[\mathbf r_{i-1},\mathbf
  r_i]})+\frac{1}{n+1}\width(\gamma_{[\mathbf r_{n},\mathbf r_{n+1}]}) \longrightarrow c<\infty \qquad \text{a.s.,}\]
and 
comparing with the first identity shows that
$\width(\gamma_{[\mathbf r_n ,\mathbf    r_{n+1}
]})/n\longrightarrow0$ almost surely.}
Thus
$$
\frac1n\left(
\frac 1 {\sqrt 3}| \xco(\gamma_{\mathbf r_n
})|+\width(\gamma_{[\mathbf r_n
,\mathbf r_{n+1}
]})\right)\longrightarrow0 \hskip 10mm \hbox{a.s}.
$$
Since
$$
{\width(\gamma_{[0,\mathbf r_n
]})\le2 \max \left\{
\frac 1 {\sqrt 3}
|\xco(\gamma_{\mathbf r_k})|+\width(\gamma_{[\mathbf r_k
,\mathbf r_{k+1}]}),\  0\le k\le n-1\right\},}
$$
we find that, $\prneinf$-almost surely, $\width(\gamma_{[0,\mathbf r_n
]})/n\longrightarrow 0$.

\medskip
Let us now apply the law of large numbers  to 
{$\yco(\gamma_{\mathbf r_n})$}. We obtain  that,   $\prneinf$-almost surely,
{$\yco(\gamma_{\mathbf r_n})/n\longrightarrow  \frac 3 2 \mathbb
  E_{\isab}(\height(\gamma))>0$}.  

Now define
\[I(\gamma):=\inf_{k\ge 0}\left(\yco(\gamma_k)-\sqrt{3}|\xco(\gamma_k)|+1/2\right).\]
Note that for an infinite bridge  $\gamma=(\gamma_0, \gamma_1,
\ldots)$, the origin $\gamma_0$ is a diamond point if and only if $I(\gamma)\ge 0$.

By comparing a general point $\gamma_k$ of $\gamma$ with the last
  renewal point {$\mathbf r_n$}
before $\gamma_k$ and the next one after $\gamma_k$, one finds
\[
I(\gamma) \geq \inf_{n\geq 0}\left(\yco(\gamma_{\mathbf r_n}) -
  \sqrt{3}\left(|\xco(\gamma_{\mathbf r_n})| +\sqrt{3}
    \width(\gamma_{[\mathbf r_n,\mathbf r_{n+1}]})\right) +
  1/2\right).
\]
Then by the arguments presented earlier in this proof, it follows that
{$I(\gamma)>-\infty$} almost surely.

Let $K \in \ns$ be such that 
  $\rho_K:=\prneinf(I(\gamma)\ge -K) > 0$. 
We are going to  show that
\begin{equation}\label{eq:rhorho}
\rho_0  \ge (2\x ^4)^{K} \rho_K >0.
\end{equation}
To prove this, consider an experiment under which the law
$\prneinf$ is constructed by first concatenating $K$ independent samples of
$\prne$ (starting from $a$) and then
an independent sample $\gamma'$ of $\prneinf$. If each of the $K$
samples happens to be a walk of length 4 
 going from $a$ to $a+3i$ and $I(\gamma')\ge -K$, 
then the complete walk $\gamma$ satisfies 
$I(\gamma)\ge 0$. 
The probability that the $i$th sample of $\prne$ is a walk
of length 
4  going from $a$ to $a+3i$ is $2\x ^{4}$. Thus, the
experiment behaves as described with probability $(2\x ^4)^{K}
\rho_K$, and we obtain~\eqref{eq:rhorho}, 
that is, $\prneinf( \gamma_0\in \mathbf{D}_\gamma)>0$.

Using Property $(\rm P_3)$ of Proposition~\ref{prop:prop}, we deduce that
$$
\delta:=\prneinfbi(\gamma_0\in \mathbf{D}_\gamma)
=\left(\prneinf(\gamma_0\in \mathbf{D}_\gamma)\right)^2
>0 .
$$
The shift $\tau$ being ergodic (cf. Property $(\rm P_2)$ of
Proposition~\ref{prop:prop}), 
the ergodic theorem, applied to $\mathbbm 1_{\gamma_0 \in
  \mathbf{D}_\gamma}$, 
gives
$$
\prneinfbi\left(\lim_{n\rightarrow \infty}\frac{|\mathbf{D}_\gamma\cap\{0,\dots,\mathbf r_n(\gamma)\}|}{n}= \delta\right)=1 \, .
$$
Let $\gamma$ be a bi-infinite bridge, and denote $\gamma^+=\gamma_{[0,
  \infty)}$. Then for $n \ge 0$, $\mathbf r_n(\gamma)=\mathbf
r_n(\gamma^+)$, and
$$
\mathbf{D}_\gamma\cap\{0,\dots,\mathbf r_n(\gamma)\}
=\mathbf{D}_\gamma\cap\{0,\dots,\mathbf r_n(\gamma^+)\}
\subset \mathbf{D}_{\gamma^+}\cap\{0,\dots,\mathbf r_n(\gamma^+)\}
$$
since all diamond points of $\gamma$ are diamond points of $\gamma^+$.
This implies that 
\begin{multline*}
\prneinf\left(\liminf_{n\rightarrow \infty}\frac{|\mathbf{D}_{\gamma
}\cap\{0,\dots,\mathbf r_n(\gamma)\}|}{n}
\ge \delta\right)
\\=\prneinfbi\left(\liminf_{n\rightarrow \infty}\frac{|\mathbf{D}_{\gamma^+
}\cap\{0,\dots,\mathbf r_n(\gamma)\}|}{n}
\ge \delta\right)\\
\ge\prneinfbi\left(\liminf_{n\rightarrow \infty}\frac{|\mathbf{D}_{\gamma
}\cap\{0,\dots,\mathbf r_n(\gamma)\}|}{n}\ge \delta\right)=1 .
\end{multline*}
This concludes the proof of the proposition.
\end{proof}

\medbreak
We now introduce some final definitions and a minor lemma before proving the main result.
By Lemma~\ref{ren sar}, we 
want to prove that $\mathbb E_{\textrm{iSAB}}(\height(\gamma))
=\infty$. We {will}
argue {\it ad absurdum}. {Henceforth, assume} $\mathbb E_{\textrm{iSAB}}(\height(\gamma))
<\infty$ and let $\nu>\mathbb
E_{\textrm{iSAB}}(\height(\gamma))$. Also,  let  
$0<\eps<\delta/20$,
where $\delta$ satisfies Proposition~\ref{crucial a}. 

\medbreak
Let $\Omega^+$ denote the set of 
semi-infinite walks in the upper half-plane. That is, $\phi=(\phi_0,
\phi_1, \ldots) \in \Omega^+$ if and only if $\mathbf y(\phi_i) > 0$ 
for $i > 0$. For $\phi \in \Omega^+$ and $\gamma$ a
 finite bridge, we {write} $\gamma\,\triangleleft\, \phi$ if
 $\phi_{[0,|\gamma|]}=\gamma$ and 
 $\phi_{| \gamma |}$ is a renewal
 point of $\phi$. Note that 
\begin{equation}\label{eqtrileft} 
\x ^{|\gamma|}=\prneinf(\phi\in \Omega^+: \gamma\triangleleft \phi) \, .
\end{equation}

Let  $\overline{\sabset}_n$
 denote the set of finite bridges 
$\gamma$ with exactly $n+1$ renewal points (meaning that $\mathbf
r_n(\gamma)=|\gamma|$) such that \begin{itemize}
\item[$({\rm C}_1)$] $\height(\gamma) \le  \nu  n$,
\item[$({\rm C}_2)$] $|\mathbf{D}_\gamma|\ge \delta n/2$.
\end{itemize}
Let us define $\overline{\sabset}_n^+=\{\phi\in \Omega^+
:\exists \gamma\in \overline{\sabset}_n\text{ such that
}\gamma\triangleleft \phi\}$. 
That is, the prefix of $\phi$ consisting of its $n$ first irreducible
bridges satisfies $({\rm C}_1)$ and $({\rm C}_2)$.
It follows from~\eqref{eqtrileft} that
\begin{equation}\label{sarequiv}
\prneinf\big(
\overline{\sabset}_n^+\big)=\sum_{\gamma\in\overline{\sabset}_n}\x
^{|\gamma|} \, . 
\end{equation}
\begin{Lemma}\label{lem:SABplus1}
{Under the above assumptions}, we have,
as $n\to\infty$,
$$ 
\prneinf\big(\overline{\sabset}_n^+\big)\longrightarrow1.
$$ 
\end{Lemma}
\begin{proof}
We consider Conditions $({\rm C}_1)$ and  $({\rm C}_2)$ separately.
Condition $({\rm C}_1)$ for $\gamma\in \overline{\sabset}_n$ 
translates for $\phi\in \overline{\sabset}_n^+$ into
$\height( \phi_{[0, \mathbf r_n]})\le \nu n$.
Since $\mathbb E_{\textrm{iSAB}}(\height(\gamma)) <\nu$, the law of
large numbers gives
$$
\prneinf\Big(\phi\in\Omega^+:
\height( \phi_{[0, \mathbf r_n]})\le \nu n\Big)\longrightarrow 1.
$$
Let us now consider Condition $({\rm C}_2)$, which translates into
$|\mathbf D_{\phi_{[0,\mathbf r_n]}}|\ge \delta n/2$.
But 
$$
\mathbf D_{\phi_{[0,\mathbf r_n]}}\supset \mathbf D_\phi
\cap\{0, \ldots, ,\mathbf r_n\},
$$
since the truncation
operation $\phi \to \phi_{[0,\mathbf r_n]}$ can only
create (and not annihilate) diamond points.
Thus Proposition~\ref{crucial a} yields 
$$
\prneinf\big(|\mathbf D_{\phi_{[0,\mathbf r_n]}}| \ge \textstyle\frac\delta2 n\big) \longrightarrow 1 ,
$$
and we have proved the lemma.
\end{proof}

\medbreak

\begin{proof}[Proof of Theorem~{\rm\ref{thm:BT0}}.]
We are going to prove that
\begin{equation}\label{eq:littlecontra}
\prneinf\Big(\width(\phi_{[0,\mathbf r_{\nu n+1}]})> \eps n \Big)
\ge \left(\frac{\delta n \x }{
10(\nu n+2)
}\right)^2\, \prneinf\big(\overline{\sabset}_n^+\big) \, .
\end{equation}
We proved at the beginning  of the proof of Proposition~\ref{crucial
  a} that $\width(\phi_{[0,\mathbf r_{\nu n +1}(\phi)]})/n$ tends to
zero almost surely under the $\prneinf$-distribution. Thus the
left-hand side of the above inequality tends to $0$ as $n \rightarrow
\infty$, and so does the right-hand side.
This contradicts {Lemma~\ref{lem:SABplus1}}
and proves that our assumption
$\mathbb E_{\textrm{iSAB}}(\height(\gamma))<\infty$ cannot hold.

\medbreak
Consider $\gamma\in \overline{\sabset}_n$. Let $\mathbf{d}_i$
be the index of the $i$th diamond point of $\gamma$. For integers $i\in
\big[\frac\delta {10} n,\frac {2\delta} {10} n\big]$ and
$j\in\big[\frac {3\delta}{10}n,\frac{4\delta}{10}n\big]$,
let $\stickbreak_{i,j}(\gamma)$ be the following walk (see
Fig.~\ref{fig:diamond}, right):
\beq\label{def-sb}
\stickbreak_{i,j}(\gamma) = \gamma_{[0,\mathbf{d}_i
]} \circ s  \circ \rho
\big(\gamma_{[\mathbf{d}_i
,\mathbf{d}_{j}
]}\big) \circ \tilde s  \circ 
\gamma_{[\mathbf{d}_{j}
,\mathbf{r}_n ]} \, ,
\eeq
where $\circ$ stands for the concatenation of walks, $\rho$ is the clockwise
rotation of angle $\pi/3$, 
$s $ is a single right turn,
and $\tilde s $ is a single  left turn.
The definition of diamond points implies that $\stickbreak_{i,j}(\gamma)$
is not only self-avoiding, but also a bridge. Also, note that we used
$({\rm C}_2)$ in order to define $\stickbreak(\gamma)$ for all these
values of $i$ and $j$. 

Let 
$$
\Phi~=~\big[\textstyle\frac{\delta}{10}
n,\frac{2\delta}{10}n\big]~\times~
\big[\frac{3\delta}{10}n,\frac{4\delta}{10}n\big]~\times~
\overline{\sabset}_n \, , 
$$ 
and denote 
$$
S:=\sum_{(i,j,\gamma)\in \Phi}\x ^{|\stickbreak_{i,j}(\gamma)|} \, .
$$
One can express $S$ in terms of $\prneinf\big(
\overline{\sabset}_n^+\big)$. Indeed,
$\vert \stickbreak_{i,j}(\gamma) \vert = \vert \gamma \vert + 2$, and therefore
\begin{equation}\label{eqsinferone}
S = \sum_{(i,j,\gamma)\in \Phi}\,\x ^{|\gamma|+2} = \left(\frac{\delta
    n \x 
    }{10}\right)^2\sum_{\gamma\in \overline{\sabset}_n}\x ^{|\gamma|}
= \left(\frac{\delta n \x }{10}\right)^2\,\prneinf\big( \overline{\sabset}_n^+\big) \, .
\end{equation}
We used \eqref{sarequiv} for the last equality. 
We are now going to give an upper bound on $S$, which will
imply~\eqref{eq:littlecontra}.
\medbreak

Note that the walk $\gamma_{[\mathbf{d}_i,\mathbf{d}_{j}]}$ contains at
  least $\delta n/10$ diamond points, and thus has height $h:=
  \height(\gamma_{[\mathbf{d}_i,\mathbf{d}_{j}]}) \ge \delta
  n/10$. Rotating this walk {clockwise} by $\pi/3$ results in a
  walk of height at most $h$ and width at least {$h/2$ }
(due to the honeycomb geometry and the way we define height and width just
    above Lemma~\ref{ren sar};
{the extreme case is reached for the bridge obtained by concatenating
$h$ copies of the bridge formed by a left turn followed by a right turn). 
}%
Since the width of $\stickbreak_{i,j}(\gamma)$ must be greater
  than the width of any of its subwalks, in particular greater than
  the width of $\rho(\gamma_{[\mathbf{d}_i,\mathbf{d}_{j}]} )$, we
  must have 
\[
\width(\stickbreak_{i,j}(\gamma)) \geq \frac{h}{2} \geq \frac{\delta
  n}{20} > \eps n,
\]
{since we have assumed $\eps <\delta/20$.}
{Since $\height(\rho(\gamma_{[\mathbf{d}_i,\mathbf{d}_{j}]})) \leq
  \height(\gamma_{[\mathbf{d}_i,\mathbf{d}_{j}]})$, the $\stickbreak$
  operation increases the height of $\gamma$ by at most~1}
{(due to the attachment of the steps $s$ and $\tilde s$).}
By $({\rm C}_1)$, we {then} have $\height(\stickbreak_{i,j}(\gamma))\le
\nu n
+1$ and therefore $\stickbreak_{i,j}(\gamma)$ has at most
 $\nu n +2$ renewal points {(as any subwalk between two renewal points must have height at least 1)}. Hence, for any $\phi\in \Omega^+$ such that
 $\stickbreak_{i,j}(\gamma)\triangleleft \phi$, we 
have $\mathbf r_{\nu   n +1} \geq | \stickbreak_{i,j}(\gamma) |$ 
and therefore
 $$
\width(\phi_{[0,\mathbf{r}_{\nu n+1}]})\ge
{\width(\stickbreak_{i,j}(\gamma))} >
\eps n.
$$ 
Thus, for any $(i,j,\gamma)\in \Phi$, 
\begin{align*}
\x^{|\stickbreak_{i,j}(\gamma)|} & = \prneinf\big( \phi\in \Omega^+:\stickbreak_{i,j}(\gamma)\triangleleft \phi \big) \\
& = \prneinf\big( \phi\in\Omega^+:\stickbreak_{i,j}(\gamma)\triangleleft \phi\text{ and }\width(\phi_{[0,\mathbf{r}_{\nu n+1}]})> \eps n \big).
\end{align*}
Therefore,
\begin{eqnarray} 
S &= & \sum_{(i,j,\gamma)\in \Phi}\prneinf\big(\phi\in
\Omega^+:\stickbreak_{i,j}(\gamma)\triangleleft \phi\ \text{and}\
\width(\phi_{[0,\mathbf{r}_{\nu n+1}]})> \eps n\big)
\nonumber \\ 
&=&\mathbb E_{\rm iSAB}^{\otimes\mathbb N}\Big(\big| \big\{
(i,j,\gamma)\in \Phi:\stickbreak_{i,j}(\gamma)\triangleleft\phi \big\}
\big| \cdot \mathbbm{1}_{\{\width(\phi_{[0,\mathbf{r}_{\nu
      n+1}]})>\eps n\}}\Big) \nonumber \\ 
& \le & 
(\nu n +2)^2\,\prneinf\big(
\width(\phi_{[0,\mathbf{r}_{\nu n+1}]})> \eps n
\big).\label{eqsinfertwo}
\end{eqnarray}
 The last inequality follows from the fact 
 that, for any given  $\phi \in \Omega^+$, the number of elements $(i,j,\gamma)$ of $\Phi$
such that $\stickbreak_{i,j}(\gamma)\triangleleft\phi$ is at most $(
\nu n+2)^2$. Indeed, 
  the triple $(i,j,\gamma)$ is completely determined if we
 specify in $\phi$ the renewal point that precedes the step denoted
 $s $ in~\eqref{def-sb} and
 the one that follows the step $\tilde s $. As both points occur before
 $\mathbf{r}_{\nu n+1}$, as explained above, the bound~\eqref{eqsinfertwo} follows.

By combining  \eqref{eqsinferone} and  \eqref{eqsinfertwo}
we obtain~\eqref{eq:littlecontra}, which  concludes the proof.
\end{proof}

\bibliographystyle{plain}
\bibliography{hexa}

\end{document}